\crefname{figure}{Figure}{Figure}
\tikzstyle{vertex} = [rectangle]
\tikzstyle{arrow} = [thick,->,>=latex]
\tikzstyle{arrow2} = [thin,->,>=latex, draw=gray]
\tikzstyle{textbox} = [rectangle, text centered, text width=6cm, draw=black]
\newcommand{\figlabel}[1]{\label{fig:#1}}
\newcommand{\figref}[1]{Figure~\ref{fig:#1}}
\newcommand{\seclabel}[1]{\label{sec:#1}}
\newcommand{\secref}[1]{Section~\ref{sec:#1}}
\newcommand{\deflabel}[1]{\label{def:#1}}
\newcommand{\problabel}[1]{\label{prob:#1}}
\newcommand{\thmlabel}[1]{\label{thm:#1}}
\newcommand{\thmref}[1]{Theorem~\ref{thm:#1}}
\newcommand{\lemlabel}[1]{\label{lem:#1}}
\newcommand{\lemref}[1]{Lemma~\ref{lem:#1}}
\newcommand{\applabel}[1]{\label{app:#1}}
\newcommand{\appref}[1]{Appendix~\ref{app:#1}}
\newcommand{\algolabel}[1]{\label{algo:#1}}
\newcommand{\algoref}[1]{Algorithm~\ref{algo:#1}}
\newtheorem{problem}{Problem}
\newtheorem{hypothesis}{Hypothesis}
 \newcommand{\Paragraph}[1]{\noindent{\bf #1}}
 \newcommand{\SubParagraph}[1]{\noindent{\em #1}}
\DeclareMathAlphabet{\mathpzc}{OT1}{pzc}{m}{it}
\mathchardef\mhyphen="2D 
\newcommand{\Natsinf}{\mathbb{N}\cup\{\infty\}}
\newcommand{\ov}{\overline}
\newcommand{\set}[1]{\{#1\}}
\newcommand{\setpred}[2]{\{#1 \,|\, #2\}}
\newcommand{\proj}[2]{#1|_{#2}}
\renewcommand{\emptyset}{\varnothing}
\newcommand{\tr}{\sigma}
\newcommand{\lk}{\ell}
\newcommand{\lab}{\mathsf{lab}}
\newcommand{\ev}[1]{\langle #1\rangle}
\newcommand{\events}[1]{\mathsf{Events}_{#1}}
\newcommand{\threads}[1]{\mathsf{Threads}_{#1}}
\newcommand{\locks}[1]{\mathsf{Locks}_{#1}}
\newcommand{\vars}[1]{\mathsf{Vars}_{#1}}
\newcommand{\reads}[1]{\mathsf{Reads}_{#1}}
\newcommand{\writes}[1]{\mathsf{Writes}_{#1}}
\newcommand{\acquires}[1]{\mathsf{Acquires}_{#1}}
\newcommand{\releases}[1]{\mathsf{Releases}_{#1}}
\newcommand{\accesses}[1]{\mathsf{Accesses}_{#1}}
\newcommand{\hbgraph}{\textsf{G}(\hb{\tr})}
\newcommand{\PTime}{\textsf{P}}
\newcommand{\NPhard}{\textsf{NP-hard}}
\newcommand{\SAT}{\textsf{SAT}}
\newcommand{\poly}{\operatorname{poly}}
\newcommand{\opfont}[1]{\texttt{#1}}
\newcommand{\critsec}{\opfont{cs}}
\newcommand{\acq}{\opfont{acq}}
\newcommand{\rel}{\opfont{rel}}
\newcommand{\rd}{\opfont{r}}
\newcommand{\wt}{\opfont{w}}
\newcommand{\sync}{\opfont{sync}}
\newcommand{\Thread}{t}
\newcommand{\ThreadOf}[1]{\tid(#1)}
\newcommand{\OpOf}[1]{\Operation(#1)}
\newcommand{\match}[1]{\mathsf{match}_{#1}}
\newcommand{\crit}[1]{\mathsf{CS}_{#1}}
\newcommand{\lheld}[1]{\mathsf{locksHeld}_{#1}}
\newcommand{\orthv}{\textsf{OV}}
\newcommand{\orthvk}[1]{\textsf{#1\--\orthv}}
\newcommand{\hs}{\textsf{HS}}
\newcommand{\multiconn}{\textsf{MCONN}}
\newcommand{\fofee}{\textsf{FO}($\mathsf{\forall\exists\exists}$)}
\newcommand{\foeee}{\textsf{FO}($\mathsf{\exists\exists\exists}$)}
\newcommand{\lw}[1]{\mathsf{lw}_{#1}}
\newcommand{\ntime}[1]{\textsf{NTIME}[\mathsf{#1}]}
\newcommand{\contime}[1]{\textsf{coNTIME}[\mathsf{#1}]}
\newcommand{\seth}{\textsf{SETH}}
\newcommand{\nseth}{\textsf{NSETH}}
\newcommand{\ovc}{\textsf{OVH}}
\newcommand{\hsc}{\textsf{HSH}}
\newcommand{\foh}{\textsf{FOPH}($\mathsf{\forall\exists\exists}$)}
\newcommand{\ovinstwo}{\orthv(n,d)}
\newcommand{\ovinsthree}[1]{\orthv(n, d, #1)}
\newcommand{\hsins}{\textsf{HS}(n,d)}
\newcommand{\fgr}[2]{${}_{(\mathsf{#1})}\!\!\preceq_{(\mathsf{#2})}$}
\newcommand{\acr}[1]{\textsf{#1}}
\newcommand{\acrtr}{\operatorname{\acr{tr}}}
\newcommand{\TO}{\operatorname{\acr{TO}}}
\newcommand{\HB}{\operatorname{\acr{HB}}}
\newcommand{\ord}[2]{\leq^{#1}_{\mathsf{#2}}}
\newcommand{\strictord}[2]{<^{#1}_{\mathsf{#2}}}
\newcommand{\trord}[1]{\ord{#1}{\acrtr}}
\newcommand{\stricttrord}[1]{\strictord{#1}{\acrtr}}
\newcommand{\tho}[1]{\ord{#1}{\TO}}
\newcommand{\hb}[1]{\ord{#1}{\HB}}
\newcommand{\Path}{\rightsquigarrow}
\newcommand{\tid}{\operatorname{\mathsf{tid}}}
\newcommand{\mx}{\sqcup}
\newcommand{\mn}{\sqcap}
\newcommand{\acqls}[1]{\mathsf{AcqLS}^{#1}}
\newcommand{\rells}[1]{\mathsf{RelLS}^{#1}}
\newcommand{\idxlk}{\mathsf{p}}
\newcommand{\setrdls}{\mathsf{S}}
\newcommand{\CTC}{\mathbb{C}}
\newcommand{\CTL}{\mathbb{L}}
\newcommand{\VCW}{\mathbb{W}}
\newcommand{\Andreas}[1]{\textcolor{blue}{\textbf{Andreas:}~#1}}
\newcommand{\Operation}{\operatorname{op}}
\newcommand{\NumEvents}{\mathcal{N}}
\newcommand{\NumThreads}{\mathcal{T}}
\newcommand{\NumVariables}{\mathcal{V}}
\newcommand{\NumLocks}{\mathcal{L}}
\newcommand{\eraser}{\textsc{Eraser}\xspace}
\newcommand{\djit}{\textsc{Djit}\xspace}
\newcommand{\goldilocks}{\textsf{GoldiLocks}\xspace}
\newcommand{\fasttrack}{\textsc{FastTrack}\xspace}
\newcommand*{\centerfloat}{%
  \parindent \z@
  \leftskip \z@ \@plus 1fil \@minus \textwidth
  \rightskip\leftskip
  \parfillskip \z@skip}
\newcommand{\execution}[2]{
\scalebox{1}{
  \begin{tikzpicture}%
    \foreach \x in {1,...,#1}
    \node[right] at (1.5*\x+0.2,0.25) {$\Thread_{\x}$};
    \draw (1.2,0) -- (#1*1.5+1.3,0);%
    \pgfmathsetmacro{\y}{1};%
    #2%
    \draw (1.2,0) -- (1.2,-0.4*\y);%
    \draw (#1*1.5+1.3,0) -- (#1*1.5+1.3,-0.4*\y);%
    \foreach \x in {2,...,#1}
    \draw (1.2,-0.4*\y) -- (#1*1.5+1.3,-0.4*\y);%
  \end{tikzpicture}
}
}
\newcommand{\executionOffset}[3]{
\scalebox{1}{
  \begin{tikzpicture}%
    \foreach \x in {1}{
    \node[right] at (1.5*\x+0.2,0.25)    {$\Thread_{#2}$};
    };
    \draw (1.2,0) -- (#1*1.5+1.3,0);%
    \pgfmathsetmacro{\y}{1};%
    #3%
    \draw (1.2,0) -- (1.2,-0.4*\y);%
    \draw (#1*1.5+1.3,0) -- (#1*1.5+1.3,-0.4*\y);%
    \foreach \x in {2,...,#1}
    \draw (1.2,-0.4*\y) -- (#1*1.5+1.3,-0.4*\y);%
  \end{tikzpicture}
}
}
\newcommand{\leanExecIndex}[2]{
\scalebox{1}{
  \begin{tikzpicture}%
    \node[right] at (1.7,0.25)    {$\Thread_{#1}$};
    \draw (1.2,0) -- (2.8,0);%
    \pgfmathsetmacro{\y}{1};%
    #2%
  \end{tikzpicture}
}
}
\newcommand{\figev}[2]{
\pgfmathsetmacro{\y}{\y+1};
\pgfmathsetmacro{\y}{\y-1};
\node [left] at (1.25,-0.4*\y)  {\pgfmathprintnumber{\y}};%
\node at (#1*1.5 + 0.45,-0.4*\y) { #2 };%
\pgfmathsetmacro{\y}{\y+1};
}
\newcommand{\Leanfigev}[2]{
\pgfmathsetmacro{\y}{\y+1};
\pgfmathsetmacro{\y}{\y-1};
\node at (#1*1.5 + 0.45,-0.4*\y) { #2 };%
\pgfmathsetmacro{\y}{\y+1};
}
\newcommand{\figevStale}[2]{
\pgfmathsetmacro{\y}{\y+1};
\pgfmathsetmacro{\y}{\y-1};
\node at (#1*1.5 + 0.45,-0.4*\y) { #2 };%
\pgfmathsetmacro{\y}{\y+1};
}
\newcommand{\figevoffset}[3]{
\pgfmathparse{#1}
\pgfmathsetmacro{\offset}{\pgfmathresult};
\pgfmathparse{\y+\offset}
\pgfmathsetmacro{\newindex}{\pgfmathresult};
\node [left] at (1.25,-0.4*\y)  {\pgfmathprintnumber{\newindex}};%
\node at (#2*1.5 + 0.45,-0.4*\y) {#3};%
\pgfmathsetmacro{\y}{\y+1};
}
\newcommand{\executionfull}[9]{
\scalebox{#3}{
  \begin{tikzpicture}
    \foreach \x in {1,...,#1}
    \node[right] at (#4*\x+#6, #8) {$t_{\x}$}; 
    \draw (#7,0) -- (#1*#4+#7,0); 
    \pgfmathsetmacro{\y}{1};
    #2 
    \draw (#7,0) -- (#7,-#5*\y); 
    \draw (#1*#4+#7,0) -- (#1*#4+#7,-#5*\y); 
    \draw (#7,-#5*\y) -- (#1*#4+#7,-#5*\y); 
    \ifthenelse{#9 = 1}{
      \foreach \x in {2,...,#1}
      \draw[dashed] (#4*\x+#7-#4,0) -- (#4*\x+#7-#4,-#5*\y); 
    }{}
  \end{tikzpicture}
}
}
\newcommand{\figevfull}[9]{
\ifthenelse{#7 = 1}{
  \ifthenelse{#8 = -1}{
    \node [left] at (#5,(-#4*\y)  {\pgfmathprintnumber{\y}};%
  }{
    \node [left] at (#5,(-#4*\y)  {#9};%
  }
}{}
\node at (#1*#3 + #6,(-#4*\y) {$ #2 $};%
\pgfmathsetmacro{\y}{\y+1};
}
\newcommand{\HS}[2]{
\scalebox{1}{
  \begin{tikzpicture}%
    \node[right] at (1.7,0.25)  {$X$};
    \draw (1.2,0) -- (2.8,0);%
    \pgfmathsetmacro{\y}{1};%
    #1%
    \draw (1.2,0) -- (1.2,-0.4*\y);%
    \draw (2.8,0) -- (2.8,-0.4*\y);%
    \draw (1.2,-0.4*\y) -- (2.8,-0.4*\y);%
    
    \node[right] at (3.5,0.25)  {$Y$};
    \draw (3,0) -- (4.6,0);%
    \pgfmathsetmacro{\y}{1};%
    #2%
    \draw (3,0) -- (3,-0.4*\y);%
    \draw (4.6,0) -- (4.6,-0.4*\y);%
    \draw (3,-0.4*\y) -- (4.6,-0.4*\y);%
  \end{tikzpicture}
}
}
\newcommand{\hsevX}[1]{
\pgfmathsetmacro{\y}{\y+1};
\pgfmathsetmacro{\y}{\y-1};
\node [left] at (1.25,-0.4*\y)  {\pgfmathprintnumber{\y}};%
\node at (1.5 + 0.45,-0.4*\y) { #1 };%
\pgfmathsetmacro{\y}{\y+1};
}
\newcommand{\hsevY}[1]{
\pgfmathsetmacro{\y}{\y+1};
\pgfmathsetmacro{\y}{\y-1};
\node at (3.3 + 0.45,-0.4*\y) { #1 };%
\pgfmathsetmacro{\y}{\y+1};
}
\newcommand{\OVTwo}[2]{
\scalebox{1}{
  \begin{tikzpicture}%
    \node[right] at (1.3,0.25) {$A_1$};
    \draw (1.2,0) -- (2,0);%
    \pgfmathsetmacro{\y}{1};%
    #1%
    \draw (1.2,0) -- (1.2,-0.4*\y);%
    \draw (2,0) -- (2,-0.4*\y);%
    \draw (1.2,-0.4*\y) -- (2,-0.4*\y);%
    
    \node[right] at (2.2,0.25)  {$A_2$};
    \draw (2.1,0) -- (2.9,0);%
    \pgfmathsetmacro{\y}{1};%
    #2%
    \draw (2.1,0) -- (2.1,-0.4*\y);%
    \draw (2.9,0) -- (2.9,-0.4*\y);%
    \draw (2.1,-0.4*\y) -- (2.9,-0.4*\y);%
  \end{tikzpicture}
}
}
\newcommand{\OV}[3]{
\scalebox{1}{
  \begin{tikzpicture}%
    \node[right] at (1.3,0.25) {$A_1$};
    \draw (1.2,0) -- (2,0);%
    \pgfmathsetmacro{\y}{1};%
    #1%
    \draw (1.2,0) -- (1.2,-0.4*\y);%
    \draw (2,0) -- (2,-0.4*\y);%
    \draw (1.2,-0.4*\y) -- (2,-0.4*\y);%
    
    \node[right] at (2.2,0.25)  {$A_2$};
    \draw (2.1,0) -- (2.9,0);%
    \pgfmathsetmacro{\y}{1};%
    #2%
    \draw (2.1,0) -- (2.1,-0.4*\y);%
    \draw (2.9,0) -- (2.9,-0.4*\y);%
    \draw (2.1,-0.4*\y) -- (2.9,-0.4*\y);%
    
    \node[right] at (3.1,0.25)  {$A_3$};
    \draw (3,0) -- (3.9,0);%
    \pgfmathsetmacro{\y}{1};%
    #3%
    \draw (3,0) -- (3,-0.4*\y);%
    \draw (3.9,0) -- (3.9,-0.4*\y);%
    \draw (3,-0.4*\y) -- (3.9,-0.4*\y);%
  \end{tikzpicture}
}
}
\newcommand{\ovX}[1]{
\pgfmathsetmacro{\y}{\y+1};
\pgfmathsetmacro{\y}{\y-1};
\node at (1.2 + 0.4,-0.4*\y) { #1 };%
\pgfmathsetmacro{\y}{\y+1};
}
\newcommand{\ovY}[1]{
\pgfmathsetmacro{\y}{\y+1};
\pgfmathsetmacro{\y}{\y-1};
\node at (2.1 + 0.45,-0.4*\y) { #1 };%
\pgfmathsetmacro{\y}{\y+1};
}
\newcommand{\ovZ}[1]{
\pgfmathsetmacro{\y}{\y+1};
\pgfmathsetmacro{\y}{\y-1};
\node at (3 + 0.45,-0.4*\y) { #1 };%
\pgfmathsetmacro{\y}{\y+1};
}
\newcommand{\cle}{\sqsubseteq}
\newcommand*\bigcdot{\mathpalette\bigcdot@{.5}}
\newcommand*\bigcdot@[2]{\mathbin{\vcenter{\hbox{\scalebox{#2}{$\m@th#1\bullet$}}}}}
\title{Dynamic Data-Race Detection through the Fine-Grained Lens}
\author{Rucha Kulkarni}{University of Illinois at Urbana-Champaign, USA}{ruchark2@illinois.edu}{}{}
\author{Umang Mathur}{University of Illinois at Urbana-Champaign, USA}{umathur3@illinois.edu}{https://orcid.org/0000-0002-7610-0660}{Umang Mathur was partially funded by a Google PhD Fellowship and by the Simons Institute for the Theory of Computing}
\author{Andreas Pavlogiannis}{Aarhus University, Denmark}{pavlogiannis@cs.au.dk}{}{}
\authorrunning{R. Kulkarni and U. Mathur and A. Pavlogiannis} 
\keywords{dynamic analyses, data races, fine-grained complexity} 
\renewcommand{\smallskip}{}
\begin{document}

\maketitle


\begin{abstract}
Data races are among the most common bugs in concurrency.
The standard approach to data-race detection is via dynamic analyses, which 
work over executions of concurrent programs, instead of the program source code.
The rich literature on the topic has created various notions of dynamic data races, which are known to be detected efficiently when certain parameters (e.g., number of threads) are small. 
However, the \emph{fine-grained} complexity of all these notions of races has remained elusive, making it impossible to characterize their trade-offs between precision and efficiency.

In this work we establish several fine-grained separations between many popular notions of dynamic data races.
The input is an execution trace $\tr$ with $\NumEvents$ events, $\NumThreads$ threads and $\NumLocks$ locks.
Our main results are as follows.
First, we show that \emph{happens-before (HB) races} can be detected in $O(\NumEvents\cdot \min(\NumThreads, \NumLocks))$ time, improving over the standard $O(\NumEvents\cdot \NumThreads)$ bound when $\NumLocks=o(\NumThreads)$.
Moreover, we show that even reporting an $\HB$ race that involves a read access is hard for 2-orthogonal vectors (2-OV).
This is the first rigorous proof of the conjectured quadratic lower-bound in detecting HB races.
Second, we show that the recently introduced \emph{synchronization-preserving races} are hard to detect for OV-3 and thus have a cubic lower bound, when $\NumThreads=\Omega(\NumEvents)$.
This establishes a complexity separation from $\HB$ races which are known to be less expressive.
Third, we show that \emph{lock-cover races} are hard for 2-OV, and thus have a quadratic lower-bound, even when $\NumThreads=2$ and $\NumLocks = \omega(\log \NumEvents)$.
The similar notion of \emph{lock-set races} is known to be detectable in $O(\NumEvents\cdot \NumLocks)$ time,
and thus we achieve a complexity separation between the two.
Moreover, we show that lock-set races become hitting-set (HS)-hard when $\NumLocks=\Theta(\NumEvents)$,
and thus also have a quadratic lower bound, when the input is sufficiently complex.
To our knowledge, this is the first work that characterizes the complexity of well-established dynamic race-detection techniques, allowing for a rigorous comparison between them.
\end{abstract}

\section{Introduction}

Concurrent programs that communicate over shared memory are prone to \emph{data races}.
Two events are \emph{conflicting} if they access the same memory location and one (at least) modifies that location.
Data races occur when conflicting access happen concurrently between different threads, and form one of the most common bugs in concurrency.
In particular, data races are often symptomatic of bugs in software like data
corruption~\cite{boehmbenign2011,racemob2013,Narayanasamy2007}, and they have been deemed \emph{pure  evil}~\cite{evil2012} due to the problems they have caused in the past \cite{SoftwareErrors2009}.
Moreover, many compiler optimizations are unsound in the presence of data races~\cite{Sevcik2008,Sevcik2011}, 
while data-race freeness is often a requirement for assigning well-defined semantics to programs~\cite{boehmadvec++2008}.

The importance of data races in concurrency has led to a multitude of techniques for detecting them efficiently~\cite{Banerjee06,vonPraun2011}.
By far the most standard approach is via \emph{dynamic analyses}.
Instead of analyzing the full program, dynamic analyzers try to \emph{predict} the existence of data races by observing and analyzing concurrent executions~\cite{Smaragdakis12,Kini17,Pavlogiannis2020}.
As full dynamic data race prediction is $\NPhard$ in general~\cite{Mathur2020b},
researchers have developed several approximate notions of dynamic races,
accompanied by efficient techniques for detecting each notion.

\SubParagraph{Happens-before races.}
The most common technique for detecting data races dynamically is based on 
Lamport's \emph{happens-before ($\HB$)} partial order~\cite{Lamport78}.
Two conflicting events form an $\HB$ race if they are unordered by $\HB$, as the lack of ordering between them indicates the fact that they may execute concurrently, thereby forming a data race.
The standard approach to $\HB$ race detection is via the use of vector clocks~\cite{djit1999},
and has seen wide success in commercial race detectors~\cite{threadsanitizer}.
As vector clock computation is known to require $\Theta(\NumEvents\cdot \NumThreads)$ time on traces of $\NumEvents$ events and $\NumThreads$ threads~\cite{CharronBost1991},
$\HB$ race detection is often assumed to suffer the same bound,
and has thus been a subject of further practical optimizations~\cite{Pozniansky03,Flanagan09}.

\SubParagraph{Synchronization preserving races.}
HB races were recently generalized to sync(hronization)-preserving races~\cite{Mathur21}.
Intuitively, two conflicting events are in a sync-preserving race if the observed trace can be soundly reordered to a witness trace in which the two events are concurrent, but without reordering synchronization events (e.g., locking events).
Similar to HB races, sync-preserving races can be detected in linear time when the number of threads is constant. However, the dependence on the number of threads is cubic for sync-preserving races, as opposed to the linear dependence for HB races.
On the other hand, sync-preserving races are known to offer better precision in program analysis.

\SubParagraph{Races based on the locking discipline.}
The locking discipline dictates that threads that access a common memory location
must do so inside \emph{critical sections}, using a common lock, when performing the access~\cite{vonPraun2011}.
Although this discipline is typically not enforced, it is considered good practice, and hence instances that violate this principle are often considered indicators of erroneous behavior.
For this reason, there have been two popular notions of data races based on the locking discipline,  
namely \emph{lock-cover races}~\cite{Dinning91} and \emph{lock-set races}~\cite{Savage97}.
Both notions are detectable in linear time when the number of locks is constant, 
however, lock-set race detection is typically faster in practice, which also comes at the cost of being less precise.

Observe that, although techniques for all aforementioned notions of races are generally thought to operate in linear time, they only do so assuming certain parameters, such as the number of threads, are constant.
However, as these techniques are deployed in runtime, often with extremely long execution traces,
they have to be as efficient as absolutely possible, often in scenarios when these parameters are very large.
When a data-race detection technique is too slow for a given application, the developers face a dilemma:~do they look for a faster algorithm, or for a simpler abstraction (i.e., a different notion of dynamic races)?
For these reasons, it is important to understand the \emph{fine-grained} complexity
of the problem at hand with respect to such parameters.
Fine-grained lower bounds can rule out the possibility of faster algorithms, 
and thus help the developers focus on new abstractions that are more tractable for the given application.
Motivated by such questions, in this work we settle the fine-grained complexity of dynamically detecting several popular notions of data races.


\subsection{Our Contributions}\label{subse:contributions}

Here we give a full account of the main results of this work, while we refer to later sections for precise definitions and proofs.
We also refer to \cref{sec:fine_grained} for relevant notions in fine-grained complexity and popular hypotheses.
The input is always a concurrent trace  $\tr$ of length $\NumEvents$, consisting of $\NumThreads$ threads, $\NumLocks$ locks, and $\NumVariables$ variables.

\Paragraph{Happens-before races.}
We first study the fine-grained complexity of $\HB$ races, as they form the most popular class of dynamic data races.
The task of most techniques is to report all events in $\tr$ that participate in an $\HB$ race, which is known to take $O(\NumEvents\cdot \NumThreads)$ time~\cite{djit1999}.
Note that the bound is quadratic when $\NumThreads=\Theta(\NumEvents)$,
and multiple heuristics have been developed to address it in practice (see e.g.,~\cite{Flanagan09}).
Our first result shows that polynomial improvements below this quadratic bound are unlikely.

\begin{restatable}{theorem}{hbovhard}\thmlabel{hb-ov-hard}
For any $\epsilon>0$, there is no algorithm that detects even a single $\HB$ race that involves a read in time $O(\NumEvents^{2-\epsilon})$, unless the $\orthv$ hypothesis fails.
\end{restatable}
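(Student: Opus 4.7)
The plan is to establish a fine-grained reduction from the Orthogonal Vectors problem to HB race detection. Given an OV instance with $A, B \subseteq \{0,1\}^d$ of size $n$ each and $d = \omega(\log n)$, I will build a trace $\tr$ of length $N = O(n \cdot d) = \tilde{O}(n)$ such that $\tr$ contains an $\HB$ race involving a read if and only if there exist $a \in A, b \in B$ with $a \cdot b = 0$. A subquadratic algorithm for $\HB$-race detection on $\tr$ would then solve OV in time $\tilde{O}(n^{2-\epsilon})$, contradicting $\orthv$H.

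For the construction, I plan to introduce $d$ locks $\ell_1, \dots, \ell_d$, one per coordinate, and a single shared variable $x$. A main thread forks two worker threads $T_A$ and $T_B$. Thread $T_A$ processes each $a_i$ sequentially in a block consisting of acquire-release operations on the locks $\ell_k$ with $a_i[k] = 1$, together with a designated write event $w_i$ on $x$; thread $T_B$ symmetrically processes each $b_j$ with acquire-release on $\ell_k$ for $b_j[k] = 1$ and a designated read $r_j$ on $x$. Since all the $w_i$ and $r_j$ access the same variable $x$, any $\HB$-unordered pair among them constitutes an $\HB$ race that involves a read. The trace interleaving and the placement of $w_i, r_j$ within their blocks are chosen so that the only relevant $\HB$ chains between $w_i$ and $r_j$ are those passing through a release-acquire on a lock $\ell_k$ corresponding to a coordinate shared by $a_i$ and $b_j$.

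The correctness argument then has two directions, reducing to the claim that $w_i$ and $r_j$ are $\HB$-ordered iff there exists $k$ with $a_i[k] = b_j[k] = 1$. One direction is straightforward: when such a coordinate $k$ exists, the trace contains a release of $\ell_k$ by $T_A$ (inside the block for $a_i$) that trace-precedes an acquire of $\ell_k$ by $T_B$ (inside the block for $b_j$), yielding an $\HB$ chain from $w_i$ to $r_j$. The harder direction, which I expect to be the main obstacle, is showing that no $\HB$ chain exists when $a_i \cdot b_j = 0$: because $\HB$ is closed under transitive composition, release-acquire chains on different locks belonging to other iterations' blocks could potentially link $w_i$ to $r_j$ even when $a_i$ and $b_j$ share no 1-coordinate. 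Overcoming this "leakage across iterations" is the crux of the construction, and I plan to address it by carefully structuring the trace---fixing a specific interleaving of the blocks of $T_A$ and $T_B$, and inserting per-iteration bookkeeping locks that separate consecutive blocks so that the release-acquire chains that reach $w_i$ to $r_j$ are confined to the coordinates of $a_i$ and $b_j$ alone. Once this invariant is established, aggregating over all pairs $(i, j)$ yields that $\tr$ contains an $\HB$ race involving a read iff some orthogonal pair exists, completing the reduction and the lower bound.
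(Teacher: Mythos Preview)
Your two-thread construction cannot be made to work, and the proposed fix via bookkeeping locks is a dead end. The leakage you identify is not a detail to be engineered around but a structural obstruction: within a single thread $T_A$, thread order places $w_i$ before every release appearing in any later block, so as soon as \emph{some} later block of $T_A$ and \emph{some} earlier block of $T_B$ touch a common lock in the appropriate trace order, you obtain $w_i \hb{\tr} r_j$ irrespective of whether $a_i\cdot b_j = 0$. Extra locks cannot cure this, because lock events only \emph{add} release--acquire edges to $\hb{\tr}$; they can never sever a path already forced by $\tho{\tr}$. There is in fact a one-line impossibility argument: the standard vector-clock algorithm (and the paper's own \thmref{hb-algo}) decides $\HB$ races---write--read races included, since it suffices to compare each access against the last conflicting write of every other thread---in $O(\NumEvents\cdot\NumThreads)$ time, which is linear when $\NumThreads = O(1)$. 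Any reduction that outputs a trace with boundedly many threads therefore cannot yield a super-linear lower bound, let alone a quadratic one.

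The paper's reduction accordingly uses $\Theta(n\cdot d)$ threads. Each $\wt(z)$ is placed in its own thread (one per vector $x\in A_1$) and each $\rd(z)$ in its own thread (one per $y\in A_2$), so that thread order never relates two distinct access events. The only possible $\HB$ paths from a given write to a given read are then routed through a network of critical sections on $d + n\cdot d$ locks, arranged so that such a path exists iff the corresponding vectors share a $1$-coordinate. The idea you are missing is precisely that the number of threads must scale with the instance.
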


Orthogonal vectors ($\orthv$) is a well-studied problem with a long-standing quadratic worst-case upper bound.
The associated hypothesis states that there is no sub-quadratic algorithm for the problem~\cite{Williams18}.
It is also known that the strong exponential time hypothesis (\seth) implies the Orthogonal Vectors hypothesis~\cite{Williams05}.
Thus, under the $\orthv$ hypothesis, \thmref{hb-ov-hard} establishes a quadratic lower bound for $\HB$ race detection.

Note that the hardness of \thmref{hb-ov-hard} arises out of the requirement to detect $\HB$ races that involve a read.
A natural follow-up question is whether detecting if the input contains \emph{any} $\HB$ race (i.e., not necessarily involving a read) has a similar lower bound based on \seth.
Our next theorem shows that under the non-deterministic \seth~(\nseth)~\cite{CarmosinoGIMPS2016},
there is no fine-grained reduction from \seth{} that proves any lower bound for this problem above $\NumEvents^{3/2}$.

\begin{restatable}{theorem}{hbdecidenosethbetterthanthreeovertwo}\thmlabel{hb-decide-no-seth-better-than-3/2}
For any $\epsilon>0$, there is no $(2^{\NumEvents}, \NumEvents^{3/2+\epsilon})$-fine-grained reduction from $\SAT$ to the problem of detecting any $\HB$ race with bound, unless \nseth{} fails.
\end{restatable}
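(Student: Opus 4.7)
The plan is to show that the decision problem---does $\tr$ contain any $\HB$ race?---lies in $\ntime{\tilde{O}(\NumEvents^{3/2})} \cap \contime{\tilde{O}(\NumEvents^{3/2})}$. The theorem then follows from the standard $\nseth$-barrier framework of Carmosino et al.: given a $(2^{\NumEvents}, \NumEvents^{3/2+\epsilon})$ fine-grained reduction from $\SAT$, composing with the $\contime{}$ algorithm decides $\SAT$ in nondeterministic time $O(N^{3/2})$, where $N \asymp 2^{2\NumEvents/(3+2\epsilon)}$ is the size of the produced $\HB$-race instance; this is $2^{\NumEvents(1-\delta)}$ for $\delta = 2\epsilon/(3+2\epsilon)>0$, contradicting $\nseth$.

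The $\ntime{}$ direction is straightforward: Merlin guesses the alleged racing pair $(e_1,e_2)$, and Arthur verifies in $O(\NumEvents)$ time that the pair is conflicting and $\HB$-incomparable via two reachability queries on the $\HB$ DAG, which has only $O(\NumEvents)$ edges (program order together with release--acquire and fork--join synchronizations).

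For the $\contime{}$ direction, I first reduce ``no $\HB$ race'' to certifying $O(\NumEvents)$ individual $\HB$-order claims. For each variable $x$, if no race involves $x$ then its conflicting accesses are totally $\HB$-ordered, and because $\HB$ is refined by trace order, it suffices, by transitivity, to witness that (i) consecutive writes to $x$ in trace order are $\HB$-ordered and (ii) each read to $x$ is $\HB$-ordered with its immediately preceding and succeeding writes. Summed over all variables, this yields $O(\NumEvents)$ claims. Each claim $e \leq_{\HB} e'$ is then verified by a landmark scheme: Merlin commits to a set $L$ of $O(\sqrt{\NumEvents})$ landmark events, and Arthur precomputes, via one BFS per landmark on the $\HB$ DAG, the $\HB$-ancestor and descendant sets of every $\ell \in L$ in total time $O(\NumEvents \cdot |L|) = O(\NumEvents^{3/2})$. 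For each required claim Merlin then supplies either (a) a landmark $\ell \in L$ with $e \leq_{\HB} \ell \leq_{\HB} e'$, verified in $O(1)$ by precomputed membership, or (b) an explicit $\HB$-path from $e$ to $e'$ of length at most $\sqrt{\NumEvents}$, verified edge-by-edge. Total verification cost is $O(\NumEvents \cdot \sqrt{\NumEvents}) = O(\NumEvents^{3/2})$.

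The principal obstacle is the combinatorial guarantee that a landmark set of size $O(\sqrt{\NumEvents})$ always suffices: for each of the $O(\NumEvents)$ certified orderings, either a short direct $\HB$-path or a landmark-splitting witness must exist. I expect this to follow by exploiting the sparsity of the $\HB$ DAG ($O(\NumEvents)$ edges) together with a chain decomposition driven by the thread and synchronization structure, so that long $\HB$-paths can be intercepted by a sub-linear ``skeleton'' set of events chosen, for instance, as every $\sqrt{\NumEvents}$-th event along a canonical chain cover. Formalizing this decomposition and carefully tracking the resulting exponent $3/2+\epsilon$ through the $\nseth$ composition are the two remaining technical pieces.
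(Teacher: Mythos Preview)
Your framework matches the paper's: invoke the $\nseth$ barrier by placing the problem in $\ntime{\NumEvents^{3/2}}\cap\contime{\NumEvents^{3/2}}$, handle $\ntime{}$ by guessing a racing pair and checking non-reachability in the sparse graph $\hbgraph$, and for $\contime{}$ reduce race-freeness to $O(\NumEvents)$ reachability claims between consecutive conflicting pairs (this is the paper's \lemref{num-consecutive}).

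The gap is exactly where you flag it, and your proposed chain-cover landmarks do not close it: an $\HB$-path can hop across many threads, spending only a few steps in each chain and avoiding every landmark even when its total length is $\Theta(\NumEvents)$ (take $\sqrt{\NumEvents}$ threads of $\sqrt{\NumEvents}$ events each with one landmark per thread, and route a path through the non-landmark events of each thread in turn). More importantly, the per-path guarantee ``length $\leq\sqrt{\NumEvents}$'' you ask for is stronger than what holds or is needed. The paper instead selects landmarks \emph{greedily}: Merlin repeatedly guesses a node $u$ lying on some $s_i\Path t_i$ path for at least $\sqrt{\NumEvents}$ of the still-uncovered pairs, Arthur verifies via one forward and one backward traversal from $u$, and those pairs are removed. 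At most $\sqrt{\NumEvents}$ rounds suffice, costing $O(\NumEvents^{3/2})$. When no such $u$ remains, every node lies on paths for fewer than $\sqrt{\NumEvents}$ surviving pairs, so the \emph{total} length of the explicit paths Merlin now writes down is at most $\NumEvents\cdot\sqrt{\NumEvents}$---individual paths may still be long. Replacing your per-path bound by this total-length bound, and your static landmarks by the greedy ones, is the missing piece. (A random landmark set of size $O(\sqrt{\NumEvents}\log\NumEvents)$ would also rescue your formulation up to a log factor, but the greedy argument is cleaner.)
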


Given the impossibility of \thmref{hb-decide-no-seth-better-than-3/2}, it would be desirable to at least show a super-linear lower bound for detecting any $\HB$ data race.
To tackle this question, we show that detecting any $\HB$ race is hard for the general problem of model checking first-order formulas quantified by $\forall\exists\exists$ on structures of size $n$ with $m$ relational tuples (denoted \fofee{}).

\begin{restatable}{theorem}{hbdecidemchard}\thmlabel{hb-decide-mc-hard}
For any $\epsilon>0$, if there is an algorithm for detecting any $\HB$ race in time $O(\NumEvents^{1+\epsilon})$,
then there is an algorithm for \fofee{} formulas in time $O(m^{1+\epsilon})$.
\end{restatable}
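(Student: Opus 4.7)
The plan is to exhibit a fine-grained reduction from $\fofee$ model checking to HB race detection in which the trace $\tr$ produced has $\NumEvents = O(m)$ events. Given an input structure $\mathcal{M}$ over a universe $U$ with $m$ relational tuples, together with an $\fofee$ sentence $\phi = \forall x\,\exists y\,\exists z\,\psi(x,y,z)$ (without loss of generality, $\psi$ is a conjunction of positive atoms over the relations of $\mathcal{M}$), the target is to construct $\tr$ so that $\tr$ admits some HB race iff $\mathcal{M} \not\models \phi$. A subsequent call to an $O(\NumEvents^{1+\epsilon})$-time HB race detector on $\tr$ then directly decides $\phi$ on $\mathcal{M}$.

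First I would lay out a per-element "test race" gadget: for each $x \in U$, introduce a fresh variable $v_x$ and place a conflicting pair of accesses to $v_x$ (a write followed later in the trace by a read) in two distinct threads. The race on $v_x$ is killed precisely when HB synchronizes the two. Next I would encode the existential witnesses: for each relational tuple, contribute $O(1)$ events consisting of acquire/release on a fresh tuple-lock, placed in auxiliary relay threads so that a cascade of release-to-acquire edges can be composed end-to-end from the $v_x$-write to the $v_x$-read exactly when there exist $y,z$ such that every atom of $\psi(x,y,z)$ is witnessed by an input tuple. Per-element gadgets are serialized along one or two shared auxiliary threads, iterating the universal $x$ without blowing up the thread count.

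Correctness decomposes into two directions. For soundness, if $(y,z)$ witnesses $\psi(x,y,z)$, the corresponding tuple-locks concatenate into a release-acquire chain between the two $v_x$-accesses, killing the race. For completeness, if no witness exists for some $x$, any putative HB chain bridging the two $v_x$-accesses would traverse a sequence of tuple-locks jointly certifying all atoms of $\psi(x,y,z)$ for some $(y,z)$, a contradiction. Hence $\tr$ has an HB race iff $\mathcal{M} \not\models \phi$, and composing with an $O(\NumEvents^{1+\epsilon})$ race-detection algorithm yields the claimed $O(m^{1+\epsilon})$ bound for $\fofee$.

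The main obstacle is the lock-sharing discipline. Using one fresh lock per tuple is needed to keep the trace size $O(m)$, but the auxiliary relay threads must also carry shared locks that connect consecutive atoms of $\psi$; careless sharing risks spurious HB chains that either kill a genuine race or, worse, connect the gadgets of two distinct elements. I expect the technical heart of the proof to be an invariant stating that every HB path between the two $v_x$-accesses decomposes into hops whose tuple-lock identifiers are mutually consistent on their $y$- and $z$-coordinates, so that the path actually certifies a witness of $\psi(x,y,z)$. Designing the acquire order on the shared relay threads so that this invariant holds, while still permitting every true witness to compose into a chain, will be the delicate combinatorial core of the argument.
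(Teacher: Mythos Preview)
Your route diverges from the paper's, and the divergence is where the gap sits. The paper does not reduce directly; it first reduces $\fofee$ to the multi-connectivity problem \multiconn{} on a layered $(2k+2)$-partite graph (copies $S,Y_i,Z_i,T$ of the universe, one branch per DNF clause), and only then encodes that graph as a trace whose $\hbgraph$ mirrors it. Crucially, the paper's intermediate graph has size $\Theta(n^2)$, not $O(m)$: negated atoms $\neg e(\cdot,\cdot)$ are handled by wiring in the \emph{complement} of the edge relation, which is dense even when $m$ is small. The resulting trace therefore has $\NumEvents=\Theta(n^2)$, and the conclusion $O(m^{1+\epsilon})$ is drawn only for dense structures with $m=\Theta(n^2)$.

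Your proposal instead aims for $\NumEvents=O(m)$ by assuming ``without loss of generality, $\psi$ is a conjunction of positive atoms.'' That normal form is not without loss of generality for $\fofee$: the quantifier-free part may be an arbitrary Boolean combination, and in particular may contain negated atoms (the paper's own example uses $\neg e(y,z)$). You cannot push negations away without either changing the quantifier prefix or materializing the complement relation, and materializing complements blows the tuple count up to $\Theta(n^2)$, destroying your $O(m)$ budget. So the size claim cannot hold as stated. A second, independent issue is the one you yourself flag as ``the main obstacle'': enforcing that the per-tuple lock hops agree on a single $(y,z)$ pair. The paper's layered construction resolves this for free, since any $S\to Y_i\to Z_i\to T$ path is forced to reuse the same $y$ and $z$ node across hops; your relay-thread sketch gives no comparable mechanism, and without one the completeness direction (an $\HB$ path implies a genuine witness) does not go through.
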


It is known that \fofee{} can be solved in $O(m^{3/2})$ time~\cite{Gao2018}, which yields a bound $O(n^3)$ for dense structures (i.e., when $m=\Theta(n^2)$). 
\thmref{hb-decide-mc-hard} implies that if $m^{3/2}$ is the best possible bound for \fofee{}, then
detecting any $\HB$ race cannot take $O(\NumEvents^{1+\epsilon})$ time for any $\epsilon <1/2$.
Although improvements for \fofee{} over the current $O(m^{3/2})$ bound might be possible, we find that a truly linear bound $O(m)$  would require major breakthroughs~\footnote{Even the well-studied problem of testing triangle freeness, which is a special case of the similarly flavored \foeee, has the super-linear bound $O(n^{\omega})$.}.
Under this hypothesis, \thmref{hb-decide-mc-hard} implies a super-linear bound for $\HB$ races.

Finally, we give an improved upper bound for this problem when $\NumLocks=o(\NumThreads)$.

\begin{restatable}{theorem}{hbalgo}\thmlabel{hb-algo}
Deciding whether $\tr$ has an $\HB$ race can be done in time $O(\NumEvents\cdot \min(\NumThreads, \NumLocks))$.
\end{restatable}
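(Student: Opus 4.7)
The plan is to prove the $O(\NumEvents \cdot \NumLocks)$ bound; combined with the standard $O(\NumEvents \cdot \NumThreads)$ Djit+ vector-clock algorithm~\cite{djit1999}, running whichever is asymptotically faster on the input yields the minimum in the statement. The approach is to design a variant of Djit+ that uses vector clocks of size $\NumLocks$ indexed by locks, rather than of size $\NumThreads$ indexed by threads.

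For each event $e$, I would define $V_e[l]$ to be the trace-index of the most recent release of lock $l$ that happens before $e$ in $\HB$. The vector is maintained online using Djit+-style rules with locks replacing threads: each thread $t$ keeps a running vector $V_t$, each lock $l$ keeps a snapshot $V_l$ captured at its most recent release, an acquire of $l$ by $t$ joins $V_t := V_t \sqcup V_l$, and a release copies $V_l := V_t$ and then sets $V_t[l]$ to the current trace-index. Each acquire/release thus costs $O(\NumLocks)$, and reads/writes cost $O(1)$, for a total maintenance cost of $O(\NumEvents \cdot \NumLocks)$.

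Race detection rests on the fact that all releases of the same lock are totally ordered by $\HB$, since critical sections of the same lock do not interleave. This implies that, for two conflicting events $e'$ and $e$ in distinct threads with $e'$ earlier in the trace, $e' \leq_{\HB} e$ iff there is a lock $l$ satisfying $V_e[l] \geq F_{e'}[l]$, where $F_{e'}[l]$ is the trace-index of the \emph{earliest} release of $l$ happening after $e'$ in $\HB$. I would precompute all $F_\cdot$ for memory-access events by a symmetric reverse pass using dual update rules, again in $O(\NumEvents \cdot \NumLocks)$ total time. Each pairwise $\HB$-comparability check then reduces to an $O(\NumLocks)$ componentwise comparison between $V_e$ and $F_{e'}$.

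The final step is to orchestrate the race checks so that the $\NumThreads$ factor does not creep back in through the per-variable read sets. Per variable $x$ I would keep the last writer and an $\HB$-antichain of active reads, soundly discarding any $\HB$-dominated read on the grounds that a dominated read can race with a future write only when its dominator does. The main obstacle is to bound the amortized cost of checking a new access against this antichain at $O(\NumLocks)$ per event; this I would handle by retiring reads as soon as they become $\HB$-covered by the event of another thread (so each read is scanned only a bounded number of times before its retirement is charged) and by piggybacking the antichain updates on the same forward sweep that maintains $V_\cdot$. Combining this lock-indexed procedure with the standard thread-indexed Djit+ and dispatching on the smaller of $\NumThreads, \NumLocks$ yields the claimed bound.
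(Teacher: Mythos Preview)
Your lock-indexed clocks $V_e$, $F_{e'}$ and the characterization $e'\hb{\tr} e \iff \exists l\colon V_e[l]\ge F_{e'}[l]$ are essentially the paper's lockstamps $\acqls{\tr}_e$, $\rells{\tr}_e$ and \lemref{hb-isomorphic-lockstamp}; this part of the plan is sound. (One small slip: in your release rule you must set $V_t[l]$ to the current index \emph{before} snapshotting $V_l:=V_t$, not after; otherwise the next acquire on $l$ never picks up the index of the release it synchronizes with, and you will miss $\HB$-orderings that cross threads exactly once on $l$.)

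The genuine gap is the final paragraph. Your plan to maintain an $\HB$-antichain of reads and ``retire reads as soon as they become $\HB$-covered'' does not come with a working amortization: detecting covering requires comparing a stored read against incoming events, the antichain can hold $\Theta(\NumThreads)$ pairwise incomparable reads, and you have not shown how a new access avoids scanning all of them before any retirement can be charged. The paper bypasses this entirely with a much simpler structural fact (\lemref{num-consecutive}): if $\tr$ has any $\HB$-race then it has one on a \emph{consecutive conflicting pair}---a conflicting pair with no write on that variable strictly between them---and there are only $O(\NumEvents)$ such pairs. Operationally this means: store \emph{all} reads on $x$ since the last write with no antichain pruning; when a write arrives, compare it against the previous write and against every stored read, then clear the read set. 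Every read is inserted once and compared once, so the total is $O(\NumEvents)$ lockstamp comparisons of $O(\NumLocks)$ each, and the $O(\NumEvents\cdot\NumLocks)$ bound follows with no antichain machinery.
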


In fact, similar to existing techniques~\cite{Flanagan09}, the algorithm behind \thmref{hb-algo} detects \emph{all}
variables that participate in an $\HB$ race (instead of just reporting $\tr$ as racy).


\Paragraph{Synchronization-preserving races.}
Next, we turn our attention to the recently introduced sync-preserving races~\cite{Mathur2020b}.
It is known that detecting sync-preserving races takes $O(\NumEvents\cdot\NumVariables\cdot  \NumThreads^3)$ time.
As sync-preserving races are known to be more expressive than $\HB$ races,
the natural question is whether sync-preserving races can be detected more efficiently, e.g., by an algorithm that achieves a bound similar to \thmref{hb-algo} for $\HB$ races.
Our next theorem answers this question in negative.

\begin{restatable}{theorem}{syncpovthreehard}\thmlabel{syncp-ov3-hard}
For any $\epsilon>0$, there is no algorithm that detects even a single sync-preserving race in time $O(\NumEvents^{3-\epsilon})$, unless the $\orthvk{3}$ hypothesis fails.
Moreover, the statement holds even for traces over a single variable.
\end{restatable}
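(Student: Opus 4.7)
The plan is to establish the lower bound via a fine-grained reduction from $\orthvk{3}$. Given an instance with three sets $A_1, A_2, A_3 \subseteq \{0,1\}^d$ each of size $n$, I will construct a trace $\tr$ on a single variable $x$, with $O(n)$ threads, $O(d)$ locks, and $N = O(n \cdot d)$ events, such that $\tr$ admits a sync-preserving race iff there exist $a \in A_1$, $b \in A_2$, $c \in A_3$ that are 3-orthogonal. Since the hard instances of $\orthvk{3}$ have $d = \operatorname{polylog}(n)$, an $O(N^{3-\epsilon})$ algorithm for detecting sync-preserving races would yield an $O(n^{3-\epsilon'})$ algorithm for $\orthvk{3}$ for some $\epsilon'>0$, contradicting the $\orthvk{3}$ hypothesis.

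\textbf{High-level construction.} The trace begins with a designated write $e^*$ on $x$ in a source thread $T_0$, and ends with a conflicting access $f^*$ on $x$ in a sink thread $T_\infty$; these are the only two events on $x$ that can race, so the single-variable restriction reduces the problem to deciding whether $(e^*,f^*)$ is a sync-preserving race. Between $e^*$ and $f^*$, the trace is organized in three layers, one per $A_i$. For each $a \in A_i$, I introduce a thread $T_a^{(i)}$ whose body consists of critical sections on the locks $\{\ell_j : a_j = 1\}$ among a coordinate-indexed lock set $\ell_1,\dots,\ell_d$, together with auxiliary per-layer locks that serialize the vectors within a layer. The relative order of critical sections of different layers on each $\ell_j$ is chosen to reflect the 3-OV structure: for each coordinate $j$, the $\ell_j$-critical sections of the three layers appear in a fixed pattern that lies between $e^*$ and $f^*$ in $\tr$.

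\textbf{Key steps.} First, I would verify a \emph{selection lemma}: any sync-preserving reordering witnessing $(e^*,f^*)$ must effectively commit to one vector per layer. Concretely, the per-layer auxiliary locks would be used so that only critical sections of a single thread $T_a^{(i)}$ from each layer can be postponed past $e^*$, while the others must precede it. Second, I would verify a \emph{blocking lemma}: if the chosen triple $(a,b,c)$ has some coordinate $j$ with $a_j = b_j = c_j = 1$, then the $\ell_j$-ordering forces a critical section of layer 2 or 3 to sit between $e^*$ and $f^*$ in any sync-preserving reordering, blocking the race. Third, the two directions of the equivalence follow: (i) given an orthogonal triple, I explicitly exhibit a sync-preserving reordering that witnesses the race; (ii) conversely, from any such witness I extract a coherent triple via the selection lemma, and the blocking lemma gives orthogonality.

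\textbf{Main obstacle.} The crux is designing the three-layer lock gadget on a \emph{single} variable: without the flexibility of write--read chains on auxiliary variables to encode the orthogonality check, the full encoding has to be carried by the lock-order constraints alone. The hardest part is the ``only if'' direction, i.e.\ ruling out a cheating reordering that picks different vectors of a layer for different coordinates. The per-layer auxiliary locks must be arranged to lock in a single choice \emph{globally}, while the per-coordinate locks $\ell_j$ must still produce the 3-way blocking pattern. Getting both properties simultaneously, while keeping the trace at $O(nd)$ events (so that polylog dimension truly yields the cubic lower bound), will be the main technical challenge.
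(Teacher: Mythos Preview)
Your proposal differs architecturally from the paper's proof, and the gap you flag as the ``main obstacle'' is in fact the crux that your outline does not resolve.

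The paper does \emph{not} use a single fixed race pair with three symmetric layers. Instead, it makes $A_1$ and $A_2$ asymmetric to $A_3$: each $x\in A_1$ gets its own thread $t_x$ ending in $\acq(X),\wt(z),\rel(X)$, and each $y\in A_2$ gets a thread $t_y$ ending in $\acq(Y),\rd(z),\rel(Y)$. There are thus $n^2$ candidate race pairs, and \emph{the choice of which $(\wt(z),\rd(z))$ pair races is itself the selection of $(x,y)$}---no selection gadget is needed for the first two sets. The set $A_3$ is encoded transversally: there are $d$ coordinate-threads $t_k$, each containing a segment for every $z_i\in A_3$, and the paper uses $\sync$ events (short for $\acq(\ell),\rd(x_\ell),\wt(x_\ell),\rel(\ell)$, i.e.\ read--write chains) together with an auxiliary thread to force all $t_k$ to advance in lockstep. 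The stopping point of this lockstep advance picks a single $z_i$; orthogonality then falls out of an interleaved $l_k/l'_k$ pattern in $t_k$ when $z_i[k]=1$.

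Your plan, by contrast, commits to a single race pair and three symmetric layers, each with $n$ threads, and relies on ``per-layer auxiliary locks'' to force the reordering to commit to one vector per layer. But lock constraints in a sync-preserving reordering only say that the critical sections \emph{you keep} retain their relative order---they do not force inclusion of any prefix, nor do they force \emph{exactly one} choice. The paper obtains the needed prefix/lockstep constraint precisely through the $\lw$ semantics of the read--write pair inside $\sync$, which you explicitly rule out (``without the flexibility of write--read chains''). Without that mechanism, your selection lemma has no evident proof, and the ``only if'' direction (ruling out a reordering that mixes vectors across coordinates) is not established. If you want to salvage your single-race-pair architecture, you will need either to reintroduce read--write chains on auxiliary variables (as the paper does via $\sync$), or to find a genuinely new lock-only gadget that enforces a unique global choice---your outline does not supply one.
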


As $\HB$ races take at most quadratic time, \thmref{syncp-ov3-hard} shows that the increased expressiveness of sync-preserving races incurs a complexity overhead that is unavoidable in general.

\Paragraph{Races based on the locking discipline.}
We now turn our attention to data races based on the locking discipline, namely \emph{lock-cover races}  and \emph{lock-set races}.
It is known that lock-cover races are more expressive than lock-set races.
On the other hand, existing algorithms run in $O(\NumEvents^2\cdot \NumLocks)$ time for lock-cover races and in $O(\NumEvents\cdot \NumLocks)$ time for lock-set races, and thus hint that the former are computationally harder to detect.
Our first theorem makes this separation formal, by showing that even with just two threads, having slightly more that logarithmically many locks implies a quadratic hardness for lock-cover races.

\begin{restatable}{theorem}{lockcoverquadraticlowerbound}\thmlabel{lock-cover-quadratic-lower-bound}
For any $\epsilon>0$, any $\NumThreads \geq 2$ and any $\NumLocks=\omega(\log \NumEvents)$,
there is no algorithm that detects even a single lock-cover race in time $O(\NumEvents^{2-\epsilon})$, unless the $\orthv$ hypothesis fails.
\end{restatable}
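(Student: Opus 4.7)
The plan is to reduce 2-OV to lock-cover race detection on traces with exactly two threads, a single shared variable, and $d$ locks. Given an OV instance consisting of sets $A, B \subseteq \{0,1\}^d$ with $|A| = |B| = n$, I construct a trace $\tr$ as follows. For each $a \in A$, thread $\Thread_1$ executes a critical section that, in a fixed order, acquires exactly the locks $L(a) := \{\ell_i : a_i = 1\}$, writes the shared variable $x$, and then releases those locks in reverse order. Symmetrically, for each $b \in B$, thread $\Thread_2$ executes such a critical section using $L(b) := \{\ell_i : b_i = 1\}$ and also writing to $x$. The trace $\tr$ is the concatenation of all of $\Thread_1$'s events followed by all of $\Thread_2$'s events, yielding $\NumEvents = O(nd)$ events. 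If the statement is to hold for arbitrary $\NumThreads \geq 2$, one simply appends $\NumThreads - 2$ empty (or negligibly short) dummy threads.

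Correctness follows directly from the definitions. For a write $e_a$ corresponding to $a \in A$, the set of locks held at $e_a$ is exactly $L(a)$; likewise the locks held at $e_b$ are exactly $L(b)$. The two events conflict (both write $x$) and sit in different threads, so they form a lock-cover race iff their held locksets are disjoint, i.e., iff $L(a) \cap L(b) = \emptyset$, which is exactly the orthogonality condition $a \cdot b = 0$. Same-thread events are program-order comparable and hence never race, while acquire/release operations do not themselves produce memory-conflict pairs, so no spurious lock-cover races are introduced. Hence $\tr$ contains a lock-cover race iff the 2-OV instance admits an orthogonal pair.

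Finally I analyze the parameter regime. Setting $d := \lceil \log^2 n \rceil$ (any function with $d = \omega(\log n)$ and $d = n^{o(1)}$ works) gives $\NumEvents = O(n \log^2 n)$ and $\NumLocks = d = \omega(\log \NumEvents)$, since $\log \NumEvents = \Theta(\log n)$. A hypothetical $O(\NumEvents^{2-\epsilon})$ algorithm would then solve 2-OV in time $O((n \log^2 n)^{2-\epsilon}) = O(n^{2 - \epsilon/2})$, contradicting the $\orthv$ hypothesis. The only delicate point—and the main obstacle—is pinning down this regime: the dimension $d$ must be large enough to guarantee $\NumLocks = \omega(\log \NumEvents)$, yet small enough that the $\Theta(nd)$ blowup in the trace length is absorbed in an $n^{o(1)}$ factor. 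Any polylogarithmic choice of $d$ satisfies both constraints, so the reduction itself is otherwise entirely direct, with no further nontrivial combinatorial content.
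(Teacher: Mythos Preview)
Your proposal is correct and follows essentially the same construction as the paper: one write event per vector, guarded by the locks indexing its $1$-coordinates, with the two sets placed in separate threads so that disjoint locksets correspond exactly to orthogonality. Your treatment of the parameter regime (fixing $d=\lceil\log^2 n\rceil$ to ensure $\NumLocks=\omega(\log\NumEvents)$ while keeping the blowup $n^{o(1)}$) is more explicit than the paper's, which simply invokes the $\poly(d)$ slack in the $\orthv$ hypothesis, but the argument is otherwise identical.
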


Observe that the $O(\NumEvents\cdot \NumLocks)$ bound for lock-set races also becomes quadratic, when the number of locks is unbounded (i.e., $\NumLocks =\Theta(\NumEvents))$.
Is there a \seth-based quadratic lower bound similar to \thmref{lock-cover-quadratic-lower-bound} for this case?
Our next theorem rules out this possibility, again under \nseth{}.

\begin{restatable}{theorem}{lock-set-nseth}\thmlabel{lock-set-nseth}
For any $\epsilon>0$, there is no $(2^\NumEvents,\NumEvents^{1+\epsilon})$-fine-grained reduction from $\SAT$ to the problem of detecting any lock-set race, unless \nseth{} fails.
\end{restatable}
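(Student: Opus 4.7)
The plan is to show that the \emph{complement} of lock-set race detection admits a very short nondeterministic certificate, placing the problem in $\contime{\NumEvents^{1+o(1)}}$, and then to transport this fact through the hypothesised reduction to derive $\SAT\in\contime{2^{(1-\delta)\NumEvents}}$, contradicting $\nseth$. Recall the Eraser-style definition: the trace $\tr$ contains a lock-set race iff some variable $x$ has no single lock held at every access to it, i.e.\ $\bigcap_e \mathsf{LS}(e)=\emptyset$ where $e$ ranges over the access events to $x$; equivalently, $\tr$ is race-free iff every variable admits some lock held at each of its accesses.

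\textbf{Step 1 (coNTIME certificate for the complement).} The nondeterministic algorithm for ``no lock-set race'' guesses a function $c\colon \vars{\tr}\to\locks{\tr}$ of total bit-length $O(\NumVariables\log \NumLocks)=O(\NumEvents\log \NumEvents)$. Verification is a single forward scan of $\tr$: for each thread $t$ maintain the current held-lock set in a balanced tree, updated in $O(\log \NumLocks)$ per acquire/release; at every access event to variable $x$ executed by thread $t$, check in $O(\log \NumLocks)$ that $c(x)$ lies in that set, and reject otherwise. A successful scan proves $c(x)\in\mathsf{LS}(e)$ for every access $e$ to $x$, hence $\bigcap_e \mathsf{LS}(e)\neq\emptyset$; conversely any race-free trace admits a valid witness by choosing any $c(x)\in\bigcap_e\mathsf{LS}(e)$. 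The verifier runs in $O(\NumEvents\log \NumEvents)$ time, so the complement lies in $\ntime{\NumEvents^{1+o(1)}}$ and lock-set race detection lies in $\contime{\NumEvents^{1+o(1)}}$.

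\textbf{Step 2 (transport along the reduction).} Assume, for contradiction, that for some $\epsilon>0$ there is a $(2^\NumEvents,\NumEvents^{1+\epsilon})$-fine-grained reduction from $\SAT$ to lock-set race detection. Pick $\delta=\epsilon/(2(1+\epsilon))>0$; then $\NumEvents^{1+o(1)}\leq \NumEvents^{(1+\epsilon)(1-\delta)}$ for all large enough $\NumEvents$, so Step 1 provides a $(1-\delta)$-savings conondeterministic algorithm for the target problem. Plugging this algorithm into the reduction yields a conondeterministic algorithm for $\SAT$ of time $2^{(1-\delta')\NumEvents}$ for some fixed $\delta'>0$, i.e.\ $\SAT\in\contime{2^{(1-\delta')\NumEvents}}$, contradicting $\nseth$.

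\textbf{Main obstacle.} The substantive content is Step 1: the complement of lock-set race detection is a \emph{universal} assertion (``every variable's accesses share a common lock''), so a priori one might fear certificates of size $\Omega(\NumEvents\cdot \NumLocks)$. The decisive observation is that the Eraser definition asks for a \emph{single} common lock per variable, collapsing the witness to one lock per variable. This is precisely what separates lock-set races from the pairwise lock-cover notion of \cref{thm:lock-cover-quadratic-lower-bound}, whose complement is not known to admit such a succinct certificate --- consistent with the quadratic $\orthv$-based lower bound for lock-cover races coexisting with the $\nseth$-barrier for lock-set races.
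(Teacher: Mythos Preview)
Your $\contime{\cdot}$ certificate in Step~1 is essentially the paper's idea, but there are two gaps, one minor and one substantive.

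\textbf{Minor gap (Step 1).} A lock-set race on $x$ requires \emph{both} (a) a pair of conflicting accesses to $x$ and (b) $\bigcap_{e\in\accesses{\tr}(x)}\lheld{\tr}(e)=\emptyset$. Your certificate $c\colon\vars{\tr}\to\locks{\tr}$ only witnesses the failure of (b). A trace in which some variable $x$ has empty lock-set intersection but no conflicting pair (say all accesses to $x$ are reads, or all come from one thread) is race-free yet admits no certificate of your form. The paper's certificate for the complement allows, for each variable, either a covering lock \emph{or} the claim ``no two conflicting accesses to this variable'', both verifiable in one pass.

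\textbf{Substantive gap (Step 2).} Your transport argument does not work. Fine-grained reductions (as defined in the paper and in~\cite{CarmosinoGIMPS2016}) are \emph{Turing} reductions: the reduction $R$ makes several adaptive oracle queries and may branch on both YES and NO answers. To simulate $R$ conondeterministically you must guess every oracle answer and certify each one; this needs short certificates for YES answers ($\ntime{\cdot}$ for the target) as well as for NO answers ($\contime{\cdot}$). You established only the $\contime{\cdot}$ half, so ``plugging in'' does not yield $\SAT\in\contime{2^{(1-\delta')\NumEvents}}$. This is exactly why \thmref{better-than-t-unlikely-under-seth} (Corollary~2 of~\cite{CarmosinoGIMPS2016}) requires $C\in\ntime{T_C}\cap\contime{T_C}$, not merely $\contime{T_C}$.

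The paper fills this gap by also placing lock-set race detection in $\ntime{\NumEvents}$: guess the variable $x$ on which there is a race and verify using the linear-time single-variable algorithm of \thmref{lock-set-single-variable-linear}. With both memberships in hand, the paper simply invokes \thmref{better-than-t-unlikely-under-seth} rather than re-deriving Step~2.
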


Hence, even though we desire a quadratic lower bound, \thmref{lock-set-nseth} rules out any super-linear lower-bound based on \seth.
Alas, our next theorem shows that a quadratic lower bound for lock-set races does exist, based on the hardness of the hitting set ($\hs$) problem.

\begin{restatable}{theorem}{locksetquadraticlowerbound}\thmlabel{lock-set-quadratic-lower-bound}
For any $\epsilon>0$ and any $\NumThreads = \omega(\log n)$,
there is no algorithm that detects even a single lock-set race in time $O(\NumEvents^{2-\epsilon})$, unless the $\hs$ hypothesis fails.
\end{restatable}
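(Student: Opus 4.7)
I would reduce from the Hitting Set problem $\hs$: given two families $A=\{a_1,\dots,a_n\}$ and $B=\{b_1,\dots,b_n\}$ of subsets of a universe $[d]$ with $d=\omega(\log n)$, decide whether some $a_i\in A$ satisfies $a_i\cap b_j\neq\emptyset$ for all $b_j\in B$. The $\hs$ hypothesis rules out $O(n^{2-\epsilon})$-time algorithms (for $d=\mathrm{polylog}(n)$). The goal is to construct a trace $\tr$ with $\NumEvents=O(nd)=n^{1+o(1)}$ whose lock-set races on a variable $x_i$ correspond precisely to $a_i$ being a hitting set for $B$.

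The construction uses one variable $x_i$ per $a_i\in A$, one lock $\ell_j$ per $b_j\in B$, and $d+1$ threads $T_0,T_1,\dots,T_d$, giving $\NumThreads=d+1=\omega(\log n)$. The trace consists of two sequential phases. In Phase~I, $T_0$ acquires all the locks $\ell_1,\dots,\ell_n$, writes to each $x_i$ in turn, and then releases all the locks. In Phase~II, the threads $T_1,\dots,T_d$ execute one after the other; thread $T_k$ acquires exactly the locks $\{\ell_j : k\notin b_j\}$, writes to each $x_i$ satisfying $k\in a_i$, and then releases those locks. The locksets held during the writes to $x_i$ are therefore $\{\ell_1,\dots,\ell_n\}$ (from $T_0$) together with $\{\ell_j : k\notin b_j\}$ for every $k\in a_i$ (from the phase-II threads), so the intersection of all locksets held during accesses to $x_i$ is
\[
C(x_i)\;=\;\{\ell_1,\dots,\ell_n\}\cap\bigcap_{k\in a_i}\{\ell_j : k\notin b_j\}\;=\;\{\ell_j : a_i\cap b_j=\emptyset\}.
\]
Hence $x_i$ participates in a lock-set race iff $C(x_i)=\emptyset$ iff for every $b_j$ some $k\in a_i$ lies in $b_j$, i.e.\ iff $a_i$ hits all of $B$. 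Consequently $\tr$ contains a lock-set race iff the HS instance is a yes-instance.

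For the fine-grained bookkeeping, Phase~I contributes $O(n)$ events and Phase~II contributes $O(nd)$ events (each $T_k$ performs $O(n)$ acquires, $O(n)$ writes, and $O(n)$ releases), so $\NumEvents=O(nd)=n^{1+o(1)}$ when $d=\mathrm{polylog}(n)$. A putative $O(\NumEvents^{2-\epsilon})$-time race detector would then yield an $O(n^{2-\epsilon'})$-time algorithm for HS, contradicting the hypothesis. The main technical obstacle I anticipate is verifying that the construction is a well-formed trace: lock acquires must be properly nested within each thread (guaranteed by the block structure of the two phases), and the phase-II threads must not hold the same lock concurrently, which I would enforce by sequentializing $T_1,\dots,T_d$ rather than interleaving them. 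A further subtle point is to match the paper's precise definition of a lock-set race — in particular that the conflicting accesses on $x_i$ come from distinct threads — which is witnessed by $T_0$'s write in Phase~I together with any phase-II write on $x_i$ (one may discard a priori the trivial $a_i=\emptyset$ case, which is never a hitting set when $B$ is non-empty).
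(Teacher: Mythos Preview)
Your proposal is correct and essentially identical to the paper's proof: the paper also reduces from $\hs$ using $d+1$ threads, $n$ locks (one per vector in the second family), $n$ variables (one per vector in the first family), with a thread $t_0$ that writes every variable while holding all locks, and threads $t_k$ for $k\in[d]$ that hold exactly the locks $\{\ell_j : k\notin b_j\}$ and write exactly the variables $\{x_i : k\in a_i\}$. Your correctness computation $C(x_i)=\{\ell_j : a_i\cap b_j=\emptyset\}$ and the size bound $\NumEvents=O(nd)$ match the paper's argument line for line.
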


Hitting set is a problem similar to $\orthv$, but has different quantifier structure.
Just like the $\orthv$ hypothesis, the $\hs$ hypothesis states that there is no sub-quadratic algorithm for the problem~\cite{Abboud2016}.
Although $\hs$ implies $\orthv$, the opposite is not known, and thus \thmref{lock-set-quadratic-lower-bound} does not contradict \thmref{lock-set-nseth}.
In conclusion, we have that both lock-cover and lock-set races have (conditional) quadratic lower bounds,
though the latter is based on a stronger hypothesis ($\hs$), and requires more threads and locks for hardness to arise.

Finally, on our way to \thmref{lock-set-nseth}, we obtain the following theorem.

\begin{restatable}{theorem}{locksetsinglevariablelinear}\thmlabel{lock-set-single-variable-linear}
Deciding whether a trace $\tr$ has a lock-set race on a variable $x$ can be performed in $O(\NumEvents)$ time.
Thus, deciding whether $\tr$ has a lock-set race can be performed in $O(\NumEvents\cdot \min(\NumLocks, \NumVariables))$ time.
\end{restatable}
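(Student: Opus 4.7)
The plan is to first establish the single-variable bound, and then derive the general bound from it. Fix a variable $x$, let $A_x$ denote the set of accesses to $x$ in $\tr$, and let $L_x := \bigcap_{e \in A_x} \lheld{\tr}(e)$ be the intersection of the locksets taken at the accesses to $x$. A lock-set race on $x$ exists iff $L_x = \emptyset$ together with the (trivially checkable in one pass) presence of two conflicting accesses from different threads. The structural observation I would use is that $L_x$ decomposes per thread: setting $I_t := \bigcap_{e \in A_x,\,\ThreadOf{e}=t} \lheld{\tr}(e)$, we have $L_x = \bigcap_t I_t$ where the outer intersection ranges over threads that access $x$, and each $I_t$ depends only on the projection of $\tr$ onto thread $t$.

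The main step is to compute all $I_t$ in $O(\NumEvents)$ total time. The naive implementation, which intersects $I_t$ with the current lockset $S_t$ at every $x$-access, repeats work whenever $I_t$ does not actually change. I would avoid this with a deferred-removal scheme. For each thread $t$, maintain $S_t$ (hash set of held locks) and a ``pending removal'' set $P_t$, processing events of $t$ as follows: on $\acq(\ell)$, insert $\ell$ into $S_t$ and delete $\ell$ from $P_t$ (cancelling any pending removal); on $\rel(\ell)$, remove $\ell$ from $S_t$ and, if $\ell \in I_t$, insert $\ell$ into $P_t$; on the first $x$-access by $t$, set $I_t := S_t$; on every subsequent $x$-access by $t$, remove the elements of $P_t$ from $I_t$ and reset $P_t$ to empty. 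The amortized cost per event of $t$ is $O(1)$: each lock enters $P_t$ at most once per release, and the cost of committing is charged to those insertions, while the initial copy $|S_t|$ is charged to the prefix of $t$'s acquires. Phase 2 then tests $\bigcap_t I_t \neq \emptyset$ by incrementing a per-lock counter $c_\ell$ for each $\ell \in I_t$ and comparing $c_\ell$ with the number of threads that accessed $x$; this costs $O(\sum_t |I_t|) = O(\NumEvents)$.

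The general bound follows from two natural algorithms. Running the above per-variable procedure independently for each variable gives $O(\NumEvents \cdot \NumVariables)$. In parallel, the standard Eraser-style algorithm that maintains an $L_x$ for every variable and, at each $x$-access, intersects $L_x$ with the current lockset in $O(\NumLocks)$ time, gives $O(\NumEvents \cdot \NumLocks)$. Taking the better of the two yields the claimed $O(\NumEvents \cdot \min(\NumLocks, \NumVariables))$ bound.

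The main obstacle I anticipate is verifying the deferred-removal invariant under arbitrary release/re-acquire patterns between two consecutive $x$-accesses by the same thread: a release followed by a re-acquire with no intervening $x$-access must leave $\ell \in I_t$ intact, which is exactly what the acquire-side deletion from $P_t$ ensures. I would discharge correctness by induction on $t$'s events, maintaining the invariant that after any prefix of $t$'s trace, $I_t \setminus P_t$ equals the intersection of $\lheld{\tr}(e)$ over the $x$-accesses $e$ of $t$ processed so far, restricted to locks currently in $S_t$. Once this invariant is in place, the equality $L_x = \bigcap_t I_t$ at the end of the trace is immediate, closing the proof.
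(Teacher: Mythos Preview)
Your proof is correct and rests on the same key idea as the paper's: avoid the $O(\NumLocks)$ cost of a naive intersection at every $x$-access by maintaining a pending-removal set that is committed lazily, with total commit cost amortized against lock events. The organizational difference is that you first decompose $L_x = \bigcap_t I_t$ per thread, compute each $I_t$ by scanning only $t$'s local events with the deferred-removal set $P_t$, and then combine the $I_t$'s via lock counters; the paper instead processes the interleaved trace directly, keeping a single global running intersection $B$ together with $A=\lheld{\tr}(e)$ and $C=\overline{A}\cap B$ as its pending-removal set, and phrases the amortization as ``each lock is removed from $B$ at most once.'' Your per-thread decomposition makes the amortization argument slightly cleaner (removals charged to releases, initial copy charged to preceding acquires) and sidesteps any subtlety about how the held-lock set behaves across context switches; the paper's single-pass version avoids the final combination step. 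Either way the bound is $O(\NumEvents)$, and your derivation of the general $O(\NumEvents\cdot\min(\NumLocks,\NumVariables))$ bound from the per-variable algorithm together with the standard $O(\NumEvents\cdot\NumLocks)$ Eraser-style scan matches the paper.
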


Hence, \thmref{lock-set-single-variable-linear} strengthens the $O(\NumEvents \cdot \NumLocks)$ upper bound for lock-set races when $\NumVariables=o(\NumLocks)$.


\subsection{Related Work}\label{subsec:related}

\Paragraph{Dynamic data-race detection.}
There exists a rich literature in dynamic techniques for data race detection.
Methods based on vector clocks (\djit algorithm~\cite{djit1999})
using Lamport's Happens Before ($\HB$)~\cite{Lamport78}
and the lock-set principle in Eraser~\cite{Savage97} 
were the first ones to popularize dynamic analysis
for detecting data races.
Later work attempted to increase the performance of these notions
using optimizations as in~\cite{Pozniansky03} and \fasttrack~\cite{Flanagan09}, 
altogether different algorithms (e.g., the \goldilocks algorithm~\cite{Elmas07}),
and hybrid techniques~\cite{OCallahan03}.
$\HB$ and lock-set based race detection are respectively
sound (but incomplete) and complete (but unsound) variants 
of the more general problem of data-race \emph{prediction}~\cite{Koushik05}.
While earlier work on data race prediction focused on explicit~\cite{Koushik05}
or symbolic~\cite{rvpredict,Said11} enumeration, recent efforts have focused
on scalability~\cite{Smaragdakis12,Mathur18,Kini17,Pavlogiannis2020,Roemer18,Sulzmann2020}.
The more recent notion~\cite{Mathur21} of sync-preserving races
generalizes the notion of $\HB$.
As the complexity of race prediction is prohibitive ($\NPhard$ in general~\cite{Mathur2020b}), this work characterizes the fine-grained complexity of popular, more relaxed notions of dynamic races that take polynomial time.


\Paragraph{Fine-grained complexity.}
Traditional complexity theory usually shows a problem is intractable 
by proving it NP-hard, and tractable by showing it is in $\PTime$. 
For algorithms with large input sizes, this distinction may be too coarse.
It becomes important to understand, even for problems in $\PTime$, 
whether algorithms with smaller degree polynomials than the known are possible, 
or if there are fine-grained lower bounds making this unlikely. 
Fine-grained complexity involves proving such lower bounds, by showing relationships between problems in $\PTime,$ with an emphasis on the degree of the complexity polynomial,
and is nowadays a field of very active study. We refer to~\cite{Bringmann2019} for an introductory, and to~\cite{Williams18} for a more extensive exposition on the topic.
Fine-grained arguments have also been instrumental in
characterizing the complexity of various problems in concurrency, such as bounded context-switching~\cite{Chini2017}, safety verification~\cite{Chini2018}, data-race prediction~\cite{Mathur2020b} and consistency checking~\cite{Chini2020}.


\section{Preliminaries}


\subsection{Concurrent Program Executions and Data Races}

\myparagraph{Traces and Events}{
We consider execution traces (or simply \emph{traces})
generated by concurrent programs, under the sequential consistency memory model.
Under this memory model, a trace $\tr$ is a sequence of events.
Each event $e$ is labeled with a tuple $\lab(e) = \ev{t, op}$,
where $t$ is the (unique) identifier of the thread that performs the event $e$,
and $op$ is the operation performed in $e$.
We will often abuse notation and write $e = \ev{t, op}$
instead of $\lab(e) = \ev{t, op}$.
For the purpose of this presentation, an operation can be one of
\begin{enumerate*}[label=(\alph*)]
	\item read ($\rd(x)$) from, or write ($\wt(x)$) to, a shared memory variable $x$,
	\item $\acq(\lk)$ or $\rel(\lk)$ of a lock $\lk$.
\end{enumerate*}

For an event $e = \ev{t, op}$, we use $\ThreadOf{e}$ and $\OpOf{e}$ 
to denote respectively the thread identifier $t$ and the operation $op$.
For a trace $\tr$, we use $\events{\tr}$ to denote the set of events that appear
in $\tr$.
Similarly, we will use $\threads{\tr}$, $\locks{\tr}$ and $\vars{\tr}$
to denote respectively the set of threads, locks and shared variables
that appear in trace $\tr$.
We denote by $\NumEvents=|\events{\tr}|$, $\NumThreads=|\threads{\tr}|$, 
$\NumLocks=|\locks{\tr}|$, and $\NumVariables=|\vars{\tr}|$.
The set of read events and write events on variable $x \in \vars{\tr}$
will be denoted by $\reads{\tr}(x)$ and $\writes{\tr}(x)$,
and further we let $\accesses{\tr}(x) = \reads{\tr}(x) \cup \writes{\tr}(x)$.
Similarly, we let $\acquires{\tr}(\lk)$ and $\releases{\tr}(\lk)$ denote the set of lock-acquire and lock-release events, respectively, of $\tr$ on lock $\lk$.
The \emph{trace order} of $\tr$, denoted $\trord{\tr}$,
is the total order on $\events{\tr}$ induced by the sequence $\tr$.
Finally, the \emph{thread-order} of $\tr$, denoted $\tho{\tr}$
is the smallest partial order on $\events{\tr}$
such that for any two events $e_1, e_2 \in \events{\tr}$,
if $e_1 \trord{\tr} e_2$ and $\ThreadOf{e_1} = \ThreadOf{e_2}$,
then $e_1 \tho{\tr} e_2$.

Traces are assumed to be well-formed in that critical sections
on the same lock do not overlap.
For a lock $\lk \in \locks{\tr}$, let $\proj{\tr}{\lk}$ be the projection of the trace $\tr$ 
on the set of events $\setpred{e}{\OpOf{e} \in \set{\acq(\lk), \rel(\lk)}}$.
Also, let $t_1, \ldots t_k$ be the thread identifiers in $\threads{\tr}$.
Well-formedness then entails that for each lock $\lk$,
the projection $\proj{\tr}{\lk}$ is a prefix of some string
in the language of the grammar with production rules
$S \rightarrow \varepsilon | S \cdot S_{t_1} | S \cdot S_{t_2} | \cdots | S \cdot S_{t_k}$
and
$S_{t_i} \rightarrow \ev{t_i, \acq(\lk)} \cdot \ev{t_i, \rel(\lk)}$
and start symbol $S$.
Thus, every release event $e$ has a unique matching acquire event, which
we denote by $\match{\tr}(e)$.
Likewise for an acquire event $e$, $\match{\tr}(e)$ denotes the unique
matching release event if one exists.
For an acquire event $e$, the critical section of $e$
is the set of events $\crit{\tr}(e)=\setpred{f}{e \tho{\tr} f \tho{\tr} \match{\tr}(e)}$
if $\match{\tr}(e)$ exists, and $\crit{\tr}(e)=\setpred{f}{e \tho{\tr} f}$ otherwise.
}

\myparagraph{Data Races}{
	Two events $e_1, e_2 \in \events{\tr}$ are said to be \emph{conflicting}
	if they are performed by different threads,
	they are access events touching the same memory location, and at least
	one of them is a write access.
	Formally, we have (i)~$\ThreadOf{e_1} \neq \ThreadOf{e_2}$,
	(ii)~$e_1, e_2\in \accesses{\tr}(x)$ for some $x\in \vars{\tr}$, and
	(iii)~$\{e_1, e_2\}\cap \writes{\tr}(x)\neq \emptyset$.	
	An event $e \in \events{\tr}$ is said to be $\emph{enabled}$ in a prefix
	$\rho$ of $\tr$, if for every event $e' \neq e$ with $e' \tho{\tr} e$,
	we have $e' \in \events{\rho}$.
	A data race in $\tr$ is a pair of conflicting events $(e_1, e_2)$
	such that there is a prefix $\rho$ in which both $e_1$ and $e_2$
	are simultaneously enabled.
	Depending on the type of access of $e_1$ and $e_2$,
	we often distinguish between write-write races and write-read races.
	}


\subsection{Notions of Dynamic Data Races}

As the problem of determining whether a concurrent program has an execution with a data race is undecidable, 
dynamic techniques observe program traces and report whether certain events indicate the presence of a race.
Depending on the technique, such reports can be sound (i.e., they guarantee the presence of a race in the program),
Here we describe in detail some popular approaches to dynamic race detection that are the subject of this work.

\Paragraph{Happens-Before Races.}
Given a trace $\tr$, the \emph{happens before} order $\hb{\tr}$ is the smallest partial order on $\events{\tr}$ such that
\begin{enumerate*}[(a)]
	\item $\tho{\tr} \subseteq \hb{\tr}$, and
	\item for any lock $\lk \in \locks{\tr}$ and for events $e \in \acquires{\tr}(\lk)$
	and $f \in \releases{\tr}(\lk)$,
	if $e \trord{\tr} f$ then $e \hb{\tr} f$.
\end{enumerate*}
A pair of conflicting events $(e_1, e_2)$
is an $\HB$-race in $\tr$ if they are unordered by $\HB$, i.e., 
$e_1 \not\hb{\tr} e_2$ and $e_2 \not\hb{\tr} e_1$.
The associated decision question is, 
\emph{given a trace $\tr$, determine whether $\tr$ has an $\HB$ race}.
Typically $\HB$ race detectors are tasked to report all events that form $\HB$ race with an earlier event in the trace~\cite{threadsanitizer,intel-inspector,helgrind}).
That is, they solve the following function problem:\emph{given a trace $\tr$, determine 
all events $e_2\in \events{\tr}$ for which there exists an event $e_1\in \events{\tr}$ such that
$e_1 \trord{\tr} e_2$, and $(e_1, e_2)$ is an $\HB$ race of $\tr$}.
The standard algorithm for solving both versions of the problem is a vector-clock algorithm
that runs in $O(\NumEvents\cdot \NumThreads)$ time~\cite{djit1999}.

\Paragraph{Synchronization Preserving Races.}
Next, we present the notion of \emph{sync(hronization)-preserving races}~\cite{Mathur2020b}.
For a trace $\tr$ and a read event $e$,
we use $\lw{\tr}(e)$ to denote the write event observed by $e$.
That is, $e' = \lw{\tr}(e)$ is the last (according to the trace order $\trord{\tr}$) 
write event $e'$ of $\tr$ such that $e$ and $e'$ access the same variable and $e' \trord{\tr} e$; 
if no such $e'$ exists, then we write $\lw{\tr}(e) = \bot$.
A trace $\rho$ is said to be a correct reordering of trace $\tr$, if 
\begin{enumerate*}[label=(\alph*)]
\item $\events{\rho} \subseteq \events{\tr}$
\item $\events{\rho}$ is downward closed with respect to $\tho{\tr}$, 
and further $\tho{\rho} \subseteq \tho{\tr}$, and
\item for every read event $e \in \events{\rho}$, 
$\lw{\rho}(e) = \lw{\tr}(e)$.
\end{enumerate*}
We say that $\rho$ is \emph{sync-preserving} with respect to
$\tr$ if for every lock $\lk$ and for any two acquire events 
$e_1, e_2 \in \acquires{\rho}(\lk)$,
we have $e_1 \trord{\rho} e_2$ iff $e_1 \trord{\tr} e_2$.
That is, the order of two critical sections on the same
lock is the same in $\tr$ and  $\rho$.

A pair of conflicting events $(e_1, e_2)$ is a \emph{sync-preserving race} in $\tr$ if $\tr$ has a sync-preserving correct reordering $\rho$ such that $(e_1, e_2)$ is a data race of $\rho$.
The associated decision question is, 
\emph{given a trace $\tr$, determine whether $\tr$ has a sync-preserving race}.
As with $\HB$ races, we are typically interested in reporting  
all events $e_2\in \events{\tr}$ for which there exists an event $e_1\in \events{\tr}$ 
such that $e_1 \trord{\tr} e_2$, and $(e_1, e_2)$ is an $\HB$ race of $\tr$.
It is known one can report all such events $e_2$ in time 
$O(\NumEvents\cdot \NumVariables \cdot \NumThreads^3)$.

\Paragraph{Lock-Cover and Lock-Set Races.}
Lock-cover and lock-set races indicate violations of the \emph{locking discipline}.
For an event $e$ in a trace $\tr$, let 
$\lheld{\tr}(e) = \setpred{\lk}{ \exists f \in \acquires{\tr}(\lk), \text{ such that } e \in \crit{\tr}(f)}$,
i.e., $\lheld{\tr}(e)$ is the set of locks held by thread $\ThreadOf{e}$ when $e$ is executed.
A pair of conflicting events might indicate a data race if $\lheld{\tr}(e_1)\cap \lheld{\tr}(e_2)=\emptyset$.
Although this condition does not guarantee the presence of a race, it constitutes a violation of the locking discipline and can be further investigated.

A pair of conflicting events $(e_1, e_2)$ is a \emph{lock-cover race} if $\lheld{\tr}(e_1)\cap \lheld{\tr}(e_2)=\emptyset$.
The decision question is, \emph{given a trace $\tr$, determine if $\tr$ has a lock-cover race}.
The problem is solvable in $O(\NumEvents^2\cdot \NumLocks)$ time, by checking the above condition over all conflicting event pairs.

As the algorithm for lock-cover races takes quadratic time, 
developers often look for less expensive indications of violations of locking discipline, called lock-set races (as proposed by \eraser race detector~\cite{Savage97}).
A trace $\tr$ has a \emph{lock-set race} on variable $x\in \vars{\tr}$ if
\begin{compactenum}[(a)]
\item there exists a pair of conflicting events $(e_1, e_2)\in \writes{\tr}(x)\times \accesses{\tr}(x)$, and
\item $\bigcap_{e \in \accesses{\tr}(x)} \lheld{\tr}(e) = \emptyset$.
\end{compactenum}
The associated decision question is, \emph{given a trace $\tr$, determine if $\tr$ has a lock-set race}.
Note that a lock-cover race implies a lock-set race, but not the other way around.
On the other hand, determining whether $\tr$ has a lock-set race is easily performed in $O(\NumEvents\cdot \NumLocks)$ time.


\begin{figure}[t]
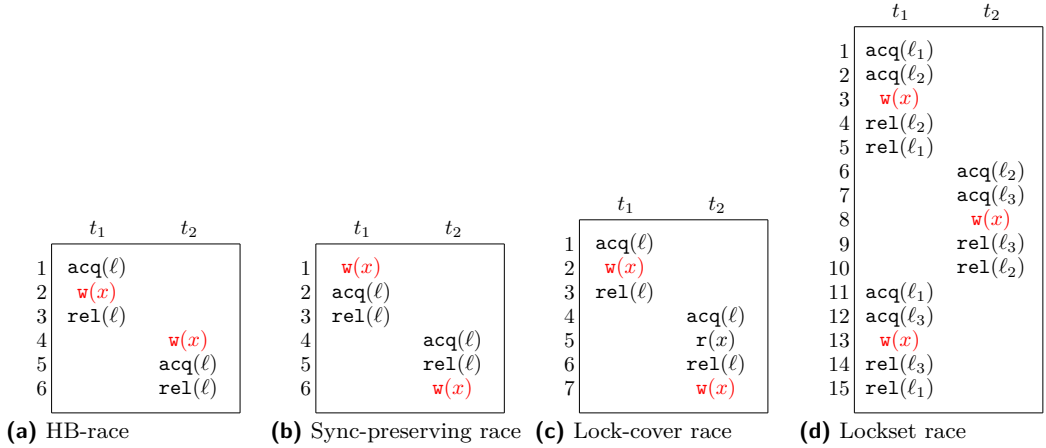

\begin{subfigure}[b]{0.24\textwidth}
\centering
\scalebox{0.8}{
\execution{2}{
	\figev{1}{$\acq(\lk)$}
	\figev{1}{$\color{red} \wt(x)$}
	\figev{1}{$\rel(\lk)$}
	\figev{2}{$\color{red} \wt(x)$}
	\figev{2}{$\acq(\lk)$}
	\figev{2}{$\rel(\lk)$}
}
}
\end{subfigure}
\begin{subfigure}[b]{0.24\textwidth}
\centering
\scalebox{0.8}{
\execution{2}{
	\figev{1}{$\color{red} \wt(x)$}
	\figev{1}{$\acq(\lk)$}
	\figev{1}{$\rel(\lk)$}
	\figev{2}{$\acq(\lk)$}
	\figev{2}{$\rel(\lk)$}
	\figev{2}{$\color{red} \wt(x)$}
}
}
\end{subfigure}
\begin{subfigure}[b]{0.24\textwidth}
\scalebox{0.8}{
\execution{2}{
	\figev{1}{$\acq(\lk)$}
	\figev{1}{$\color{red} \wt(x)$}
	\figev{1}{$\rel(\lk)$}
	\figev{2}{$\acq(\lk)$}
	\figev{2}{$\rd(x)$}
	\figev{2}{$\rel(\lk)$}
	\figev{2}{$\color{red} \wt(x)$}
}
}
\end{subfigure}
\begin{subfigure}[b]{0.24\textwidth}
\scalebox{0.8}{
\execution{2}{
	\figev{1}{$\acq(\lk_1)$}
	\figev{1}{$\acq(\lk_2)$}
	\figev{1}{$\color{red} \wt(x)$}
	\figev{1}{$\rel(\lk_2)$}
	\figev{1}{$\rel(\lk_1)$}
	\figev{2}{$\acq(\lk_2)$}
	\figev{2}{$\acq(\lk_3)$}
	\figev{2}{$\color{red} \wt(x)$}
	\figev{2}{$\rel(\lk_3)$}
	\figev{2}{$\rel(\lk_2)$}
	\figev{1}{$\acq(\lk_1)$}
	\figev{1}{$\acq(\lk_3)$}
	\figev{1}{$\color{red} \wt(x)$}
	\figev{1}{$\rel(\lk_3)$}
	\figev{1}{$\rel(\lk_1)$}
}
}
\end{subfigure}\\
\begin{subfigure}{0.24\textwidth}
\caption{HB-race}
\figlabel{ex-hb}
\end{subfigure}
\begin{subfigure}{0.24\textwidth}
\caption{Sync-preserving race}
\figlabel{ex-syncp}
\end{subfigure}
\begin{subfigure}{0.24\textwidth}
\caption{Lock-cover race}
\figlabel{ex-lock-cover}
\end{subfigure}
\begin{subfigure}{0.24\textwidth}
\caption{Lockset race}
\figlabel{ex-lockset}
\end{subfigure}
\caption{Types of data races.}
\figlabel{ex-races}
\end{figure}

\Paragraph{Example.}
We illustrate the different notions of races in \figref{ex-races}.
We use $e_i$ to denote the $i^\text{th}$ event of the trace in consideration.
First consider the trace $\tr_a$ in \figref{ex-hb}.
The events $e_2$ and $e_4$
are conflicting and unordered by $\hb{\tr_a}$, thus $(e_2, e_4)$ is an $\HB$-race.
Second, in trace $\tr_b$  of \figref{ex-syncp}, the pair $(e_1, e_6)$ is not an $\HB$-race
as $e_1 \hb{\tr_b} e_6$.
But this is a sync-preserving race witnessed by the correct reordering $e_4,e_5$, as both $e_1$ and $e_6$ are enabled.
Third, in trace $\tr_c$ of \figref{ex-lock-cover}, the pair $(e_2, e_7)$ is neither a sync-preserving race nor an $\HB$ race, but is a lock-cover race as $\lheld{\tr_c}(e_2) \cap \lheld{\tr_c}(e_7) = \emptyset$.
Finally, the trace $\tr_d$ in \figref{ex-lockset} has no $\HB$,
sync-preserving or lock-cover race, as all $\wt(x)$ are protected by a common lock.
But there is a lock-set race on $x$ as there is no single lock that protects all $\wt(x)$.

\section{Happens-Before Races}
\seclabel{hb}

In this section we prove the results for detecting $\HB$ races, i.e., \thmref{hb-ov-hard}~to \thmref{hb-algo}.


\subsection{Algorithm for $\HB$ Races}
\seclabel{hb-upper-bound}

In this section, we outline our $O(\NumEvents \cdot \NumLocks)$-time
algorithm for checking if  a trace $\tr$ has an $\HB$-race,
thereby proving \thmref{hb-algo}.
As with the standard vector clock algorithm~\cite{djit1999}, 
our algorithm is based on computing timestamps for each event.
However, unlike the standard algorithm that assigns thread-indexed timestamps, 
we use \emph{lock-indexed} timestamps, or \emph{lockstamps}, which we formalize next.
We fix the input trace $\tr$ in the rest of the discussion.

\Paragraph{Lockstamps}{
A lockstamp is a mapping from locks to natural numbers 
(including infinity) $L : \locks{\tr} \to \Natsinf$.
Given lockstamps $L, L_1, L_2$ and lock $\lk$, we use
the notation 
\begin{enumerate*}[label=(\roman*)]
\item $L[\lk \mapsto c]$ to denote the the lockstamp $\lambda m {\cdot} \text{ if } m = \lk \text{ then } c \text{ else } L(m)$,
\item $L_1 \mx L_2$ to denote the pointwise maximum, i.e., $(L_1\mx L_2)(\lk) = \max(L_1(\lk), L_2(\lk))$ for every $\lk$,
\item $L_1 \mn L_2$ to denote the pointwise minimum,
and 
\item $L_1 \cle L_2$ to denote the predicate $\forall \lk {\cdot} L_1(\lk) \leq L_2(\lk)$.
\end{enumerate*}
}

Our algorithm computes \emph{acquire} and \emph{release} lockstamps $\acqls{\tr}_e$
and $\rells{\tr}_e$ for every event $e \in \events{\tr}$.
Let us formalize these next.
For a lock $\lk$ and acquire event $f \in \acquires{\tr}(\lk)$
(resp. release event $g \in \releases{\tr}(\lk)$), 
let $pos_\tr(f) = |\setpred{f' \in \acquires{\tr}(\lk)}{f' \trord{\tr} f}|$ 
(resp. $pos_\tr(g) = |\setpred{f' \in \releases{\tr}(\lk)}{f' \trord{\tr} f}|$) 
denote the relative position of $f$ (resp. $g$)
among all acquire events (resp. release events) of $\lk$.
Then, for an event $e \in \events{\tr}$ 
the lockstamps $\acqls{\tr}_e$ and $\rells{\tr}_e$
are defined as follows (we assume that $\max\emptyset = 0$ and $\min\emptyset = \infty$.)
\begin{equation}
\begin{array}{rcl}
\acqls{\tr}_e(\lk) &=& \lambda \lk \cdot \max\setpred{pos_\tr(f)}{f \in \acquires{\tr}(\lk), f \hb{\tr} e} \\
\quad \\
\rells{\tr}_e(\lk) &=& \lambda \lk \cdot \min\setpred{pos_\tr(f)}{f \in \releases{\tr}(\lk), e \hb{\tr} f}
\end{array}
\end{equation}

Our $O(\NumEvents \cdot \NumLocks)$ algorithm now
relies on the following observations. 
First, the $\HB$ partial order can be inferred by comparing lockstamps of events (\lemref{hb-isomorphic-lockstamp}).
Second, there is an $O(\NumEvents \cdot \NumLocks)$ time algorithm
that computes the acquire and release lockstamps for each event in the input trace.
Third, the existence of an $\HB$ race can be determined
by examining only $O(\NumEvents)$ pairs of conflicting events (using their lockstamps),
instead of all possible $O(\NumEvents^2)$ pairs 
(\lemref{num-consecutive}).
Finally, we can also examine all the $O(\NumEvents)$ pairs in
time $O(\NumEvents \cdot \NumLocks)$ (using $O(\NumEvents)$ lockstamp comparisons)
and thus determine the existence of an $\HB$ race in the same asymptotic running time.
Let us first state how we use lockstamps to infer the $\HB$ relation.

\begin{restatable}{lemma}{hbisomorphiclockstamp}\lemlabel{hb-isomorphic-lockstamp}
Let $e_1 \trord{\tr} e_2$ be events in $\tr$ such that $\ThreadOf{e_1} \neq \ThreadOf{e_2}$.
We have, $e_1 \hb{\tr} e_2 \iff \neg (\acqls{\tr}_{e_2} \cle \rells{\tr}_{e_1})$
\end{restatable}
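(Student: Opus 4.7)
The plan is to understand the condition $\neg(\acqls{\tr}_{e_2} \cle \rells{\tr}_{e_1})$ concretely: it asserts the existence of a lock $\lk$ together with an acquire $a$ of $\lk$ with $a \hb{\tr} e_2$ and a release $r$ of $\lk$ with $e_1 \hb{\tr} r$, such that $a$ sits strictly later among the acquires of $\lk$ than $r$ sits among the releases of $\lk$. Well-formedness of $\tr$ guarantees that critical sections on $\lk$ appear as a non-overlapping sequence $a_1 r_1 a_2 r_2 \dots$, so the $pos_\tr(\cdot)$ indices on $\lk$ are monotone in trace order; in particular, $pos_\tr(r) < pos_\tr(a)$ is equivalent to $r \trord{\tr} a$ when $r$ is a release and $a$ an acquire of the same lock. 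This equivalence is the bridge between the lockstamp inequality and the release/acquire synchronization that generates $\hb{\tr}$.

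For the $(\Leftarrow)$ direction I pick a witnessing lock $\lk$, let $a$ and $r$ achieve the maximum and minimum in $\acqls{\tr}_{e_2}(\lk)$ and $\rells{\tr}_{e_1}(\lk)$ respectively, and note that $pos_\tr(r) < pos_\tr(a)$ combined with the non-overlap property of critical sections on $\lk$ yields $r \trord{\tr} a$. The lock-induced edge of $\hb{\tr}$ between $r$ and $a$ is then available, so transitivity combined with $e_1 \hb{\tr} r$ and $a \hb{\tr} e_2$ delivers $e_1 \hb{\tr} e_2$.

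For the $(\Rightarrow)$ direction I exploit that $\hb{\tr}$ is the transitive closure of $\tho{\tr}$ together with the same-lock release/acquire edges, and therefore any proof of $e_1 \hb{\tr} e_2$ can be turned into a chain $e_1 = u_0, u_1, \dots, u_m = e_2$ whose consecutive pairs are generator edges. Since $\ThreadOf{e_1} \neq \ThreadOf{e_2}$, at least one step in this chain must cross threads, which can happen only via a lock-induced edge on some common lock $\lk$; such a step exhibits a release $r$ and an acquire $a$ of $\lk$ with $e_1 \hb{\tr} r$ from the prefix of the chain, $a \hb{\tr} e_2$ from its suffix, and $r \trord{\tr} a$ on the given lock. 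The lockstamp definitions give $\rells{\tr}_{e_1}(\lk) \leq pos_\tr(r)$ and $\acqls{\tr}_{e_2}(\lk) \geq pos_\tr(a)$, and position monotonicity under well-formedness gives $pos_\tr(r) < pos_\tr(a)$, so $\rells{\tr}_{e_1}(\lk) < \acqls{\tr}_{e_2}(\lk)$ and hence $\neg(\acqls{\tr}_{e_2} \cle \rells{\tr}_{e_1})$.

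The main obstacle is the structural analysis of $\hb{\tr}$-derivations in the $(\Rightarrow)$ direction, namely arguing that any derivation of $e_1 \hb{\tr} e_2$ between events in distinct threads must pass through at least one concrete lock-induced edge that crosses threads. This is a routine induction on the length of a minimal generating chain, but relies on the fact that lock-induced edges are the sole source of cross-thread generators of $\hb{\tr}$; once this structural fact and the $pos_\tr$-monotonicity derived from well-formedness are in hand, the remainder of the argument is a direct bookkeeping translation between trace positions and lockstamp entries.
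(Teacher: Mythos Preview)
Your proposal is correct and follows essentially the same approach as the paper's proof: both directions hinge on decomposing $\hb{\tr}$ into a chain of thread-order and release-to-acquire generator edges, locating a cross-thread lock edge (guaranteed by $\ThreadOf{e_1}\neq\ThreadOf{e_2}$), and translating between $pos_\tr$ comparisons and the lockstamp extrema via the well-formedness of critical sections. Your write-up is slightly more explicit than the paper's about the $pos_\tr$-monotonicity coming from the alternating $a_1 r_1 a_2 r_2\dots$ structure, but the underlying argument is the same.
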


\myparagraph{Computing Lockstamps}{
We now illustrate how to compute the acquire 
lockstamps for all events, by processing the trace $\tr$ in a forward  pass. 
For each thread $t$ and lock $\lk$, we maintain lockstamp variables $\CTC_t$ and $\CTL_\lk$.
We also maintain an integer variable $\idxlk_\lk$
for each lock $\lk$ that stores the index of 
the latest $\acq(\lk)$ event in $\tr$.
Initially, we set each $\CTC_t$ and $\CTL_m$ to the \emph{bottom}
map $\lambda \lk \cdot 0$, and $\idxlk_m$ to $0$, for each thread $t$
and lock $m$.
We traverse $\tr$ left to right,
and perform updates to the data structures as described in \algoref{assign-acquire-ls},
by invoking the appropriate \emph{handler}
based on the thread and operation of the current event $e = \ev{t, op}$.
At the end of each handler, we assign the lockstamp $\acqls{\tr}_e$ to $e$.
The computation of release
lockstamps is similar, albeit in a reverse pass, and presented in \appref{proofs_hb_ub}.
Observe that each step takes $O(\NumLocks)$ time giving us a total
running time of $O(\NumEvents \cdot \NumLocks)$ to assign lockstamps.
}


{
	\footnotesize
\begin{algorithm*}[h]
\begin{multicols}{3}

\myhandler{\acqhandler{$t$, $\lk$}}{
	$\idxlk_\lk\gets \idxlk_\lk + 1$ \;
	$\CTC_t\gets \CTC_t[\lk \mapsto \idxlk_\lk] \mx\CTL_\lk$ \;
	$\acqls{\tr}_e \gets \CTC_t$
}

\myhandler{\relhandler{$t$, $\lk$}}{
	$\CTL_\lk\gets \CTC_t$ \;
	$\acqls{\tr}_e \gets \CTC_t$
}

\myhandler{\rdhandler{$t$, $x$}}{
	$\acqls{\tr}_e \gets \CTC_t$
}

\BlankLine

\myhandler{\wthandler{$t$, $x$}}{
	$\acqls{\tr}_e \gets \CTC_t$
}

\end{multicols}
\normalsize
\caption{\textit{Assigning acquire lockstamps to events in the trace}}
\algolabel{assign-acquire-ls}
\SetAlgoInsideSkip{medskip}
\end{algorithm*}
\normalsize
}

We say that a pair of conflicting access events $(e_1, e_2)$ (with $e_1 \trord{\tr} e_2$)
to a variable $x$ is a \emph{consecutive conflicting pair} if there is no 
event $f \in \writes{\tr}(x)$ such that $e_1 \stricttrord{\tr} f \stricttrord{\tr} e_2$.
We make the following observation.


\begin{restatable}{lemma}{numconsecutive}\lemlabel{num-consecutive}
A trace $\tr$ has an $\HB$-race iff there is pair of consecutive conflicting events in $\tr$ that is an $\HB$-race.
Moreover, $\tr$ has at most $O(\NumEvents)$ many consecutive conflicting pairs of events.
\end{restatable}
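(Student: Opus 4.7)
The plan is to handle the two parts of the lemma separately. For the equivalence in part (1), I would induct on the number of writes to $x$ lying strictly between $e_1$ and $e_2$; the base case (zero such writes) is exactly a consecutive conflicting pair, so it suffices to show that, given a non-consecutive conflicting HB-race $(e_1, e_2)$, one can always exhibit a conflicting HB-race strictly closer in this measure. To produce the closer pair, I would pick any write $f \in \writes{\tr}(x)$ with $e_1 \stricttrord{\tr} f \stricttrord{\tr} e_2$ and case on $\ThreadOf{f}$.

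If $\ThreadOf{f} = \ThreadOf{e_1}$, thread-order forces $e_1 \hb{\tr} f$, so if additionally $f \hb{\tr} e_2$, transitivity would give $e_1 \hb{\tr} e_2$, contradicting the HB-race hypothesis. Hence $f \not\hb{\tr} e_2$; since $\hb{\tr}$ is contained in $\trord{\tr}$ and $f \trord{\tr} e_2$, the pair $(f, e_2)$ is HB-unordered, conflicting (as $f$ is a write on $x$ in a different thread from $e_2$), and has strictly fewer writes to $x$ between its endpoints. The case $\ThreadOf{f} = \ThreadOf{e_2}$ is symmetric via the pair $(e_1, f)$. In the remaining case $\ThreadOf{f} \notin \{\ThreadOf{e_1}, \ThreadOf{e_2}\}$, neither shorter pair can be declared HB-unordered a priori, but at least one of $(e_1, f)$, $(f, e_2)$ must be, since $e_1 \hb{\tr} f$ and $f \hb{\tr} e_2$ together would contradict $(e_1, e_2)$ being a race; either way, a closer HB-race is obtained and the induction goes through.

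For part (2), the plan is a counting argument, one variable at a time. Fix $x \in \vars{\tr}$, list $\accesses{\tr}(x)$ in trace order, and split the list at its writes, writing it as $R_0 W_1 R_1 W_2 \cdots W_{w_x} R_{w_x}$, where each $R_i$ is a (possibly empty) block of reads on $x$ and each $W_i$ is a write. Because a consecutive conflicting pair contains no write of $x$ strictly between its endpoints, and conflicting events include at least one write, such a pair must take one of the three forms $(W_i, r)$ with $r \in R_i$, $(r, W_i)$ with $r \in R_{i-1}$, or $(W_i, W_{i+1})$. Charging each pair to the write among its endpoints that appears earliest, write $W_i$ is charged at most $|R_i| + |R_{i-1}| + 1$ times, yielding $O(|\accesses{\tr}(x)|)$ consecutive conflicting pairs on $x$. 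Summing over all variables and using $\sum_x |\accesses{\tr}(x)| \le \NumEvents$ gives $O(\NumEvents)$.

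The main obstacle is the three-thread subcase of part (1): unlike the other subcases, one cannot single out in advance which of the two shorter pairs is the HB-race, and must instead invoke a transitivity-based contradiction to argue that at least one of them is. Everything else reduces to careful bookkeeping on top of this observation, together with the straightforward block-decomposition underlying part (2).
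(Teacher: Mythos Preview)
Your proposal is correct and takes essentially the same approach as the paper: both parts rest on the same transitivity-of-$\hb{\tr}$ contradiction for part~(1) and the same per-event charging for part~(2). The only differences are presentational---you phrase part~(1) as an induction on the number of intermediate writes (with an explicit thread case-split) where the paper instead picks the lexicographically first $\HB$-race and derives the same contradiction, and you use a block decomposition for part~(2) where the paper charges each pair directly to one of its endpoints.
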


\myparagraph{Checking for an $\HB$ race}{
We now describe the algorithm for checking for an $\HB$ race in $\tr$.
We perform a forward pass on $\tr$ while
storing the release lockstamps of some of the earlier events.
When processing an access event $e$, we check if it is in race with
an earlier event by comparing the acquire lockstamp of $e$
with a previously stored release lockstamp.
More precisely, we maintain a variable $\VCW_x$ to store the release lockstamp
of the last write event on $x$, a variable $t^w_x$ to store the thread 
that performed this write
and set $\setrdls_x$ to store pairs $(t, L)$
of threads and release lockstamps of all the read
events performed since the last write on $x$ was observed.
Initially, $t^w_x = \texttt{NIL}$, $\VCW_x=\lambda \lk \cdot \infty$
and $\setrdls_x = \emptyset$.
The update performed at each event $e = \ev{t, op}$ are presented in 
the corresponding handler in \algoref{check-hb-race}.


{
\footnotesize
\begin{algorithm*}[h]
\begin{multicols}{2}
\myhandler{\rdhandler{$t$, $x$}}{
	\If{$t^w_x \not\in \set{\texttt{NIL}, t} \land \acqls{\tr}_e \cle \VCW_x$}{
		\declare `race' and \exit
	}
	$\setrdls_x \gets \setrdls_x \cup \set{(t, \rells{\tr}_e)}$
}

\myhandler{\wthandler{$t$, $x$}}{
	\If{$t^w_x \not\in \set{\texttt{NIL}, t} \land \acqls{\tr}_e \cle \VCW_x$}{
		\declare `race' and \exit
	}
	\If{$\exists (u, L) \in \setrdls_x, t \neq u \land \acqls{\tr}_e \cle L$}{
	\declare `race' and \exit
	}
	$t^w_x = t$;
	$\setrdls_x \gets \emptyset$; $\VCW_x \gets \rells{\tr}_e$
}

\end{multicols}
\normalsize
\caption{\textit{Determining the existence of an $\HB$-race using lockstamps}}
\algolabel{check-hb-race}
\SetAlgoInsideSkip{medskip}
\end{algorithm*}
\normalsize
}

}

We refer to \appref{proofs_hb_ub} for the correctness, which concludes the proof of \thmref{hb-algo}.

\subsection{Hardness Results for $\HB$}
\seclabel{hb-lower-bound}

We now turn our attention to the hardness results for $\HB$ race detection.
To this end, we prove \thmref{hb-ov-hard}, \thmref{hb-decide-no-seth-better-than-3/2}, and \thmref{hb-decide-mc-hard}.
We start with defining the graph $\hbgraph$, which can be thought of as a form of transitive reduction of the $\HB$ relation.

\myparagraph{The graph $\hbgraph$}{  
Given a trace $\tr$, the graph $\hbgraph$ is a graph with node set $\events{\tr}$, and we have
an edge $(e_1, e_2)$ in $\hbgraph$ iff
\begin{enumerate*}[label=(\roman*)]
\item $e_2$ is the immediate successor of $e_1$ wrt the thread order $\tho{\tr}$, or
\item $e_1$ is a $\rel(\lk)$ event, $e_2$ is a $\acq(\lk)$ event, $e_1\trord{\tr}e_2$, and there is no intermediate event in $\tr$ that accesses lock $\lk$.
\end{enumerate*}
It follows easily that for any two distinct events $e_1, e_2$, we have $e_1\hb{\tr}e_2$ iff $e_2$ is reachable from $e_1$ in $\hbgraph$.
Moreover, every node has out-degree $\leq 2$ and thus $\hbgraph$ is sparse, while it can be easily constructed in $O(\NumEvents)$ time.
}


\begin{figure}
\begin{subfigure}{.2\textwidth}
\scalebox{0.8}{
\OVTwo{
\ovX{101}
\ovX{100}
\ovX{010}
}{
\ovY{111}
\ovY{011}
\ovY{110}
}{
}
}
\caption*{\orthv{} instance} 
\end{subfigure}
\begin{subfigure}{.5\textwidth}
\scalebox{0.8}{
\leanExecIndex{(1,0)}{
	\figev{1}{$w(z)$}
	\figev{1}{$\critsec(\lk^x)$}
}
\leanExecIndex{(1,1)}{
	\figevoffset{3}{1}{$\critsec(\lk^x)$}
	\figevoffset{4}{1}{$\critsec(\lk_1)$}
}
\leanExecIndex{(1,2)}{
	\figevoffset{7}{1}{$\critsec(\lk^x)$}
}
}
\end{subfigure}
\begin{subfigure}{.2\textwidth}
\scalebox{0.8}{
\leanExecIndex{(1,3)}{
	\figevoffset{9}{1}{$\critsec(\lk^x)$}    \figevoffset{10}{1}{$\critsec(\lk_3)$}
	\figevoffset{11}{1}{$\critsec(\lk_{(y_1,3)})$}
	\figevoffset{12}{1}{$\critsec(\lk_{(y_2,3)})$}
	\figevoffset{13}{1}{$\critsec(\lk_{(y_3,3)})$}
}
}
\end{subfigure}
\caption{Reducing \orthv{} to detecting a write-read $\HB$-races.
Illustration of the threads $t(x,i)$, where $x$ is the first vector of $A_1$.
$\critsec(\ell)$ denotes the sequence $\acq(\ell),\rel(\ell)$.
Event numbers indicate the relative order in which these threads execute in $\tr$.
}
\figlabel{ov-hb-example}
\end{figure}

\Paragraph{\orthv{} hardness of write-read $\HB$ races.}
Given a \orthv{} instance \ovinstwo{} on two vector sets $A_1, A_2$, we create a trace $\tr$ as follows. 
For the part $A_1$ of \orthv{}, we introduce $n\cdot (d+1)$ threads denoted by $t(x,i)$, for $x\in[n], i\in \{0\}\cup [d],$ and $d$ locks, each denoted by $l_i$, for $i\in[d]$.
For the second part $A_2$ we introduce $n\cdot d$ locks denoted by $l(y,i)$, for $y\in [n], i\in [d],$ and $n$ threads, denoted by $t_y$, for $y\in [n]$. 
Finally, we have a single variable $z$.

We first describe the threads $t(x,i)$.
We order the vectors in $A_1$ arbitrarily. 
For each vector $x$, for each $i\in [d]$ with $x[i]=1$, we introduce a critical section on the lock $l_i$. 
If $x$ is the last vector of $A_1$ with $x[i]=1,$ 
we also insert the critical sections $l_{(y,i)}$ for all  $y\in [n]$, to $t(x,i)$ after the critical section of $l_x$. 
Finally, we construct a thread $t_{x,0}$ which starts with a write event $\wt(z)$, 
followed by a critical section on lock $l^x$.
We also insert a critical section on lock $l^x$ to all threads $t(x,i)$, for $i\in [d]$.
Hence the $\wt(z)$ event is ordered by $\HB$ before all other events of $t(x,i)$.
See \figref{ov-hb-example} for an illustration.

Now we describe the threads $t_y$. 
For each $i\in [d],$ if $y[i]=1$, we add a critical section of the lock $l(y,i)$ in $t_y$. 
We end the thread with a read event $\rd(z)$.

Finally, we construct $\tr$ by first executing each thread $t(x,i)$ in the pre-determined order of $x\in A_1$, followed by executing the traces $t_y$ in any order.
See  \figref{ov-hb-graph} for an illustration.
We refer to \cref{sec:proofs_hb} for the correctness, which concludes the proof of \thmref{hb-ov-hard}.
\pgfdeclarelayer{bg}    
\pgfsetlayers{bg,main}  
\begin{figure}[t]
\begin{tikzpicture}[every node/.style={scale=0.8}]

\begin{pgfonlayer}{main}
\node (start1) [vertex] {$w(z)$};
\node (start2) [vertex,right of=start1,xshift=1cm] {$\mathbf{\color{red}w(z)}$};
\node (start3) [vertex,right of=start2,xshift=1cm] {$w(z)$};

\node (lx1) [vertex,below of=start1] {$\critsec(l^{x_1})$};
\node (lx2) [vertex,below of=start2] {$\critsec(l^{x_2})$};
\node (lx3) [vertex,below of=start3] {$\critsec(l^{x_3})$};

\node (x11) [vertex,below of=lx1,xshift=-0.6cm,yshift=-0.7cm] {$\critsec(l_1)$};
\node (x13) [vertex,below of=lx1, xshift=0.5cm] {$\critsec(l_3)$};
\node (Xy13) [vertex,below of=x13] {$\critsec(l_{1,3})$};
\node (Xy23) [vertex,below of=Xy13] {$\critsec(l_{2,3})$};
\node (Xy33) [vertex,below of=Xy23] {$\critsec(l_{3,3})$};

\node (x21) [vertex,below of=lx2] {$\critsec(l_1)$};
\node (Xy11) [vertex,below of=x21] {$\critsec(l_{1,1})$};
\node (Xy21) [vertex,below of=Xy11] {$\critsec(l_{2,1})$};
\node (Xy31) [vertex,below of=Xy21] {$\critsec(l_{3,1})$};

\node (x32) [vertex,below of=lx3] {$\critsec(l_2)$};

\node (Xy12) [vertex,below of=x32] {$\critsec(l_{1,2})$};
\node (Xy22) [vertex,below of=Xy12] {$\critsec(l_{2,2})$};
\node (Xy32) [vertex,below of=Xy22] {$\critsec(l_{3,2})$};

\node (Yy11) [vertex,right of=start3, xshift=3.2cm] {$\critsec(l_{1,1})$};
\node (Yy12) [vertex,below of=Yy11] {$\critsec(l_{1,2})$};
\node (Yy13) [vertex,below of=Yy12] {$\critsec(l_{1,3})$};

\node (Yy22) [vertex,right of=Yy11, xshift=1cm] {$\critsec(l_{2,2})$};
\node (Yy23) [vertex,below of=Yy22] {$\critsec(l_{2,3})$};

\node (Yy31) [vertex,right of=Yy22, xshift=1cm] {$\critsec(l_{3,1})$};
\node (Yy32) [vertex,below of=Yy31] {$\critsec(l_{3,2})$};

\node (stop1) [vertex, below of=Yy13] {$r(z)$};
\node (stop2) [vertex, below of=Yy23] {$\mathbf{\color{red}r(z)}$};
\node (stop3) [vertex, below of=Yy32] {$r(z)$};

\draw [arrow] (start1) -- (lx1);
\draw [arrow] (start2) -- (lx2);
\draw [arrow] (start3) -- (lx3);

\draw [arrow] (lx1) -- (x11);
\draw [arrow] (lx1) -- (x13);
\draw [arrow] (x13) -- (Xy13);
\draw [arrow] (Xy13) -- (Xy23);
\draw [arrow] (Xy23) -- (Xy33);

\draw [arrow] (lx2) -- (x21);
\draw [arrow] (x21) -- (Xy11);
\draw [arrow] (Xy11) -- (Xy21);
\draw [arrow] (Xy21) -- (Xy31);

\draw [arrow] (lx3) -- (x32);
\draw [arrow] (x32) -- (Xy12);
\draw [arrow] (Xy12) -- (Xy22);
\draw [arrow] (Xy22) -- (Xy32);

\draw [arrow] (Yy11) -- (Yy12);
\draw [arrow] (Yy12) -- (Yy13);
\draw [arrow] (Yy13) -- (stop1);

\draw [arrow] (Yy22) -- (Yy23);
\draw [arrow] (Yy23) -- (stop2);

\draw [arrow] (Yy31) -- (Yy32);
\draw [arrow] (Yy32) -- (stop3);
\end{pgfonlayer}

\begin{pgfonlayer}{bg}
\draw [arrow2] (x11) -- (x21);
\draw [arrow2] (Xy12) -- (Yy12);
\draw [arrow2] (Xy31) -- (Yy31);
\draw [arrow2] (Xy23) -- (Yy23);
\draw [arrow2] (Xy13) -- (Yy13);
\draw [arrow2] (Xy11) -- (Yy11);
\draw [arrow2] (Xy22) -- (Yy22);
\end{pgfonlayer}

\node (threadsY) [textbox, right of=lx3,yshift=-3.2cm,xshift=6cm, text width=6.2cm] {\small For vectors in $Y,$ $n$ threads with $n\times d$ locks, where thread $i$ has critical section of lock $(i,k)$ if $y_i[k]=1.$ We end each thread with $\rd(z)$.};

\end{tikzpicture}
\caption{Reducing \orthv{} to finding $\HB$ races using the instance of \figref{ov-hb-example}.
For simplicity, we show the graph $\hbgraph$ instead of the trace $\tr$.
The $\HB$ race is marked in red, corresponding to the orthogonal pair $(x_2, y_2)$.
 }
\figlabel{ov-hb-graph}
\end{figure}

We now turn our attention to the problem of detecting a single $\HB$ race (i.e., not necessarily involving a read event).
We define a useful multi-connectivity problem on graphs.


\begin{problem}\problabel{multi-conn}
[\multiconn{}] Given a directed graph $G$ with $n$ nodes and $m$ edges, and $k$ pairs of nodes $(s_i,t_i),i\in[k],$ decide if there is a path in $G$ from every $s_i$ to the corresponding $t_i.$
\end{problem}

Due to \lemref{num-consecutive}, detecting whether there is an $\HB$ race in $\tr$ reduces to testing \multiconn{}
between all $O(\NumEvents)$ pairs of consecutive conflicting events in $\tr$.

\Paragraph{Short witnesses for $\HB$ races.}
We now prove \thmref{hb-decide-no-seth-better-than-3/2}.
Following \cite[Corollary~2]{CarmosinoGIMPS2016}, it suffices to show that deciding \multiconn{} can be done in $\ntime{\NumEvents^{3/2}}\cap \contime{\NumEvents^{3/2}}$.
At a first glance, the bound $\ntime{\NumEvents^{3/2}}$ may seem too optimistic,
as there are $\Theta(\NumEvents)$ paths $P_i\colon s_i\Path t_i$,
and each of them can have size $\Theta(\NumEvents)$.
Hence even just guessing these paths appears to take quadratic time.
Our proof shows that more succinct witnesses exist.

\begin{proof}[Proof of \thmref{hb-decide-no-seth-better-than-3/2}]
First consider the simpler case where $\tr$ has an $\HB$-race.
Phrased as a \multiconn{} problem on $\hbgraph$, it suffices to show that there is a pair $(s_i,t_i)$ such that $s_i$ does not reach $t_i$.
We construct a non-deterministic algorithm for this task that simply guesses the pair $(s_i,t_i)$, and verifies that there is no $s_i\Path t_i$ path. Since $\hbgraph$ is sparse, this can be easily verified in $O(\NumEvents)$ time.

Now consider the case when there is no $\HB$-race. 
Phrased as a \multiconn{} problem on $\hbgraph$, it suffices to verify that for every pair $(s_i,t_i)$, we have that $s_i$ reaches $t_i$.
We construct a non-deterministic algorithm for this task, as follows.
The algorithm operates in two phases, using a set $A$, initialized as $A=\{(s_i, t_i)\}_{i\in k}$.
\begin{compactenum}
\item In the first phase, the algorithm repeatedly guesses a node $u$ that lies on at least $\NumEvents^{1/2}$ paths $s_i\Path t_i$, for $(s_i, t_i)\in A$.
It verifies this guess via a backward and a forward traversal from $u$.
The algorithm then removes all such $(s_i, t_i)$ from $A$, and repeats.
\item In the second phase, the algorithm guesses for every remaining $(s_i, t_i)\in A$ a path $P_i\colon s_i\Path t_i$, and verifies that $P_i$ is a valid path.
\end{compactenum}
Phase 1 can be execute at most $\NumEvents^{1/2}$ iterations, while each iteration takes $O(\NumEvents)$ time since $\hbgraph$ is sparse.
Hence the total time for phase $1$ is $O(\NumEvents^{3/2})$.
Phase 2 takes $O(\NumEvents^{3/2})$ time, as every node of $\hbgraph$ appears in at most $\NumEvents^{1/2}$ paths $P_i$.
The desired result follows.
\end{proof}

\Paragraph{A super-linear lower bound for general $\HB$ races.}
Finally, we turn our attention to \thmref{hb-decide-mc-hard}.
The problem \fofee{} takes as input a first-order formula $\phi$ with quantifier structure $\forall \exists \exists$ and whose atoms are tuples,
and the task is to verify whether $\phi$ has a model on a structure of $n$ elements and $m$ relational tuples.
For simplicity, we can think of the structure as a graph $G$ of $n$ nodes and $m$ edges,
and $\phi$ a formula that characterizes the presence/absence of edges (e.g., $\phi=\forall x\exists y \exists z~e(x,y) \land \neg e(y,z)$).

The crux of the proof of \thmref{hb-decide-mc-hard} is showing the following lemma.
\begin{restatable}{lemma}{fomulticonn}\lemlabel{fo-multiconn}
\fofee{} reduces to \multiconn{} on a graph $G$ with $O(n)$ nodes in $O(n^2)$ time.
\end{restatable}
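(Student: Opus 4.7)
The plan is to construct, from a fixed formula $\phi = \forall x \exists y \exists z~\psi(x,y,z)$ over a binary-signature structure with universe $[n]$ and $m$ tuples, a \multiconn{} instance on a graph $G$ with $O(n)$ nodes and pairs $\{(s_x,t_x)\}_{x\in[n]}$, such that $\phi$ holds iff every $s_x$ reaches $t_x$ in $G$. First I would rewrite $\psi$ in disjunctive normal form $\psi \equiv \bigvee_{i=1}^{k} C_i$, where $k=O(1)$ since $\phi$ is fixed and each $C_i$ is a conjunction of atoms and negated atoms. Since $\exists y\exists z$ distributes over disjunction, it suffices to test $\bigvee_i \exists y \exists z~ C_i(x,y,z)$ for each $x$. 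Preprocessing all relations into hash tables in $O(m+n)$ time lets each atom $R(a,b)$ and its negation be evaluated in $O(1)$.

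For each clause $C_i$ I would build a four-layer gadget on $O(n)$ nodes: a source layer $\{s_x^{(i)}\}_{x\in[n]}$, a $y$-layer $\{u_y^{(i)}\}_{y\in[n]}$, a $z$-layer $\{w_z^{(i)}\}_{z\in[n]}$, and a sink layer $\{t_x^{(i)}\}_{x\in[n]}$. I would partition the (constantly many, binary) atoms of $C_i$ by which of $\{x,y,z\}$ they mention, and encode them as edge-existence conditions: atoms on $\{x\}$ or $\{x,y\}$ control edges $s_a^{(i)}\to u_b^{(i)}$ (included iff all such atoms hold under $x{=}a,\,y{=}b$), atoms on $\{y\}$ or $\{y,z\}$ control edges $u_b^{(i)}\to w_c^{(i)}$, and atoms on $\{z\}$ or $\{x,z\}$ control edges $w_c^{(i)}\to t_a^{(i)}$. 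Each candidate edge is decided in $O(1)$ after preprocessing, so each gadget has at most $O(n^2)$ edges and is built in $O(n^2)$ time. By construction $s_a^{(i)} \Path t_a^{(i)}$ iff some $b,c\in[n]$ witness $C_i(a,b,c)$.

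Finally I would glue the $k$ gadgets via fresh nodes $s_x,t_x$ and edges $s_x\to s_x^{(i)}$, $t_x^{(i)}\to t_x$ for every $i$; then $s_x \Path t_x$ iff $\exists y \exists z~ \psi(x,y,z)$, so the \multiconn{} query $\{(s_x,t_x)\}_{x\in[n]}$ is a yes-instance iff $\phi$ is valid. The final graph has $O(n)$ nodes, $O(n^2)$ edges, and is assembled in $O(n^2)$ total time, as claimed. The main obstacle is verifying that the atom partition is sound, i.e.\ that every atom of $C_i$ is checked on exactly one edge along any $s_a^{(i)} \Path t_a^{(i)}$ path, so that the conjunction of edge-existence conditions along such a path coincides with $C_i(a,b,c)$. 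In the binary-signature setting this is clean because each atom mentions at most two of $\{x,y,z\}$; equality atoms are handled analogously, and higher-arity atoms would lie outside the standard formulation of \fofee{}.
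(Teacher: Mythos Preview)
Your proposal is correct and follows essentially the same approach as the paper: both convert the quantifier-free part to DNF, build for each disjunct a constant number of copies of the universe arranged in layers so that a 3-hop path through the gadget witnesses $\exists y\exists z\,C_i(x,y,z)$, and take the $(s_x,t_x)$ pairs as the \multiconn{} queries. The differences are cosmetic---the paper shares the source and sink layers across all disjuncts whereas you add explicit glue nodes, and you handle the atom partition more carefully than the paper's informal assumption of exactly one atom per variable pair---but the core construction and the correctness argument are the same.
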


Finally, we arrive at \thmref{hb-decide-mc-hard} by constructing in $O(n^2)$ time a trace $\tr$ with $\NumEvents=\Theta(n^2)$ such that $\hbgraph$ is similar in structure to the graph $G$ of \lemref{fo-multiconn}.
In the end, detecting an $\HB$ race in $\tr$ in $O(\NumEvents^{1+\epsilon})$ time yields an algorithm
for \fofee{} in $\Theta(n^{2+\epsilon'})$ time.
We refer to \cref{sec:proofs_hb} for the details, which conclude the proof of \thmref{hb-decide-mc-hard}.


\section{Synchronization-Preserving Races}\label{sec:syncp}

In this section we discuss the dynamic detection of sync-preserving races, and prove \thmref{syncp-ov3-hard}. 

For notational convenience, we will frequently use the composite \emph{sync} events.
A $\sync(\ell)$ event represents the sequence $\acq(\ell), \rd(x_{\ell}), \wt(x_{\ell}), \rel(\ell)$.
The key idea behind sync events is as follows.
Assume that in a trace $\tr$ we have two $\sync(\ell)$ events $e_1$ and $e_2$ with $e_1\stricttrord{\tr} e_2$.
Then any correct reordering $\rho$ of $\tr$ with $e_2\in \events{\rho}$ satisfies the following.
\begin{compactenum}[(a)]
\item We have $e_1\in \events{\rho}$, as the read event of $e_2$ must read from the write event of $e_1$.
\item For every $e'_1, e'_2\in \events{\rho}$ such that $e'_1\tho{\tr}e_1$ and  $e_2\tho{\tr}e'_2$, we have
$e'_1\stricttrord{\rho} e'_2$.
\end{compactenum}
We hence use sync events to ensure certain orderings in any sync-preserving correct reordering of $\tr$ that exposes a sync-preserving data race.


\Paragraph{Intuition.}
Before we proceed with the detailed reduction, we provide a high-level description.
The input to \orthvk{3} is three sets of vectors $A_1=\{x_i\}_{i\in [n]}$, $A_2=\{y_i\}_{i\in [n]}$, and $A_3=\{z_i\}_{i\in [n]}$.
Every vector $x\in A_1$ is represented by a thread $t_x$, ending with the critical section $\acq(X), \wt(z), \rel(X)$.
Similarly, every vector $y\in A_1$ is represented by a thread $t_y$, ending with the critical section $\acq(Y), \rd(z), \rel(Y)$.
Notice that we can only have a race between the write event of a thread $t_x$ and the read event of a thread $t_y$.
The search for such a race corresponds to the search of the corresponding vectors $x\in A_1$ and $y\in A_2$ such that there is a vector $z\in A_3$ which makes the triplet $x,y,z$ orthogonal.

To establish this correspondence, we insert in $t_x$ empty critical sections on locks $l_k$, for $k\in [d]$ that represent the coordinates $k$ for which $x[k]=1$.
We use a similar encoding with locks $l'_k$ for the threads $t_y$, capturing that $y[k]=1$.
To encode the vectors in $ A_3$, we use $k$ threads $t_k$, for $k\in [d]$, such that the $i^{th}$ segment of $t_k$ encodes $z_i[k]$:~we have two interleaved critical sections on locks $l_k$ and $l'_k$ iff $z_i[k]=1$.

Finally, we use some $\sync$ events to force all threads $t_k$ be partially executed whenever we want to execute the write event of any thread $t_x$.
Hence, any correct reordering of $\tr$ that exposes a data race in $\tr$, must execute all $t_k$ at least partially.
We make all threads $t_k$ execute before all $t_x$ and $t_y$ in $\tr$.
The notion of sync-preservation ensures that if we have a correct reordering that exposes a race between two threads $t_x$ and $t_y$, then the following holds.
For every coordinate $k\in [d]$ in which $x[k]=y[k]=1$, since the corresponding threads $t_x$ and $t_y$ have critical sections on locks $l_k$ and $l'_k$, the thread $t_k$ must execute up to a point where it does not have critical sections on these locks. 
This means that we have found a vector $z$ with $z[k]=0$, and thus the triplet $x,y,z$ is orthogonal on that coordinate.

\begin{figure}
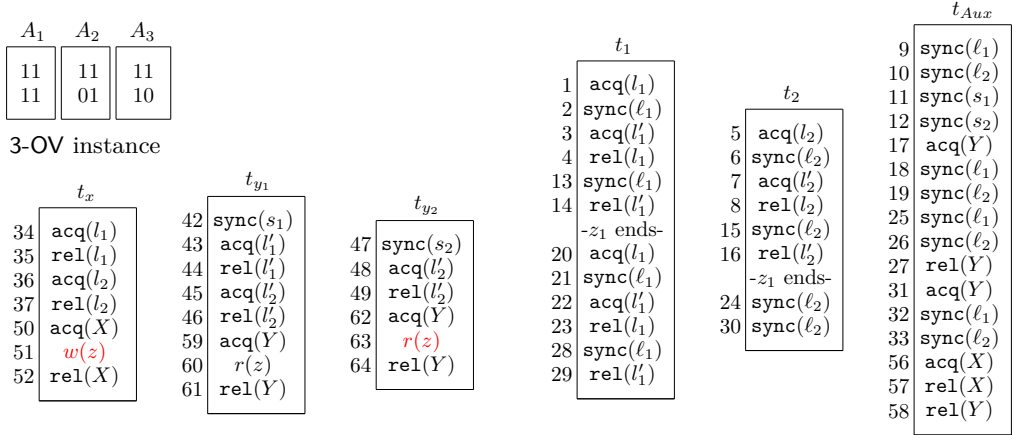

\begin{subfigure}{0.5\textwidth}
\begin{subfigure}{.5\textwidth}
\scalebox{0.8}{
\OV{
\ovX{11}
\ovX{11}
}{
\ovY{11}
\ovY{01}
}{
\ovZ{11}
\ovZ{10}
}
}
\caption*{\small \quad \orthvk{3} instance}
\end{subfigure} \\
\begin{subfigure}{0.3\textwidth}
\scalebox{0.8}{
\executionOffset{1}{x}{
	\figevoffset{33}{1}{$\acq(l_1)$}
	\figevoffset{33}{1}{$\rel(l_1)$}
	\figevoffset{33}{1}{$\acq(l_2)$}
	\figevoffset{33}{1}{$\rel(l_2)$}
	\figevoffset{45}{1}{$\acq(X)$}
    \figevoffset{45}{1}{$\color{red}w(z)$}
	\figevoffset{45}{1}{$\rel(X)$}
}
}
\end{subfigure}
\begin{subfigure}{0.3\textwidth}
\scalebox{0.8}{
\executionOffset{1}{y_1}{
	\figevoffset{41}{1}{$\sync(s_1)$}	\figevoffset{41}{1}{$\acq(l'_1)$}
	\figevoffset{41}{1}{$\rel(l'_1)$}
	\figevoffset{41}{1}{$\acq(l'_2)$}
	\figevoffset{41}{1}{$\rel(l'_2)$}
	\figevoffset{53}{1}{$\acq(Y)$}
    \figevoffset{53}{1}{$r(z)$}
	\figevoffset{53}{1}{$\rel(Y)$}
}
}
\end{subfigure}
\begin{subfigure}{0.3\textwidth}
\scalebox{0.8}{
\executionOffset{1}{y_2}{
	\figevoffset{46}{1}{$\sync(s_2)$}	
	\figevoffset{46}{1}{$\acq(l'_2)$}
	\figevoffset{46}{1}{$\rel(l'_2)$}
	\figevoffset{58}{1}{$\acq(Y)$}
    \figevoffset{58}{1}{$\color{red}r(z)$}
	\figevoffset{58}{1}{$\rel(Y)$}
}
}
\end{subfigure}
\end{subfigure}
\begin{subfigure}{0.5\textwidth}
\begin{subfigure}{0.3\textwidth}
\scalebox{0.8}{
\executionOffset{1}{1}{
	\figevoffset{0}{1}{$\acq(l_1)$}
	\figevoffset{0}{1}{$\sync(\lk_1)$}
	\figevoffset{0}{1}{$\acq(l'_1)$}
	\figevoffset{0}{1}{$\rel(l_1)$}
	\figevoffset{8}{1}{$\sync(\lk_1)$}
	\figevoffset{8}{1}{$\rel(l'_1)$}
	\figevStale{1}{-$z_1$ ends-}
	\figevoffset{12}{1}{$\acq(l_1)$}
	\figevoffset{12}{1}{$\sync(\lk_1)$}
	\figevoffset{12}{1}{$\acq(l'_1)$}
	\figevoffset{12}{1}{$\rel(l_1)$}
	\figevoffset{16}{1}{$\sync(\lk_1)$}
	\figevoffset{16}{1}{$\rel(l'_1)$}
}
}
\end{subfigure}
\begin{subfigure}{0.3\textwidth}
\scalebox{0.8}{
\executionOffset{1}{2}{
	\figevoffset{4}{1}{$\acq(l_2)$}
	\figevoffset{4}{1}{$\sync(\lk_2)$}
	\figevoffset{4}{1}{$\acq(l'_2)$}
	\figevoffset{4}{1}{$\rel(l_2)$}
	\figevoffset{10}{1}{$\sync(\lk_2)$}
	\figevoffset{10}{1}{$\rel(l'_2)$}
	\figevStale{1}{-$z_1$ ends-}
    \figevoffset{16}{1}{$\sync(\lk_2)$}
	\figevoffset{21}{1}{$\sync(\lk_2)$}
}
}
\end{subfigure}
\begin{subfigure}{0.3\textwidth}
\scalebox{0.8}{
\executionOffset{1}{Aux}{
	\figevoffset{8}{1}{$\sync(\lk_1)$}
	\figevoffset{8}{1}{$\sync(\lk_2)$}
	\figevoffset{8}{1}{$\sync(s_1)$}
	\figevoffset{8}{1}{$\sync(s_2)$}
	\figevoffset{12}{1}{$\acq(Y)$}
	\figevoffset{12}{1}{$\sync(\lk_1)$}
	\figevoffset{12}{1}{$\sync(\lk_2)$}
    \figevoffset{17}{1}{$\sync(\lk_1)$}
	\figevoffset{17}{1}{$\sync(\lk_2)$}
	\figevoffset{17}{1}{$\rel(Y)$}
	\figevoffset{20}{1}{$\acq(Y)$}
	\figevoffset{20}{1}{$\sync(\lk_1)$}
	\figevoffset{20}{1}{$\sync(\lk_2)$}
	\figevoffset{42}{1}{$\acq(X)$}
	\figevoffset{42}{1}{$\rel(X)$}
	\figevoffset{42}{1}{$\rel(Y)$}
}
}
\end{subfigure}
\end{subfigure}
\caption{
Example reduction from \orthvk{3} to $\sync$ race detection. 
The trace orders events as shown by their numbering.
We only show one thread $t_x$, as the two $x$ vectors are identical.
 }
\figlabel{ex-syncp-race}
\end{figure}

\myparagraph{Reduction}{ 
Given an  \orthvk{3} instance \ovinsthree{3} on vector sets $A_1=\{x_i\}_{i\in [n]}$, $A_2=\{y_i\}_{i\in [n]}$, and $A_3=\{z_i\}_{i\in [n]}$, we create a trace $\tr$ as follows (see \figref{ex-syncp-race}). 
We have $\NumThreads=2\cdot n + d + 1$ threads,
while all access events (not counting the sync events) are of the form $\wt(z)/\rd(z)$ in a single variable $z$.
We first describe the threads, and then how they interleave in $\tr$.

\SubParagraph{Threads.}
We introduce a thread $t_x$ for every vector $x\in A_1$ and a lock $l_k$ for every $k\in[d]$.
Each thread $t_x$ consists of two segments $t^1_x$ and $t^2_x$.
We create $t^1_x$ as follows.
For every $k\in [d]$ where $x[k]=1,$ we add an empty critical section $\acq(l_k),\rel(l_k)$  in $t^1_x$. 
We create $t^2_x$ as the sequence $\acq(X),\wt(z),\rel(X)$, where $X$ is a new lock, common for all $t^2_x$.

For the vectors in $A_2$, we introduce threads similar to those of part $A_1,$ as follows. 
We have a thread $t_y$ for every vector $y\in A_2$ and a lock $l'_k$ for every $k\in[d]$.
Each thread $t_y$ consists of two segments $t^1_y$ and $t^2_y$.
For every $k\in [d]$ where $y[k]=1,$ we add an empty critical section $\acq(l'_k),\rel(l'_k)$  in $t^1_y$. 
In contrast to the $t^1_x$, every $t^1_y$ also has an event $\sync(s_y)$ at the very beginning.
We create $t^2_y$ as the sequence $\acq(Y),\rd(z),\rel(Y)$, where $Y$ is a new lock, common for all $t^2_y$.

The construction of the threads corresponding to the vectors in $A_3$ is more involved.
We have one thread $t_k$ for every $k \in [d]$. 
Each thread has some fixed $\sync$ events, as well as critical sections corresponding to one coordinate of all $n$ vectors in $A_3$. 
In particular, we construct each $t_{k}$ as follows.
We iterate over all $z_i$, and if $z_i[k]=0$, we simply append two events $ \sync(\ell_k),\sync(\ell_k)$ to $t_k$.
On the other hand, if $z_i[k]=1$, we interleave these sync events with two critical sections, by appending the sequence
$
\acq(l_k), \sync(\ell_k), \acq(l'_k), \rel(l_k), \sync(\ell_k), \rel(l'_k)
$.

Lastly, we have a single auxiliary trace $t$ that consists of three parts $t^1$, $t^2$ and $t^3$, where
\begin{align*}
t^1&= \sync(\ell_1),\dots, \sync(\ell_k), \sync(s_{y_1}),\dots \sync(s_{y_n})\\
t^2&=\left(\acq(Y), \sync(\ell_1),\dots, \sync(\ell_k),  \sync(\ell_1),\dots, \sync(\ell_k), \rel(Y)\right)^{n-1}\\
t^3&=\acq(Y), \sync(\ell_1),\dots, \sync(\ell_k), \acq(X), \rel(X), \rel(Y)
\end{align*}

\SubParagraph{Concurrent trace.}
We are now ready to describe the interleaving of the above threads in order to obtain the concurrent trace $\tr$.
\begin{compactenum}
\item We execute the auxiliary trace $t$ and all traces $t_k$, for $k\in [d]$ (i.e., the threads corresponding to the vectors of $A_3$) arbitrarily, as long as for every $k\in[d]$, every sequence of $\sync(\ell_k)$ events
\begin{enumerate*}[(a)]
\item starts with the $\sync(\ell_k)$ event of $t_k$ and proceeds with the $\sync(\ell_k)$ event of $t$,
\item strictly alternates in every two $\sync(\ell_k)$ events between $t$ and $t_k$, and
\item ends with the last $\sync(\ell_k)$ event of $t_k$.
\end{enumerate*}
\item We execute all $t^1_x$ and $t^1_y$ (i.e., the first parts of all threads that correspond to the vectors in $A_1$ and $A_2$) arbitrarily, but after all traces $t_k$, for $k\in [d]$.
\item We execute all $t^2_x$ (i.e., the second parts of all traces that correspond to the vectors in $A_1$) arbitrarily, but before the segment $\acq(X), \rel(X), \rel(Y)$ of $t$.
\item We execute all $t^2_y$ (i.e., the second parts of all traces that correspond to the vectors in $A_2$) arbitrarily, but after the segment $\acq(X), \rel(X), \rel(Y)$ of $t$.
\end{compactenum}
}

We refer to \cref{sec:proofs_syncp} for the correctness of the reduction and thus the proof of \thmref{syncp-ov3-hard}.



\section{Violations of the Locking Discipline}\label{sec:locking_discipline}


\subsection{Lock-Cover Races}

We start with a simple reduction from \orthv{} to detecting lock-cover races.
Given a \orthv{} instance \ovinstwo{} on two vector sets $A_1, A_2$, we create a trace $\tr$ as follows. 
We have a single variable $x$ and two threads $t_1, t_2$.
We associate with each vector of the set $A_i$ a write access event $e = \ev{t_i, \wt(x)}$.
Moreover, each such event holds up to $d$ locks, so that $e$ holds the $k^{th}$ lock iff $k^{th}$ coordinate of the vector corresponding to the event is $1$.
The trace $\tr$ is formed by ordering the sequence of events corresponding to vectors of $A_1$ of \orthv{} first, in a fixed arbitrary order, followed by the sequence of events corresponding to $A_2$, again in arbitrary order.
We refer to \cref{sec:proofs_locking_discipline} for the correctness, which concludes the proof of \thmref{lock-cover-quadratic-lower-bound}.

\subsection{Lock-Set Races}

We now turn our attention to lock-set races.
We first prove \thmref{lock-set-single-variable-linear}, i.e., that determining whether 
a trace $\tr$ has a lock-set race on a specific variable $x$ can be performed in linear time.

\Paragraph{A linear-time algorithm per variable.}
Verifying that there are two conflicting events on $x$ is straightforward by a single pass of $\tr$.
The more involved part is in computing the lock-set of $x$, i.e., the set $\bigcap_{e \in \accesses{\tr}(x)} \lheld{\tr}(e)$, in linear time.
Indeed, each intersection alone requires $\Theta(\NumLocks)$ time, resulting to $\Theta(\NumEvents\cdot \NumLocks)$ time overall.

Here we show that a somewhat more involved algorithm achieves the task.
The algorithm performs a single pass of $\tr$, while maintaining three simple sets $A$, $B$, and $C$.
While processing an event $e$, the sets are updated to maintain the invariant
\begin{align}\label{eq:lockset_invariant}
A=\lheld{\tr}(e)\quad\;
B=\locks{\tr}\cap \bigcap_{e' \in \accesses{\tr}(x), e'\trord{\tr} e} \lheld{\tr}(e') \quad\;
C=\ov{A}\cap B
\end{align}

The sets are initialized as $A=\emptyset$, $B=C=\locks{\tr}$.
Then the algorithm performs a pass over $\tr$ and processes each event $e$ according to the description of \algoref{compute-lockset-variable}.


{
\footnotesize
\begin{algorithm*}[h]
\begin{multicols}{4}
\myhandler{\acqhandler{$t$, $\lk$}}{
	$A\gets A\cup \set{\lk}$ \;
	\If{$\lk \in B$}{
		$C \gets C \setminus \set{\lk}$
	}
}

\myhandler{\relhandler{$t$, $\lk$}}{
	$A\gets A\setminus \set{\lk}$ \;
	\If{$\lk \in B$}{
		$C \gets C \cup \set{\lk}$
	}
}

\myhandler{\rdhandler{$t$, $y$}}{
	\If{$x = y$}{
	$B \gets B \setminus C$ \;
	$C \gets \emptyset$
	}
}

\myhandler{\wthandler{$t$, $y$}}{
	\If{$x = y$}{
	$B \gets B \setminus C$ \;
	$C \gets \emptyset$
	}
}

\end{multicols}
\normalsize
\caption{\textit{Computing lock-set of variable $x$}}
\algolabel{compute-lockset-variable}
\SetAlgoInsideSkip{medskip}
\end{algorithm*}
\normalsize
}

The correctness of \algoref{compute-lockset-variable} follows by proving the invariant in \cref{eq:lockset_invariant}.
We refer to \cref{sec:proofs_locking_discipline} for the details, which concludes the proof of \thmref{lock-set-single-variable-linear}.

\Paragraph{Short witnesses for lock-set races.}
Besides the advantage of a faster algorithm, \thmref{lock-set-single-variable-linear} implies that 
lock-set races have short witnesses that can be verified in linear time.
This allows us to prove that detecting a lock-set race is in $\ntime{\NumEvents}\cap \contime{\NumEvents}$, and we can thus use \cite[Corollary~2]{CarmosinoGIMPS2016} to prove \thmref{lock-set-nseth}.

\begin{proof}[Proof of \thmref{lock-set-nseth}]
First we argue that the problem is in $\ntime{\NumEvents}$.
Indeed, the certificate for the existence of a lock-set race is simply the variable $x$ on which there is a lock-set race.
By \thmref{lock-set-single-variable-linear}, verifying that we indeed have a lock-set race on $x$ takes $O(\NumEvents)$ time.
 
Now we argue that the problem is in $\contime{\NumEvents}$, by giving a certificate to verify in linear time that $\tr$ does not have a race of the required form. 
The certificate has size $O(|\vars{\tr}|)$, and specifies for every variable, either the lock that is held by all access events of the variable, or a claim that there exist no two conflicting events on that variable.
The certificate can be easily verified by one pass over $\tr$. 
\end{proof}

\Paragraph{Lock-set races are Hitting-Set hard.}
Finally we prove \thmref{lock-set-quadratic-lower-bound}, i.e., that determining a single lock-set race is \hs{}-hard, and thus also carries a conditional quadratic lower bound.
We establish a fine-grained reduction from \hs{}. 
Given a \hs{} instance \hsins{} on two vector sets $X, Y$, we create a trace $\tr$ using $d+1$ threads $\{t_j\}_{j\in \{0\}\cup [d]}$, $n$ locks $\{\ell_i\}_{i\in[n]}$, and $n$ variables $\{z_i\}_{k\in [n]}$.
Thread $t_0$ that executes
$
\acq(\ell_1),\dots, \acq(\ell_n), \wt(z_1),\dots \wt(z_n), \rel(\ell_n), \dots \rel(\ell_1)
$.
Each of the threads $t_j$, for $j\in[d]$, has a single nested critical section consisting of the locks $\ell_i\in [n]$ such that the $i^{th}$ vector of $Y$ has its $j^{th}$ coordinate $0$, i.e, $y_i[j]=0$.
The events in the critical section are all write events of all variables $z_k\in [n]$ with $x_k[j]=1$.
The trace orders all events of each thread $t_d$ consecutively, and all the events overall in increasing order of $d$.
See \cref{fig:hs_to_lock_set} for an illustration.
We refer to \cref{sec:proofs_locking_discipline} for the correctness, which concludes the proof of \thmref{lock-set-quadratic-lower-bound}.

\begin{figure}
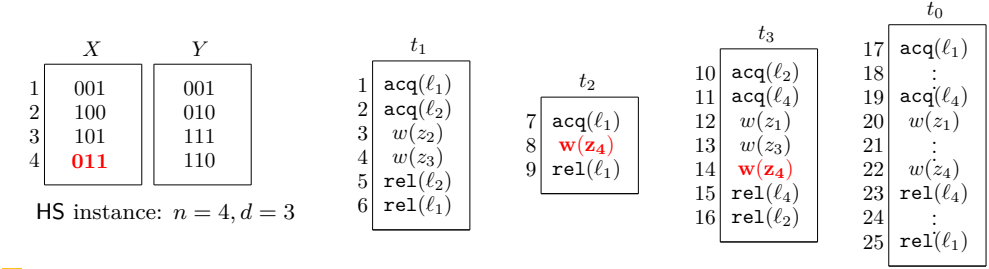

\begin{subfigure}{.3\textwidth}
\scalebox{0.8}{
\HS{
	\hsevX{001}
	\hsevX{100}
    \hsevX{101}
    \hsevX{\textbf{\textcolor{red}{011}}}
}{
	\hsevY{001}
	\hsevY{010}
    \hsevY{111}
    \hsevY{110}
}
}
\caption*{\hspace{1.5em}\hs{} instance: $n=4,d=3$}
\end{subfigure}
\begin{subfigure}{.15\textwidth}
\scalebox{0.8}{
\executionOffset{1}{1}{
	\figev{1}{$\acq(\lk_1)$}
	\figev{1}{$\acq(\lk_2)$}
    \figev{1}{$w(z_2)$}
    \figev{1}{$w(z_3)$}
	\figev{1}{$\rel(\lk_2)$}
	\figev{1}{$\rel(\lk_1)$}
}
}
\end{subfigure}
\begin{subfigure}{.15\textwidth}
\scalebox{0.8}{
\executionOffset{1}{2}{
    \figevoffset{6}{1}{$\acq(\lk_1)$}
    \figevoffset{6}{1}{$\mathbf{\color{red} w(z_4)}$}
    \figevoffset{6}{1}{$\rel(\lk_1)$}
}
}
\end{subfigure}
\begin{subfigure}{.15\textwidth}
\scalebox{0.8}{
\executionOffset{1}{3}{
    \figevoffset{9}{1}{$\acq(\lk_2)$}      \figevoffset{9}{1}{$\acq(\lk_4)$}
    \figevoffset{9}{1}{$w(z_1)$}
    \figevoffset{9}{1}{$w(z_3)$}
    \figevoffset{9}{1}{$\mathbf{\color{red}w(z_4)}$}
    \figevoffset{9}{1}{$\rel(\lk_4)$}
    \figevoffset{9}{1}{$\rel(\lk_2)$}}
}
\end{subfigure}
\begin{subfigure}{.15\textwidth}
\scalebox{0.8}{
\executionOffset{1}{0}{
	\figevoffset{16}{1}{$\acq(\lk_1)$}
	\figevoffset{16}{1}{$\vdots$}
	\figevoffset{16}{1}{$\acq(\lk_4)$}
    \figevoffset{16}{1}{$w(z_1)$}
    \figevoffset{16}{1}{$\vdots$}
    \figevoffset{16}{1}{$w(z_4)$}
	\figevoffset{16}{1}{$\rel(\lk_4)$}
	\figevoffset{16}{1}{$\vdots$}
	\figevoffset{16}{1}{$\rel(\lk_1)$}
}
}
\end{subfigure}
\caption{Reducing \hs{} to detecting a lock-set race on trace $\tr$ with $d$ threads. 
Thread $t_k$ uses lock $l_i$ if $y_i[k]=0$, and $w(z_j)$ if $x_j[k]=1$. 
Vector $x_4$ hits all vectors in $Y$, implying a lock-set race on $z_4$.
}
\label{fig:hs_to_lock_set}
\end{figure}

\section{Conclusion}
In this work we have taken a fine-grained view of the complexity of popular notions of dynamic data races.
We have established a range of lower bounds on the complexity of detecting $\HB$ races, sync-preserving races, as well as races based on the locking discipline (lock-cover/lock-set races).
Moreover, we have characterized cases where lower bounds based on SETH are not possible under NSETH.
Finally, we have proven new upper bounds for detecting $\HB$ and lock-set races.
To our knowledge, this is the first work that characterizes the complexity of well-established dynamic race-detection techniques, allowing for a rigorous characterization of their trade-offs between expressiveness and running time.

\newpage
\bibliography{bibliography}

\newpage
\appendix
\section{Fine-Grained Complexity and Popular Hypotheses}\label{sec:fine_grained}

In this section we present notions of fine-grained complexity theory that are relevant to our work.
We refer to the survey \cite{Williams18} for a detailed exposition on the topic.

This theory relates the computational complexity of problems under the following, more refined, notion of reduction than the standard ones used in traditional complexity theory. Informally, the definition says that if there is an algorithm for some problem B faster than its assumed lower bound, then such a reduction from some problem A to B gives an algorithm for A thatis faster than its conjectured lower bound. 

\Paragraph{Fine-grained Reductions.}
Assume that A and B are computational problems and $a(n)$ and $b(n)$ are their conjectured running time lower bounds, respectively. Then we say A $(a,b)$-reduces to B, denoted by
A \fgr{a}{b} B, if for every $\epsilon>0,$ there exists $\delta > 0$, and an algorithm R for A that runs in time $a(n)^{(1-\delta)}$ on
inputs of length $n$, making $q$ calls to an oracle for B with query lengths $n_1,\dotsc, n_q$, where,
$$\sum_{1}^{q}(b(n))^{(1-\epsilon)}\le (a(n))^{(1-\delta)}.$$

Problems that can be reduced to each other such that the lower bounds for each problem are the same in both reductions, i.e., A\fgr{a}{b}B and B\fgr{b}{a}A, are intuitively thought to have the same underlying `reason' for hardness, and are said to be fine-grained equivalent. 

A reduction A\fgr{a}{b}B would be interesting for B if $a(n)$ was a proven or well-believed conjectured lower bound on A, thus implying a believable lower bound on B. One such well-believed conjecture in complexity theory is \seth{}~\cite{ImpagliazzoP01} for the classic CNF-SAT problem, originally defined for deterministic algorithms, but now widely believed for randomized algorithms as well.


\begin{hypothesis}
[Strong Exponential Time Hypothesis (\seth)]\deflabel{seth}
For every $\epsilon>0$ there exists an integer $k\ge 3$
such that CNF-SAT on formulas with clause size at most $k$ and $n$ variables cannot be solved in $O(2^{(1-\epsilon)n})$ time even by a randomized algorithm.
\end{hypothesis}

\seth{} implies a lower bound conjecture, denoted by \ovc{}, 
on the Orthogonal Vectors problem \orthv{}, as shown by a 
reduction from CNF-SAT to $\orthvk{k}$ \cite{Williams05}. Thus, a conditional lower bound under \ovc{} implies one under \seth{} as well, leading to numerous conditional lower bound results under \ovc{} [See \cite{Williams18} for a detailed literature review]. This paper will also prove such results on several data race detection problems, hence we now state $\orthvk{k}$ and \ovc{} formally. 

An instance of $\orthvk{k}$ is an integer $d=\omega(\log n)$ and $k$ 
sets $A_i\subseteq \{0,1\}^d,\ i\in [n]$ such that $|A_i|=n,$ and denoted by $\ovinstwo{}$. 
\begin{problem}[Orthogonal Vectors ($\orthvk{k}$)]\problabel{ov-definition}
Given an instance $\ovinsthree{k}$, the \orthvk{k} problem is to decide if there are $k$ vectors $a_i\in A_i$ for all $i\in [n]$ such that the sum of their point wise product is zero, i.e., $\sum_{j=1}^d \prod_{i=1}^k a_i[j]=0.$
\end{problem}
For ease of exposition, we denote $\ovinsthree{2}$ and $\orthvk{2}$ by \ovinstwo{} and \orthv{} respectively.

\begin{hypothesis}
[Orthogonal Vectors Hypothesis (\ovc)]\deflabel{ovc}
No randomized algorithm can solve $\orthvk{k}$ for an instance $\ovinsthree{k}$ in time $O(n^{(k-\epsilon)}\cdot \poly(d))$ for any constant $\epsilon>0$.
\end{hypothesis}

There is an impossibility result from \cite{CarmosinoGIMPS2016} that proves that a reduction under \seth{}, and hence under \ovc{}, is not possible unless the following \nseth{} conjecture is false. 

\begin{hypothesis}
[Non-deterministic \seth{} (\nseth)]\deflabel{nseth}
 For every $\epsilon>0$, there exists a $k$ so that
k-TAUT is not in $\ntime{2^{n(1-\epsilon)}}$, where k-TAUT is the language of all k-DNF formulas which are tautologies.
\end{hypothesis}

The impossibility result \cite[Corollary~2]{CarmosinoGIMPS2016} is as follows.

\begin{theorem}\thmlabel{better-than-t-unlikely-under-seth}
If \nseth{} holds and a problem C $\in \ntime{T_C}\cap \contime{T_C},$ then for any problem B that is \seth{}-hard under deterministic reductions with time $T_B,$ and $\gamma>0,$ we cannot have a fine-grained reduction B \fgr{T_B}{c} C where $c=T_C^{(1+\gamma)}.$
\end{theorem}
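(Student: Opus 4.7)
The plan is to prove the contrapositive:~assuming the existence of a reduction $B \preceq C$ with parameters $(T_B, c)$ and $c = T_C^{1+\gamma}$, I will construct a nondeterministic and co-nondeterministic algorithm for \SAT{} that falsifies \nseth. The argument composes three ingredients:~the nondeterministic/co-nondeterministic algorithms for $C$, the given fine-grained reduction from $B$ to $C$, and the \seth-hardness reduction from \SAT{} to $B$.

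First I would unfold the inner reduction. For a parameter $\epsilon > 0$ to be chosen later, there is $\delta > 0$ and a deterministic algorithm $R$ that solves $B$ in time $T_B^{1-\delta}$ via oracle queries to $C$ of lengths $n_1,\dotsc,n_q$ satisfying $\sum_i c(n_i)^{1-\epsilon} \leq T_B^{1-\delta}$. Each such oracle call can be executed in nondeterministic time $T_C(n_i)$:~nondeterministically guess the Yes/No answer, then verify a Yes-answer via the $\ntime{T_C}$ algorithm for $C$ and a No-answer via the $\contime{T_C}$ algorithm, rejecting if any verification fails. Setting $\epsilon = \gamma/(1+\gamma)$ gives $(1-\epsilon)(1+\gamma)=1$, so $c(n_i)^{1-\epsilon} = T_C(n_i)$, and the total nondeterministic time spent on the oracle simulations is $\sum_i T_C(n_i) = \sum_i c(n_i)^{1-\epsilon} \leq T_B^{1-\delta}$. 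Adding $R$'s own time yields $B \in \ntime{T_B^{1-\delta}}$; exchanging the roles of the Yes/No verifications symmetrically yields $B \in \contime{T_B^{1-\delta}}$.

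Next I would compose this with the \seth-hardness of $B$, which by assumption supplies a deterministic fine-grained reduction with parameters $(2^n, T_B)$ from \SAT{} (on $n$ variables) to $B$. Applying the same simulation trick one level higher -- replacing each outer $B$-query with the nondeterministic algorithm for $B$ obtained above, and analogously with the co-nondeterministic version -- and choosing the outer reduction's parameter equal to the inner $\delta$, the same accounting yields $\SAT \in \ntime{2^{n(1-\delta')}} \cap \contime{2^{n(1-\delta')}}$ for some $\delta' > 0$. By De Morgan applied to CNF/DNF and preservation of input length, $\SAT \in \contime{2^{n(1-\delta')}}$ implies $k\text{-TAUT} \in \ntime{2^{n(1-\delta')}}$ for the appropriate $k$, contradicting \nseth.

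The main technical obstacle is the careful composition of two fine-grained reductions with nondeterministic oracle simulation in the middle. Two subtleties require care:~(i)~$R$'s queries are adaptive, so both $q$ and the $n_i$ depend on the guessed answers; to keep the simulation uniformly bounded along every computation path, one imposes a hard cutoff of $T_B^{1-\delta}$ steps on $R$, forcing rejection if $R$ overruns (this is sound because along accepting paths all guesses are correct and the original length bound applies); and (ii)~the two layers of reduction produce two pairs of parameters $(\epsilon_1,\delta_1)$ and $(\epsilon_2,\delta_2)$ that must be coordinated so that the final $\delta'$ is strictly positive; the choice $\epsilon_1 = \gamma/(1+\gamma)$ at the inner layer and $\epsilon_2 = \delta_1$ at the outer layer, together with routine constant-chasing, closes the argument.
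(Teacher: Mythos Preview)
The paper does not supply its own proof of this theorem: it is quoted verbatim as \cite[Corollary~2]{CarmosinoGIMPS2016} in the appendix on fine-grained complexity, and is used as a black box to derive \thmref{hb-decide-no-seth-better-than-3/2} and \thmref{lock-set-nseth}. There is therefore nothing in the paper to compare your argument against.

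That said, your proposal is a faithful reconstruction of the standard Carmosino--Gao--Impagliazzo--Mihajlin--Paturi--Schneider argument. The key steps---guessing each oracle answer and verifying it via the $\ntime{T_C}$ or $\contime{T_C}$ algorithm (which is sound precisely because membership in both classes guarantees certificates for either outcome), the choice $\epsilon=\gamma/(1+\gamma)$ so that $c(n_i)^{1-\epsilon}=T_C(n_i)$, and the two-level composition with the \seth-hardness reduction---are all correct and are exactly how the cited result is proved. Your handling of adaptivity via a hard time cutoff and the final translation from $\SAT\in\contime{2^{n(1-\delta')}}$ to $k\text{-TAUT}\in\ntime{2^{n(1-\delta')}}$ are also the right moves.
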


We show some of our problems satisfy the conditions of \thmref{better-than-t-unlikely-under-seth}, and hence show lower bounds for these conditioned on one of two other hypotheses called \hsc{} and \foh{}, described below. 

An instance of the hitting set problem, denoted by \hs{}, is an integer $d=\omega(\log n)$ and sets $X,Y\subseteq \{0,1\}^d,\ i\in [n]$ such that $|X|=|Y|=n,$ and denoted by \hsins{}. 
\begin{problem}[Hitting Sets (\hs)]\problabel{hs-definition}
Given an instance \hsins{}, the \hs{} problem is to decide if there is a vector $x\in X$ such that for all $y\in Y$ we have $x\cdot y\ne 0,$ or informally, some vector in $X$ hits all vectors in $Y.$
\end{problem}

\begin{hypothesis}
[Hitting Sets Hypothesis (\hsc)]\deflabel{hsc}
No randomized algorithm can solve \hs{} for an instance \hsins{} in time $O(n^{(2-\epsilon)}\cdot \poly(d))$ for any constant $\epsilon>0.$
\end{hypothesis}

\hsc{} implies \ovc{}, but the reverse direction is not known. 

Finally we consider a subclass of first order formula over structures of size $n$ and with $m$ relational tuples~\cite{Gao2018}.
\begin{problem}[\fofee{}]\problabel{first-order-forall-exist-exist}
Decide if a given a first-order formula quantified by $\forall\exists\exists$ has a model on a structure of size $n$ with $m$ relational tuples.
\end{problem}

It is known that \fofee{} can be solved in $O(m^{3/2})$ time using ideas from triangle detection algorithms~\cite{Gao2018}. 
For dense structures ($m=\Theta(n^2))$, this yields the bound $O(n^3)$.
Although sub-cubic algorithms might be possible, achieving a truly quadratic bound seems unlikely or at least highly non-trivial.


\section{Proofs of \cref{sec:hb}}\label{sec:proofs_hb}


\subsection{Proofs from \secref{hb-upper-bound}}
\applabel{proofs_hb_ub}

\hbisomorphiclockstamp*
\begin{proof}
($\Rightarrow$)  Let $e_1 \hb{\tr} e_2$.
Using the definition of $\hb{\tr}$, 
there must be a sequence of events $f_1,f_2 \ldots f_k$ with 
$k > 1$, $f_1 = e_1$, $f_k = e_2$, and for every $1 \leq i < k$,
$f_i \trord{\tr} f_{i+1}$ and
either $f_i \tho{\tr} f_{i+1}$ or there is a lock $\lk$, such that
$f_i \in \releases{\tr}(\lk)$ and $f_{i+1} \in \acquires{\tr}(\lk)$.
Let $j$ be the smallest index $i$ such that $\ThreadOf{f_i} \neq \ThreadOf{f_{i+1}}$;
such an index exists as $\ThreadOf{e_1} \neq \ThreadOf{e_2}$.
Observe that there must be a lock $\lk$ for which
$\OpOf{f_j} = \rel(\lk)$ and $\OpOf{f_{j+1}} = \acq(\lk)$.
Observe that $pos_\tr(f_j) < pos_\tr(f_{j+1})$,
$\rells{\tr}_{e_1}(\lk) \leq pos_\tr(f_j)$
and $pos_\tr(f_{j+1}) \leq \acqls{\tr}_{e_2}$, giving us
$\rells{\tr}_{e_1}(\lk) < \acqls{\tr}_{e_2}(\lk)$.

($\Leftarrow$) Let $\lk$ be a lock such that $\rells{\tr}_{e_1}(\lk) < \acqls{\tr}_{e_2}(\lk)$.
Then, there is a release event $f$ and an acquire event $g$ on lock $\lk$
such that $pos_\tr(f) < pos_\tr(g)$,$e_1 \hb{\tr} f$ and $g \hb{\tr} e_2$.
This means $f \hb{\tr} g$ and thus $e_1 \hb{\tr} e_2$.
\end{proof}

For the sake of completeness, we present the computation of release lockstamps.
As with \algoref{assign-acquire-ls}, we maintain the following variables.
For each thread $t$ and lock $\lk$, we
will maintain variables $\CTC_t$ and $\CTL_\lk$ that take values from 
the space of all lockstamps.
We also additionally maintain an integer variable $\idxlk_\lk$
for each lock $\lk$ that stores the index (or relative position) of 
the earliest (according to the trace order $\trord{\tr}$) 
release event of lock $\lk$ in the trace.
Initially, we set each $\CTC_t$ and $\CTL_m$ to $\lambda \lk \cdot \infty$,
for each thread $t$ and lock $m$.
Further, for each lock $m$, we set $\idxlk_m$ to $n_m+1$,
where $n_m$ is the number of release events of $m$ in the trace;
this can be obtained in a linear scan 
(or  by reading the value of $\idxlk_m$ at the end of a run of \algoref{assign-acquire-ls}).
We traverse the events according to the total trace order
and perform updates to the data structures as described in \algoref{assign-release-ls},
by invoking the appropriate \emph{handler}
based on the thread and operation of the event $e = \ev{t, op}$ being visited.
At the end of each handler, we assign the lockstamp $\rells{\tr}_e$ to the event $e$.

{
\footnotesize
\begin{algorithm*}[h]
\begin{multicols}{3}

\myhandler{\acqhandler{$t$, $\lk$}}{
	$\CTL_\lk\gets \CTC_t$ \;
	$\rells{\tr}_e \gets \CTC_t$	
}

\myhandler{\relhandler{$t$, $\lk$}}{
	$\idxlk_\lk\gets \idxlk_\lk - 1$ \;
	$\CTC_t\gets \CTC_t[\lk \mapsto \idxlk_\lk] \mn\CTL_\lk$ \;
	$\rells{\tr}_e \gets \CTC_t$
}

\myhandler{\rdhandler{$t$, $x$}}{
	$\rells{\tr}_e \gets \CTC_t$
}

\BlankLine

\myhandler{\wthandler{$t$, $x$}}{
	$\rells{\tr}_e \gets \CTC_t$
}

\end{multicols}
\normalsize
\caption{\textit{Assigning release lockstamps to events in the trace}}
\algolabel{assign-release-ls}
\SetAlgoInsideSkip{medskip}
\end{algorithm*}
\normalsize
}

Let us now state the correctness of \algoref{assign-acquire-ls}
and \algoref{assign-release-ls}.
\begin{lemma}
\lemlabel{lockstamp-assigment-correct}
On input trace $\tr$,
\algoref{assign-acquire-ls} and \algoref{assign-release-ls}
correctly compute the lockstamps $\acqls{\tr}_e$ and $\rells{\tr}_e$ respectively
for each event $e \in \events{\tr}$.
\end{lemma}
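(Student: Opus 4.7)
The plan is to prove both algorithms correct by induction on the number of events processed, maintaining a joint invariant on the auxiliary data structures that, at the assignment step, collapses to the desired lockstamp. I will treat Algorithm~\ref{algo:assign-acquire-ls} in detail; Algorithm~\ref{algo:assign-release-ls} follows by a symmetric argument after observing that it is processed in reverse trace order with $\max/\mx/0/\text{increment}$ dualized to $\min/\mn/\infty/\text{decrement}$ and the roles of $\acq$ and $\rel$ swapped.

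For Algorithm~\ref{algo:assign-acquire-ls}, after processing the prefix ending at event $e_i=\ev{t,op}$, I maintain the following invariant over the data structures:
\begin{enumerate}
\item[(I1)] For every thread $t'$, $\CTC_{t'} = \acqls{\tr}_{e'}$, where $e'$ is the $\trord{\tr}$-latest event of $t'$ in the processed prefix (and $\CTC_{t'}=\lambda\lk\cdot 0$ if there is none).
\item[(I2)] For every lock $\lk$, $\CTL_\lk = \acqls{\tr}_{g}$, where $g$ is the $\trord{\tr}$-latest release of $\lk$ in the processed prefix (and $\CTL_\lk=\lambda \lk'\cdot 0$ if no release has yet been processed).
\item[(I3)] For every lock $\lk$, $\idxlk_\lk$ equals the number of acquire events on $\lk$ in the processed prefix.
\end{enumerate}
Given (I1), the assignment $\acqls{\tr}_e\gets \CTC_t$ at the end of each handler directly proves the claim. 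The base case (empty prefix) is immediate from the initialization.

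For the inductive step, I case-split on $\OpOf{e_i}$. The read and write handlers neither read nor modify anything beyond the assignment, so (I1)--(I3) are preserved trivially. For $\rel(\lk)$, $\CTC_t$ is unchanged and equals $\acqls{\tr}_{e_i}$ by the fact that release events in the same critical section inherit their acquire lockstamp from the acquire (an acquire and its matching release never have an intervening $\HB$ edge from an external thread), so (I1) holds, while (I2) is re-established by copying $\CTC_t$ into $\CTL_\lk$. The main content of the argument is the $\acq(\lk)$ case, where I must show that $\CTC_t[\lk\mapsto \idxlk_\lk]\mx \CTL_\lk$ realizes $\acqls{\tr}_{e_i}$. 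Using the characterization that $f\hb{\tr}e_i$ iff there is a sequence of thread-order and release-to-acquire edges from $f$ to $e_i$, each acquire $f$ with $f\hb{\tr}e_i$ either (a) lies in thread $t$ and is strictly before $e_i$, handled by the old $\CTC_t$; (b) equals $e_i$ itself, handled by the substitution $[\lk\mapsto \idxlk_\lk]$ after the increment and using (I3); or (c) satisfies $f\hb{\tr}g$ where $g$ is a release of $\lk$ with $g\trord{\tr}e_i$, in which case $g\hb{\tr}g^\ast$ for the $\trord{\tr}$-latest such release $g^\ast$, so $f\hb{\tr}g^\ast$ and the contribution of $f$ is subsumed by $\acqls{\tr}_{g^\ast}=\CTL_\lk$ by (I2) and the inductive hypothesis. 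Conversely, every lockstamp component of $\CTC_t[\lk\mapsto \idxlk_\lk]\mx \CTL_\lk$ corresponds to a genuine $\hb{\tr}$-predecessor acquire of $e_i$, again by induction. This establishes (I1) for $t$; (I3) follows from the increment, and (I2) is unchanged.

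The main technical obstacle is case (c) above: the transitivity argument that inheriting via a single join with $\CTL_\lk$ suffices to capture all $\HB$-ancestors that reach $e_i$ through any chain of cross-thread synchronizations culminating in $\lk$. This is handled by observing that the invariant (I2), applied at the moment the release $g^\ast$ was processed, recursively ensures that $\CTL_\lk$ already encodes all acquires $\hb{\tr}$-before $g^\ast$, so a single join suffices per acquisition and no explicit chain traversal is needed. With (I1)--(I3) established, the correctness of Algorithm~\ref{algo:assign-acquire-ls} is immediate, and the dual invariant for Algorithm~\ref{algo:assign-release-ls} yields the correctness of $\rells{\tr}_e$.
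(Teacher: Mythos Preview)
Your approach is essentially the same as the paper's: induction on the processed prefix, maintaining invariants on $\CTC_t$, $\CTL_\lk$, and $\idxlk_\lk$, with the release-lockstamp algorithm handled dually. Your invariant (I2) is in fact more precisely stated than the paper's sketch (which refers to the last \emph{acquire} rather than the last \emph{release}, though they coincide at the moments that matter by well-formedness).

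That said, two technical inaccuracies in your inductive step should be fixed. First, in the $\rel(\lk)$ case, your justification that ``an acquire and its matching release never have an intervening $\HB$ edge from an external thread'' is false in the presence of nested locks: a nested $\acq(\lk')$ between $\acq(\lk)$ and $\rel(\lk)$ can receive an $\HB$ edge from an earlier $\rel(\lk')$ in another thread, so the acquire lockstamp genuinely grows inside the critical section. What you actually need is simpler and local: a release event has no incoming $\HB$ edges other than thread order, so $\acqls{\tr}_{e_i}=\acqls{\tr}_{e'}$ where $e'$ is the immediate $\tho{\tr}$-predecessor, and (I1) then gives $\CTC_t=\acqls{\tr}_{e_i}$ directly. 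Second, in the $\acq(\lk)$ case, your case split is not exhaustive as written: case~(a) ``lies in thread $t$'' omits acquires $f$ in other threads with $f\hb{\tr}e'$ that reach $e'$ via locks other than $\lk$. The fix is to make case~(a) simply ``$f\hb{\tr}e'$'' (handled by the old $\CTC_t$ via (I1)), which together with (b) and (c) is exhaustive since the only incoming $\HB$ edges at $e_i$ are the $\tho{\tr}$ edge from $e'$ and release-to-acquire edges on $\lk$. With these two corrections your argument goes through.
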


\begin{proof}[Proof Sketch]
We focus on the correctness proof of \algoref{assign-acquire-ls};
the proof for \algoref{assign-release-ls} is similar.
The proof relies on the invariant maintained by  \algoref{assign-acquire-ls}
the variables $\CTC_t$, $\CTL_\lk$ and $\idxlk_\lk$ for each thread $t$ and lock $\lk$,
which we state next.
Let $\pi$ be the prefix of the trace processed at any point in the algorithm.
Let $C^\pi_t$, $L^\pi_\lk$ and $p^\pi_\lk$ be the values of the variables
$\CTC_t$, $\CTL_\lk$ and $\idxlk_\lk$ after processing the prefix $\pi$.
Then, the following invariants are true:
\begin{itemize}
	\item $C^\pi_t = \acqls{\pi}_{e^\pi_t} = \acqls{\tr}_{e^\pi_t}$, where $e^\pi_t$ is the last event in $\pi$ performed by thread $t$
	\item $L^\pi_\lk = \acqls{\pi}_{e^\pi_\lk} = \acqls{\tr}_{e^\pi_\lk}$, where $e^\pi_\lk$ is the last acquire event on lock $\lk$ in $\pi$.
	\item $p^\pi_\lk = pos^\pi_\lk(e^\pi_\lk)$, where $e^\pi_\lk$ is the last acquire event on lock $\lk$ in $\pi$.
\end{itemize}
These invariants can be proved using a straightforward induction, each time noting the definition of $\hb{\tr}$.
\end{proof}

\begin{lemma}
\lemlabel{time-assign-lockstamps}
For a trace with $\NumEvents$ events and $\NumLocks$ locks, \algoref{assign-acquire-ls}
and \algoref{assign-release-ls} both take $O(\NumThreads \cdot \NumLocks)$ time.
\end{lemma}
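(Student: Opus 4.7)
The plan is a direct per-event cost analysis, in the style of the proof sketch given for \lemref{lockstamp-assigment-correct}. I would first fix the lockstamp representation: each lockstamp $L\colon \locks{\tr}\to\Natsinf$ is an array indexed by $\locks{\tr}$, so that the indexed update $L[\lk\mapsto c]$ costs $O(1)$ while the bulk primitives $L_1\mx L_2$, $L_1\mn L_2$, $L_1\cle L_2$, and a full copy each cost $O(\NumLocks)$. This is the natural cost model under which every lockstamp manipulation used by the two algorithms decomposes into primitive array operations.

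Next I would walk through the four handlers of \algoref{assign-acquire-ls} (and symmetrically those of \algoref{assign-release-ls}) and verify that each performs only a constant number of these bulk primitives. The $\acq$ handler does one scalar increment of $\idxlk_\lk$, one indexed update of $\CTC_t$, one pointwise $\mx$ with $\CTL_\lk$, and one assignment $\acqls{\tr}_e \gets \CTC_t$; the $\rel$ handler does one assignment $\CTL_\lk \gets \CTC_t$ and one assignment to $\acqls{\tr}_e$; and the $\rd$ and $\wt$ handlers do only the assignment to $\acqls{\tr}_e$. Thus the work incurred per handler invocation is bounded by a constant times the cost of one bulk operation, i.e., $O(\NumLocks)$.

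Finally I would combine the per-event cost with a structural accounting of the pass to obtain the bound claimed in the statement. The algorithms traverse $\tr$ once, invoking exactly one handler per event, and the only objects mutated are the per-thread clocks $\CTC_t$ and the per-lock auxiliary clocks $\CTL_\lk$. To align the aggregation with the form $O(\NumThreads\cdot\NumLocks)$, I would implement the output assignment $\acqls{\tr}_e \gets \CTC_t$ as a reference snapshot of the current $\CTC_t$ object, with a fresh underlying array materialised only when $\CTC_t$ is genuinely mutated by an $\acq$ event of thread $\tid(e)$; this pins the dominant $O(\NumLocks)$ copy cost to the per-thread mutation points rather than to every event. The main obstacle, which I expect to be the heart of the write-up, is making this amortization precise: a naive copy-at-every-event implementation would instead give only the weaker bound scaling with $\NumEvents$, so the argument must rely on the snapshot bookkeeping together with the invariants of \lemref{lockstamp-assigment-correct} to charge each $O(\NumLocks)$ block of work to the right per-thread unit and recover the product form asserted in the statement.
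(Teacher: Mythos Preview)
Your first two paragraphs are exactly the paper's argument: a per-event cost analysis showing that each handler performs a constant number of $O(\NumLocks)$ lockstamp operations. Indeed, the paper's own proof simply multiplies this per-event cost by the number of events and concludes with the bound $O(\NumEvents\cdot\NumLocks)$.

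Your third paragraph, however, is chasing a typo. The lemma as stated says $O(\NumThreads\cdot\NumLocks)$, but the paper's proof in fact concludes $O(\NumEvents\cdot\NumLocks)$, and every downstream use of the lemma (e.g., the proof of \thmref{hb-algo}) quotes that $\NumEvents\cdot\NumLocks$ bound. The stated bound $O(\NumThreads\cdot\NumLocks)$ is not even achievable: both algorithms make a full pass over $\tr$, so they take $\Omega(\NumEvents)$ time regardless of any snapshot tricks; and the $\mx$/$\mn$ at each acquire/release already costs $\Theta(\NumLocks)$ and can occur $\Theta(\NumEvents)$ times. So your copy-on-write amortization cannot succeed, and the ``weaker bound scaling with $\NumEvents$'' that you flag as the obstacle is in fact the intended (and correct) conclusion. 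Drop the third paragraph and conclude $O(\NumEvents\cdot\NumLocks)$; then your proof matches the paper's.
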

\begin{proof}
We focus on \algoref{assign-acquire-ls};
the analysis for \algoref{assign-release-ls} is similar.
At each acquire event, the algorithm spends $O(1)$ time 
for updating $\idxlk_\lk$,
$O(\NumLocks)$ time for doing the $\mx$ operation,
and $O(\NumLocks)$ time for the copy operation (`$\acqls{\tr}_e \gets \CTC_t$').
For a release event, we spend $O(\NumLocks)$ for the two copy operations.
At read and write events, we spend $O(\NumLocks)$ for copy operations.
This gives a total time of $O(\NumEvents \cdot \NumLocks)$.
\end{proof}

\numconsecutive*
\begin{proof}
We first prove that  if there is a an $\HB$-race in $\tr$, 
then there is a pair of consecutive conflicting events that is in $\HB$-race.
Consider the first $\HB$-race, i.e., 
an $\HB$-race $(e_1,e_2)$ such that for every other $\HB$-race $(e'_1, e'_2)$,
either $e_2 \trord{\tr} e'_2$ or $e_2 = e'_2$ and $e'_1 \trord{\tr} e_1$.
We remark that such a race $(e_1, e_2)$ exists if $\tr$ has any $\HB$-race.
We now show that $(e_1, e_2)$ are a consecutive conflicting pair (on variable $x$).
Assume on the contrary that there is an event $f \in \writes{\tr}(x)$
such that $e_1 \stricttrord{\tr} f \stricttrord{\tr} e_2$.
If either $(e_1, f)$ or $(f, e_2)$ is an $\HB$-race, then this contradicts 
our assumption that $(e_1, e_2)$ is the first $\HB$-race in $\tr$.
Thus, $e_1 \hb{\tr} f$ and $f \hb{\tr} e_2$, which gives $e_1 \hb{\tr} e_2$, another contradiction.

We now turn our attention to the number of consecutive conflicting events in $\tr$.
For every read or write event $e_2$, there is at most
one write event $e_1$ such that $(e_1, e_2)$ is a consecutive conflicting pair
(namely the latest conflicting write event before $e_2$)
Further, for every read event $e_1$, there is at most one
write event $e_2$ such that $(e_1, e_2)$ is a consecutive conflicting pair
(namely the earliest conflicting write event after $e_1$).
This gives at most $2\NumEvents$ consecutive conflicting pairs of events.
\end{proof}

Let us now state the correctness of \algoref{check-hb-race}.
\begin{lemma}
\lemlabel{lockstamp-checking-correct}
For a trace $\tr$, \algoref{check-hb-race} reports a race iff $\tr$ has an $\HB$-race.
\end{lemma}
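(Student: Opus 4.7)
\medskip

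\noindent\textbf{Proof proposal.} The plan is to prove both directions of the equivalence by relating the state maintained by \algoref{check-hb-race} at the point of processing each access event to the quantities characterized by \lemref{hb-isomorphic-lockstamp}, and then using \lemref{num-consecutive} to reduce the search space for the completeness direction.

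First I would establish a loop invariant describing the contents of the per-variable state right before processing an access event $e = \ev{t, \mathsf{op}(x)}$: (i)~if $t^w_x \neq \nil$, then $\VCW_x = \rells{\tr}_{e^w}$ where $e^w$ is the last write to $x$ strictly before $e$ in $\tr$ and $t^w_x = \ThreadOf{e^w}$; (ii)~$\setrdls_x$ contains exactly the pairs $(\ThreadOf{e'}, \rells{\tr}_{e'})$ for every read event $e' \in \reads{\tr}(x)$ with $e^w \trord{\tr} e' \trord{\tr} e$ (or any read $e' \trord{\tr} e$ if no write has occurred yet). This is immediate by induction on the trace prefix using the updates in the handlers: writes reset $\setrdls_x$ and overwrite $\VCW_x$, $t^w_x$; reads append to $\setrdls_x$; and lockstamps are those computed (correctly, by \lemref{lockstamp-assigment-correct}) by \algoref{assign-acquire-ls} and its release counterpart.

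For soundness, suppose the algorithm declares a race while handling event $e$. In each of the three guarded branches, the invariant provides an earlier event $e' \in \events{\tr}$ with $e' \trord{\tr} e$, $\ThreadOf{e'} \neq \ThreadOf{e}$, the pair $(e', e)$ is conflicting (because of the write/access typing enforced by which branch fires), and the test of the guard is precisely $\acqls{\tr}_e \cle \rells{\tr}_{e'}$. By \lemref{hb-isomorphic-lockstamp} this means $e' \not\hb{\tr} e$, so $(e', e)$ is an $\HB$-race in $\tr$.

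For completeness, assume $\tr$ has an $\HB$-race. By \lemref{num-consecutive}, there is a consecutive conflicting pair $(e_1, e_2)$ that is an $\HB$-race, in particular $\ThreadOf{e_1} \neq \ThreadOf{e_2}$, and consecutiveness means no write on $x$ lies strictly between $e_1$ and $e_2$. I would split on the type of $e_1$: if $e_1$ is a write, then when $e_2$ is processed the invariant gives $\VCW_x = \rells{\tr}_{e_1}$ and $t^w_x = \ThreadOf{e_1} \neq \ThreadOf{e_2}$; if $e_1$ is a read (which forces $e_2$ to be a write), then $(\ThreadOf{e_1}, \rells{\tr}_{e_1}) \in \setrdls_x$ at that moment, since no intervening write cleared the set. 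In either case \lemref{hb-isomorphic-lockstamp} applied to $e_1 \not\hb{\tr} e_2$ yields $\acqls{\tr}_{e_2} \cle \rells{\tr}_{e_1}$, which is exactly the guard that fires, so a race is reported. The only subtle point—and the one I would be most careful about—is ensuring the algorithm reaches the event $e_2$ before exiting: since the algorithm only exits upon reporting a race, processing $e_2$ is reached unless an earlier race has already been reported, and by soundness any earlier report also witnesses an $\HB$-race, which is enough to conclude.
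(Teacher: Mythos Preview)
Your proposal is correct and follows essentially the same approach as the paper: state the loop invariant on $t^w_x$, $\VCW_x$, and $\setrdls_x$, then combine it with \lemref{hb-isomorphic-lockstamp} for the lockstamp test and \lemref{num-consecutive} for completeness. Your write-up is in fact more explicit than the paper's sketch, which only records the invariants and asserts that the rest follows; your case split on the type of $e_1$ and your handling of the early-exit issue are exactly the details one would fill in.
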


\begin{proof}[Proof Sketch.]
The proof relies on the following straightforward invariants;
we skip their proofs as they are straightforward.
In the following, $e^\pi_x$ is the last event with $\OpOf{e^\pi_x} = \wt(x)$
in a trace $\pi$.
\begin{itemize}
	\item After processing the prefix $\pi$ of $\tr$, $t^w_x = \ThreadOf{e^\pi_x}$ and $\VCW_x = e^\pi_x$.
	\item After processing the prefix $\pi$ of $\tr$, 
	the set $\setrdls_x$ is $\setpred{(t, L)}{\exists e \in \reads{\pi}(x), e^\pi_x \trord{\pi} e, \ThreadOf{e} = t, \rells{\tr}_e = L}$.
\end{itemize}
The rest of the proof follows from \lemref{time-assign-lockstamps} and \lemref{lockstamp-assigment-correct}.
\end{proof}

Let us now characterize the time complexity of \algoref{check-hb-race}.
\begin{lemma}
\lemlabel{lockstamp-checking-complexity}
On an input trace with $\NumEvents$ events and $\NumLocks$ locks,
\algoref{check-hb-race} runs in time $O(\NumEvents \cdot \NumLocks)$.
\end{lemma}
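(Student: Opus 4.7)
The plan is to prove the $O(\NumEvents \cdot \NumLocks)$ time bound on \algoref{check-hb-race} by a standard aggregate/amortized analysis, treating separately the per-event cost of the handlers and the cost of the inner loop that iterates over $\setrdls_x$ at write events. Together with the time spent precomputing the lockstamps $\acqls{\tr}_e$ and $\rells{\tr}_e$, this will yield the desired bound.

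First, I would observe that a single invocation of any of the four handlers in \algoref{check-hb-race}, excluding the inner quantifier check ``$\exists (u, L) \in \setrdls_x$'' at a write event, performs only a constant number of lockstamp-level operations: at most two $\cle$-comparisons, one copy/update of $\VCW_x$, one set insertion into $\setrdls_x$, and constant-time bookkeeping on $t^w_x$. Since each lockstamp is a vector indexed by $\locks{\tr}$, each of these operations takes $O(\NumLocks)$ time. Summed across the $\NumEvents$ events of $\tr$, this contributes $O(\NumEvents \cdot \NumLocks)$ work.

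Next, the main step is to amortize the cost of the inner loop at write handlers. At a write event on variable $x$, the loop does $|\setrdls_x| \cdot O(\NumLocks)$ work, after which $\setrdls_x$ is reset to $\emptyset$. Each read event on $x$ is inserted into $\setrdls_x$ exactly once and contributes to the iteration of at most one subsequent write event on $x$ (namely, the next one, if any; otherwise it never gets iterated over). Hence the aggregate size of all $\setrdls_x$ sets iterated throughout the algorithm is bounded by the total number of read events, and in particular by $\NumEvents$. Multiplying by the $O(\NumLocks)$ cost per lockstamp comparison inside the loop gives a total of $O(\NumEvents \cdot \NumLocks)$ work for all inner loops combined.

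Finally, I would fold in the precomputation of the per-event lockstamps $\acqls{\tr}_e$ and $\rells{\tr}_e$, which is an input to \algoref{check-hb-race}. By \lemref{time-assign-lockstamps}, \algoref{assign-acquire-ls} and \algoref{assign-release-ls} compute these in $O(\NumEvents \cdot \NumLocks)$ time in total. Summing the three contributions yields the claimed $O(\NumEvents \cdot \NumLocks)$ bound. I do not expect any real technical obstacle; the only subtle point is recognising that the inner loop, even though it can be long at a single write event, can be amortised against the reads it consumes, keeping the aggregate work linear in $\NumEvents$ rather than quadratic.
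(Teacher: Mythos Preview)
Your proposal is correct and takes essentially the same approach as the paper: the paper's proof sketch is precisely the amortization you describe, noting that each pair $(t,L)$ is inserted into some $\setrdls_x$ at most once and compared at most once, with each comparison costing $O(\NumLocks)$. Your version is more explicit (splitting off the non-loop per-event cost and folding in the lockstamp precomputation via \lemref{time-assign-lockstamps}), but the core argument is identical.
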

\begin{proof}[Proof Sketch]
Each pair $(t, L)$ of thread identifier and lockstamp is added 
atmost once in some set $\setrdls_x$ (for some $x$).
Also, each such pair is also compared against another timestamp atmost once.
Each comparison of timestamps take $O(\NumLocks)$ time.
This gives a total time of $O(\NumEvents \cdot \NumLocks)$.
\end{proof}

\hbalgo*
\begin{proof}
We focus on proving that there is an $O(\NumEvents \cdot \NumLocks)$ time algorithm,
as the standard vector-clock algorithm~\cite{djit1999} 
for checking for an $\HB$-race runs in $O(\NumEvents \cdot\NumThreads)$ time.
Our algorithm's correctness is stated in \lemref{lockstamp-checking-correct}
and its total running time is $O(\NumEvents \cdot \NumLocks)$
(\lemref{lockstamp-checking-complexity} and \lemref{time-assign-lockstamps}).
\end{proof}

\subsection{Proofs from \secref{hb-lower-bound}}
\applabel{proofs_hb_lb}

\hbovhard*
\begin{proof}
Consider a pair of events $\wt(z)$ from the $d$ threads $t(x,i),i\in[d],$ and $\rd(z)\in t_y$ for some $x,i,y$.
We have $\wt(z)\hb{\tr} \rd(z)$ iff there is some path from $\wt(z)$ to $\rd(z)$  in $\hbgraph$.
As $\wt(z)$ and $\rd(z)$ are in different threads, such a path can only be through lock events in a sequence of threads such that the first and last threads are $t(x,i)$ for some $i\in[d]$ and $t_y$, and every consecutive pair of threads in the sequence holds a common lock.  
Now all the locks in $t_y$ are $l(y,i)$ for all $i$ where $y[i]=1$.
Consider the lock corresponding to any $i\in [d]$. 
The only thread $t(x',i)$ that also holds this lock corresponds to the last $x'$ such that $x'[i]=1$. 
The only other lock held by $t(x',i)$ is $l_i.$ If $\wt(z)$ is in $t(x',i),$ we are done. 
Otherwise the only common lock between these threads $t(x',i)$ and those of $\wt(z)$ can be one of the $l_i$. 
The threads of $\wt(z)$ contain all $l_i$ where $x[i]=1.$ Hence, for there to be a common lock between these threads, there must be at least one $i$ such that $x'[i]=1$ and $x[i]=1$. 
As this thread also has the lock $l(y,i),$ $y[i]$ is also $1$. 

Thus, there is a path from $\wt(z)$ to $\rd(z)$ if and only if there is at least one $i\in[d]$ such that $x[i]=y[i]=1,$ hence $x$ and $y$ are not orthogonal. A pair of orthogonal vectors of \orthv{} thus corresponds to a write-read $\HB$-race in the reduced trace.

Finally we turn our attention to the complexity.
In time $O(n\cdot d),$ we have reduced an \orthv{} instance to determining whether there is a write-read $\HB$ race in a trace of $\NumEvents=O(nd)$ events.
If there was a sub-quadratic i.e. $O((n\cdot d)^{(2-\epsilon)})=n^{(2-\epsilon)}\cdot \poly(d)$ algorithm for detecting a write-read $\HB$ race, then this would also solve \orthv{} in $n^{(2-\epsilon)}\cdot \poly(d)$ time, refuting the \orthv{} hypothesis.
\end{proof}

\fomulticonn*
\begin{proof}
For intuition, assume the first order property is on an undirected graph with $n$ variables and $m$ edges. Let the property be specified in quantified $3$-DNF form with a constant number of predicates, i.e., $\phi=\forall x\exists y\exists z\  (\psi_1\vee \psi_2\vee\dotsc \psi_k),$ where $x,y,z$ represent nodes of the graph, and each $\psi_i$ is a conjunction of $3$ variables representing edges of the graph, for example $e(x,y)\wedge \neg e(y,z) \wedge e(x,z)$. The property is then true if and only if some predicate is satisfied, which is true if all of its variables are satisfied ($e(x,y)$ is satisfied when edge $(x,y)$ is in the graph). Denote the graph on which $\phi$ is defined by $H(I,J),$ where $I$ and $J$ are respectively the sets of nodes and edges of $H.$ 

The instance of \multiconn{} is constructed given $H$ and $\phi$ as follows. Construct a $(2k+2)$-partite graph $G(V,E)$ by first creating $2k+2$ copies of $I.$ Denote these copies by $S,Y_i,Z_i,T,\ i\in[k],$ and the copy of each node $x\in I$ in any part, say $S,$ by $x(S).$ $\psi_i=(e_1\wedge e_2 \wedge e_3)$ is encoded by connecting the sets $(S,Y_i)$ to represent $e_1$, $(Y_i,Z_i)$ for $e_2$ and $(Z_i,T)$ for $e_3$ as follows. If $e_i$ is of the form $e(x,y)$ (and not its negation), then draw a copy of $H$ between its corresponding sets, say $S$ and $Y_i$ without loss of generality. That is, for every $x,y,$ $(x,y)\in J \Leftrightarrow (x(S),y(Y_i))\in E$. If on the other hand $e_i$ is of the form $\neg e(x,y)$ then connect a copy of the complement of $H$, i.e., $(x,y)\notin J\Leftrightarrow (x(S),y(Y_i))\in E.$

Finally define $|I|$ pairs $(x(S),x(T))$ as the $(s,t)$ pairs for \multiconn{}. 

We now prove this reduction is correct. First, assume $\phi$ is true. Then for every node $x,$ there exist nodes $y,z$ such that some predicate is true. If $\psi_i$ is the predicate that is satisfied for some node $u,$ then there is a path between $u(S)$ and $u(T)$ through the parts $S,Y_i,Z_i$ and $T$ as follows. As the first variable is satisfied, then if it is $e(x,y),$ then $(x,y)\in J,$ and $x(S)$ is connected to $y(Y_i),$ and if it is $\neg e(x,y),$ then $(x,y)\notin J$ and again $x(S)$ is connected to $y(Y_i).$ Similarly, $y(Y_i)$ is connected to $z(Z_i),$ and $z(Z_i)$ to $x(T).$ These edges form a $3$ length path between $x(S)$ and $x(T)$. 

Now consider the reverse case, and assume the \multiconn{} problem is true, that is , there is a path between every $(x(S),x(T))$ pair. Note that the construction of edges in $G$ is such that any path from $x(S)$ to $x(T)$ has to be a $3$ length path, connecting the copy of $x$ in $S$ to its copy in some $Y_i,$ from this $Y_i$ to its corresponding $Z_i,$ and from $Z_i$ to $T.$ Also, this path exists only if all variables of the corresponding $\psi_i$ are true. Hence, as there is a path between every pair $(x(S),x(T)),$ and one pair is defined for every variable $x,$ some predicate is satisfied for every $x.$ Thus $\phi$ is also true.  

Finally, the time of the reduction is equal to the size of $G.$ This is $2k+2=O(1)$ graphs, each of which is either $H$ or its complement. Hence $|G|=O(m+n +(n^2-m)+n)=O(n^2).$
\end{proof}

\hbdecidemchard*
\begin{proof}
We first reduce the instance of \fofee{} to \multiconn{} as in the proof of \lemref{fo-multiconn}. Let $G(V,E)$ be the multi-partite graph for \multiconn{} and $S,T$ the first and last parts of nodes of $G$. We add a sufficient number of nodes, referred as dummy nodes, to make G sparse. Let every node $x$ of $V\backslash T$ correspond to a distinct thread $t_x$ and form one write access event to a distinct variable $v_x$ in the thread. Let each node $t$ in $T$ also correspond to a write access event of the variable corresponding to the copy of $t$ in $S,$ and be in a new thread. Define $|E|$ locks, and for every edge $(a,b)\in E,$ let the events corresponding to $v_a$ and $v_b$ hold the lock $l_{(a,b)}$ corresponding to $(a,b).$ The trace $\tr$ for first lists all threads corresponding to the dummy nodes in some fixed arbitrary order, then the threads corresponding to nodes in $S,$ followed by those in each $Y_i,$ followed by those in each $Z_i,$ in a fixed arbitrary order, and finally those in $T$. 

This reduction is seen to be correct by observing that $G$ was modified to be the transitive reduction graph of $\tr,$ and the only $\HB$-race events can be the pairs of write events corresponding to the pairs of nodes given as input to \multiconn{}. Thus, each pair of events does not form an $\HB$-race if and only if $G$ has a path between its corresponding pair of nodes. 

To analyze the time of the reduction, first we see that the size of $\tr$ is the size of $G,$ with dummy nodes added to have $n=O(n^2),$ and hence $O(n^2).$ There are $O(n^2)$ variables, locks and threads in $\tr.$ If deciding if the given trace has an $\HB$-race has an $O((n^2)^{1+\epsilon})$ time algorithm, then \fofee{} can be solved in $O(n^{2+\epsilon'})$ time, which is $O(m^{1+\epsilon'})$ time for properties on dense structures.
\end{proof}
\section{Proofs of \cref{sec:syncp}}\label{sec:proofs_syncp}

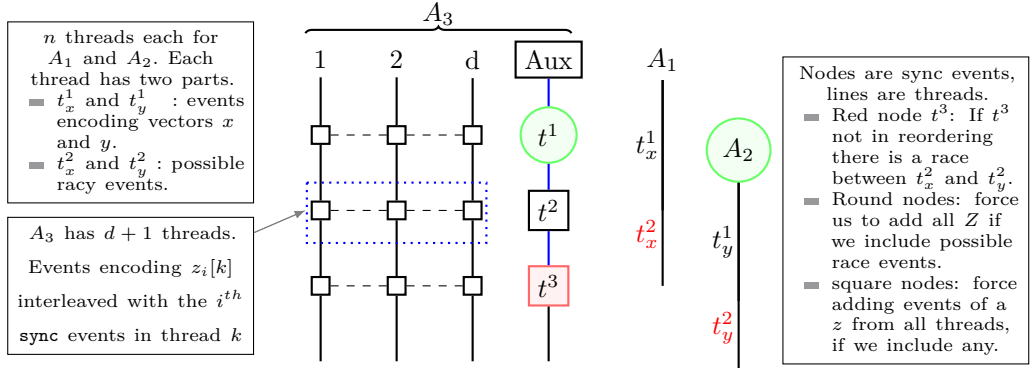
\begin{figure}
\begin{tikzpicture}[
roundnode/.style={circle, draw=green!60, fill=green!5, thick, minimum size=0.5mm},
squarednode/.style={rectangle, draw=black, thick, minimum size=2mm},
racenode/.style={rectangle, draw=red!60, fill=red!5, thick, minimum size=2mm},
textbox/.style = {rectangle, text centered, text width=3cm, draw=black},
tnode/.style = {rectangle}
]

\node [squarednode] (d1interz12) {};
\node [squarednode, right of=d1interz12] (d2interz12) {};
\node [squarednode, right of=d2interz12] (ddinterz12) {};

\node [tnode, above of=d1interz12] (init1) {1};
\node [tnode, above of=d2interz12] (init2) {2};
\node [tnode, above of=ddinterz12] (initd) {d};

\node [squarednode, below of=d1interz12] (d1interz23) {};
\node [squarednode, below of=d2interz12] (d2interz23) {};
\node [squarednode, below of=ddinterz12] (ddinterz23) {};
\node [squarednode, below of=d1interz23] (d1interzlast) {};
\node [squarednode, below of=d2interz23] (d2interzlast) {};
\node [squarednode, below of=ddinterz23] (ddinterzlast) {};

\draw [black, thick] (init1) -- (d1interz12);
\draw [black, thick] (d1interz12) -- (d1interz23);
\draw [black, thick] (d1interz23) -- (d1interzlast);
\draw [black, thick] (d1interzlast) -- +(0,-1);

\draw [black, thick] (init2) -- (d2interz12);
\draw [black, thick] (d2interz12) -- (d2interz23);
\draw [black, thick] (d2interz23) -- (d2interzlast);
\draw [black, thick] (d2interzlast) -- +(0,-1);

\draw [black, thick] (initd) -- (ddinterz12);
\draw [black, thick] (ddinterz12) -- (ddinterz23);
\draw [black, thick] (ddinterz23) -- (ddinterzlast);
\draw [black, thick] (ddinterzlast) -- +(0,-1);


\draw[blue,thick,dotted] ($(d1interz12.south west)+(-0.05,-0.5)$)  rectangle ($(ddinterz23.south east)+(0.05,-0.3)$);

\node [squarednode, right of=initd] (aux) {Aux};
\node [roundnode, below of=aux] (syncs) {$t^1$};
\node [squarednode, below of=syncs] (syncX) {$t^2$};
\node [racenode, below of=syncX] (syncrace) {$t^3$};
\draw[blue, thick] (aux) -- (syncs);
\draw[blue, thick] (syncs) -- (syncX);
\draw[blue, thick] (syncX) -- (syncrace);
\draw [black, thick] (syncrace) -- +(0,-1);

\node [textbox, below of=d1interz12,xshift=-2.5cm, yshift=-1cm] (zevent) {\scriptsize $A_3$ has $d+1$ threads. Events encoding $z_i[k]$ interleaved with the $i^{th}$ $\sync$ events in thread $k$};
\draw[arrow2] (zevent) -- ($(d1interz23)+(-0.2,0)$);

\draw [dashed] (d1interz12) -- (d2interz12);
\draw [dashed] (d2interz12) -- (ddinterz12);
\draw [dashed] (d1interz23) -- (d2interz23);
\draw [dashed] (d2interz23) -- (ddinterz23);
\draw [dashed] (d1interzlast) -- (d2interzlast);
\draw [dashed] (d2interzlast) -- (ddinterzlast);

\draw[thick, decoration={brace}, decorate] ($(init1)+(-0.2,0.35)$) -- ($(aux)+(0.3,0.35)$) node [midway, yshift=0.25cm] {$A_3$};

\node [textbox, above of=zevent, yshift=1.3cm] (XY) {\scriptsize $n$ threads each for $A_1$ and $A_2.$ Each thread has two parts.
\begin{itemize}
    \item $t_x^1$ and $t_y^1:$ events encoding vectors $x$ and $y.$ 
    \item $t_x^2$ and $t_y^2:$ possible racy events.
\end{itemize}};

\draw[arrow2] (zevent) -- ($(d1interz23)+(-0.2,0)$);

\node [tnode, right of=aux, xshift=0.5cm] (x1) {$A_1$};
\draw [black, thick] (x1) -- +(0,-2) node [midway, xshift=-0.2cm] {$t_x^1$};
\draw [black, thick] (x1) -- +(0,-3) node [near end, xshift=-0.2cm] {$\color{red} t_x^2$};

\node [roundnode, right of=x1, yshift=-1.2cm] (y1) {$A_2$};
\draw [black, thick] (y1) -- +(0,-2) node [midway, xshift=-0.2cm] {$t_y^1$};
\draw [black, thick] (y1) -- +(0,-3) node [near end, xshift=-0.2cm] {$\color{red} t_y^2$};

\node [textbox, right of=x1, xshift=2.2cm, yshift=-2cm] (labels) {\scriptsize Nodes are sync events, lines are threads. \begin{itemize}
    \item Red node $t^3$: If $t^3$ not in reordering there is a race between $t_x^2$ and $t_y^2$. 
\item Round nodes: force us to add all $Z$ if we include  possible race events.
\item square nodes: force adding events of a $z$ from all threads, if we include any.
\end{itemize}
};

\end{tikzpicture}
\caption{Intuition for the reduction from \orthvk{3} to $\sync$-preserving race detection. Thread \textit{Aux} allows forming a trace such that \orthvk{3} has a solution iff there is a $\sync$-preserving race.}
\figlabel{intuition-ov3-syncp}
\end{figure}

\syncpovthreehard*
\begin{proof}
Consider any sync-preserving correct reordering $\rho$ of $\tr$ that exposes a data race $(\wt(z), \rd(z))$ on the local traces $t_x$ and $t_y$.
The following statements are straightforward to verify based on the definition of sync-preserving correct reorderings.
\begin{compactenum}
\item\label{item:obs1}  For every $k\in[d]$, the first $\sync(\ell_k)$ event the trace $t_k$ is also in $\rho$.

\item\label{item:obs2} 
The auxiliary trace $t$ cannot have an open critical section in $\rho$.
This implies that for every trace $t_k$ with $k\in[d]$, the last event of $t_k$ in $\rho$ cannot be its $i^{th}$ $\sync(\ell_k)$ event, where $i$ is even.
Moreover, the number of $\sync(\ell_k)$ events in $\rho$ is the same for every trace $t_k$ with $k\in[d]$.
\end{compactenum}

First, consider that the \orthvk{3} instance has a solution, i.e., there exist $x\in A_1$, $y\in A_2$ and $z\in A_3$ such that $x,y,z$ are orthogonal, and we argue that $\tr$ has a data race that is also sync-preserving.
We construct a  sync-preserving correct reordering $\rho$ of $\tr$ that exposes the data race.
We only specify the local traces that exist in $\rho$, as their interleaving that constructs $\rho$ will be identical to the one in $\tr$ (in other words, we only specify the prefix up to which every local trace of $\tr$ is executed in $\rho$).
We execute the traces $t_x$ and $t_y$ all the way before the corresponding $\wt(z)$ and $\rd(z)$ events (hence we are exposing a race between these two events).
For every $k\in [d]$ if $z[k]=0$ or $x[k]=0$, we execute $t_k$ up to the $(2\cdot i)^{th}$ $\sync(\ell_k)$ event, where $i$ is such that $z$ is the $i^{th}$ vector of $A_3$.
On the other hand, if $y[k]=0$, we execute $t_k$ up to the first $\rel(\ell_k)$ event that appears after the $(2\cdot i)^{th}$ $\sync(\ell_k)$ event in $t^k$.
Finally, we execute $t^k$ until its $(i-1)^{th}$ $\rel(X)$ event.

It is easy to verify that $\rho$ is a valid correct reordering. 
Indeed, we have two open critical sections in the threads $t_x$ and $t_y$, on the locks $X$ and $Y$ respectively.
Moreover, for every $k\in[d]$, we have the following.
\begin{compactenum}
\item If $z[k]=0$, there are no other open critical sections.
\item If $z[k]=1$ and $x[k]=0$, there is one open critical section in the thread $z[k]$ on lock $l_k$.
\item If $z[k]=x[k]=1$ and $y[k]=0$, there is one open critical section in the thread $z[k]$ on lock $l'_k$.
\end{compactenum}

We now consider the opposite direction, i.e., assume that there is a sync-preserving race in $\tr$, and we argue that there exist $x\in A_1$, $y\in A_2$ and $z\in A_3$ such that $x,y,z$ are orthogonal.
Consider any sync-preserving correct reordering $\rho$ that exposes a race on the access events of two local traces $t_x$ and $t_y$.
Because of \cref{item:obs1} above, every trace $t_k$ is at least partially present in $\rho$.
Because of \cref{item:obs1} above, every such trace executes the same number of $\sync(\ell_k)$ events in $\rho$,
and this number is odd.
We argue that the triplet $x,y,z$ is orthogonal, where $z$ is the $i^{th}$ vector of $A_3$ such that each $t_k$ executes $2\cdot (i-1)+1$ $\sync(\ell_k)$ events in $\rho$.
Indeed, consider any $k\in[d]$ and assume that $x[k]=y[k]=1$.
If $z[k]=1$, then we have a $\acq(\ell)$ event in $t_k$ that immediately precedes its last $\sync(\ell_k)$ event.
Since $x[k]=1$, the trace $t_x$ also has an $\acq(l_k)$ event.
Since the $\acq(l_k)$ event of $t_x$ is after the $\acq(l_k)$ event of $t_k$ in $\tr$,
the matching $\rel(l_k)$ of $t_k$ must also be in $\rho$.
This implies that the $\acq(l'_k)$ event of $t_k$ that immediately succeeds its last $\rel(l_k)$ event is also in $\rho$.
Since $y[k]=1$, the trace $t_y$ also has an $\acq(l'_k)$ event.
Since the $\acq(l'_k)$ event of $t_y$ is after the $\acq(l'_k)$ event of $t_k$ in $\tr$,
the matching $\rel(l'_k)$ of $t_k$ must also be in $\rho$.
However, we now have another $\sync(\ell_k)$ event of $t_k$ in $\rho$, in particular, the $\sync(\ell_k)$ event that immediately precedes its last $\rel(l'_k)$ event.
But this results in an even number of $\sync(\ell_k)$ events of $t_x$ being present in $\rho$, which contradicts our observation in \cref{item:obs2}.
Thus, if $x[k]=y[k]=1$, we necessarily have that $z[k]=0$, and the triplet $x,y,z$ is orthogonal.

The desired result follows.
\end{proof}

\section{Proofs of \cref{sec:locking_discipline}}\label{sec:proofs_locking_discipline}

\lockcoverquadraticlowerbound*
\begin{proof}[Proof of \thmref{lock-cover-quadratic-lower-bound}]
To see why the reduction is correct, observe that if there a solution to \orthv{}, that is, a pair of vectors $x,y$ such that for all $k\in [d],$ $x[k]=0$ or $y[k]=0,$ implies that the corresponding events in $\tr,$ say $e_x$ and $e_y,$ have for each lock $k\in [d]$ either $e_x$ does not hold the lock or $e_y$ does not. As they have distinct thread ids too, $e_x$ and $e_y$ form two conflicting events with $\lheld{\tr}(e_1)\cap \lheld{\tr}(e_2)=\emptyset.$ Similarly, a lock-cover race implies that for every lock, one of the events in race do not hold the lock, hence have their corresponding coordinate in \orthv{} $0.$ The events are thus orthogonal to each other. 

Regarding the complexity, we have used $O(n\cdot d)$ time to construct a trace $\sigma$ with $\NumEvents=O(n\cdot d)$ events. 
If we can detect a lock-cover race in $\tr$ in $O(\NumEvents^{(2-\epsilon)})$ time, then \orthv{} can be solved in $O(n^{(2-\epsilon)}\cdot \poly(d))$ time, contradicting the \orthv{} hypothesis.
\end{proof}

\locksetsinglevariablelinear*
\begin{proof}
We first argue that the algorithm maintains the invariant stated in \cref{eq:lockset_invariant}.
The invariant for $A$ is trivial to verify.
Moreover, it is easy to see that, assuming that the invariant holds before processing an $\acq(\ell)$ or $\rel(\ell)$ event,
it also holds after processing that event.
Indeed, for an event $\acq(\ell)$, we have $\ell\in A$, and to maintain $C=\ov{A}\cap B$, we remove $\ell$ from $C$ if $\ell\in B$.
Similarly for an event $\rel(\ell)$.
To see that the invariant is maintained after processing an access event $\wt(x)/\rd(x)$, note that we have
\[
B\setminus C = B\cap \ov{C} = B\cap \left(\ov{B\cap \ov{A}}\right) = B\cap \left(\ov{B}\cup A\right) = B\cap A
\]
and thus updating $B\gets B\setminus C$ yields
\[
\locks{\tr}\cap \bigcap_{\substack{e' \in \accesses{\tr}(x) \\ e'\stricttrord{\tr} e}} \lheld{\tr}(e') \cap \lheld{tr}(e) = 
\locks{\tr}\cap \bigcap_{\substack{e' \in \accesses{\tr}(x) \\ e'\trord{\tr} e}} \lheld{\tr}(e')
\]
Finally, at this point we have $\ov{A}\cap B = \ov{A} \cap B\cap A=\emptyset$, thus the invariant also holds for $C$.

We now turn our attention to complexity.
Using a bit-set representation of the sets $A$, $B$ and $C$, it is clear that each of the operations except $\wt(x)/\rd(x)$ take constant time per event.
Each $\wt(x)/\rd(x)$ operation takes $O(|C|)$ time.
Note, however, that because of the previous invariant, every lock is removed from $B$ at most once, hence the total time for performing all set differences $B\gets B\setminus C$ is $O(\NumEvents + \NumLocks) = O(\NumEvents)$.
Thus the total time is $O(\NumEvents)$.
The desired result follows.
\end{proof}

\locksetquadraticlowerbound*
\begin{proof}
First, assume there is a solution to \hs{}, i.e., $\exists x_k\in X\ \forall y_i\in Y\ \exists j\in [d]\ x_k[j]=y_i[j]=1$. 
Then for the variable $z_k$, for every lock $\ell_i$, there is a thread $t_j$ that contains $\wt(z_k)$ (as $x_k[j]=1$) but does not contain lock $\ell_i$ (as $y_i[j]=1$).
Thus $\bigcap_{e \in \accesses{\tr}(z_k)} \lheld{\tr}(e) = \emptyset$, and we have a lock-set race on variable $z_k$
as there are at least two $\wt(z_k)$ conflicting events, one in the thread $t_j$ and the other in thread $t_0$.
For the opposite direction, assume that \hs{} does not have a solution, i.e., $\forall x_k\in X\ \exists y_i\in Y\ \forall j\in [d]\ (x_k[j]=0 \text{ or }y_i[j]=0)$. 
Then for each variable $z_k$, there is some lock $\ell_i$ such that every thread that contains a write event $\wt(z_k)$ (thus $x_k[j]=1$) also contains the lock $\ell_i$ (as necessarily $y_i[j]=0$). 
Hence, for every variable $z_k,$ some lock $\ell_i$ is held by all its access events. 
Thus $\tr$ does not have a lock-set race. 

Regarding the complexity, we have created a trace $\tr$ with $\NumEvents = O(n\cdot d)$ events in $O(n\cdot d)$ time.
Thus, any $O(\NumEvents^{(2-\epsilon)})$ time algorithm for \hs{} implies an $O(n^{(2-\epsilon)}\cdot \poly(d))$ time algorithm for \hs{}, contradicting the \hs{} hypothesis.
\end{proof}

\end{document}